\ifdefined\PPMReview
\documentclass[acmsmall,screen,review,anonymous]{acmart}
\else
\documentclass[acmsmall,screen,nonacm]{acmart}
\fi
\setcopyright{none}
\ifdefined\PPMReview
\renewcommand\footnotetextcopyrightpermission[1]{}
\makeatletter
\AtBeginDocument{%
  \apptocmd{\ps@standardpagestyle}{\fancyfoot{}}{}{\ClassError{ppm}{Cannot clear review footer}{}}
  \apptocmd{\ps@firstpagestyle}{\fancyfoot{}}{}{\ClassError{ppm}{Cannot clear first-page footer}{}}
  \pagestyle{standardpagestyle}}
\makeatother
\fi
\setcitestyle{citesep={,}}
\makeatletter
\patchcmd{\ccsdesc@parse}{\else; \fi}{\else. \fi}{}{\ClassError{ppm}{Cannot update CCS punctuation}{}}
\patchcmd{\@mkauthorsaddresses}{\unskip; }{\unskip. }{}{\ClassError{ppm}{Cannot update author-address punctuation}{}}
\makeatother
\hypersetup{hypertexnames=false}

\usepackage{amsmath,amsthm}
\usepackage{booktabs,array}
\usepackage{mathpartir}
\usepackage{algorithm}
\usepackage{algpseudocode}
\usepackage{tikz}
\usetikzlibrary{arrows.meta}
\definecolor{ppmx}{RGB}{239,181,107}
\definecolor{ppmy}{RGB}{181,157,216}
\definecolor{ppmz}{RGB}{114,177,232}
\definecolor{ppmi}{RGB}{238,238,238}
\definecolor{ppmclassical}{RGB}{215,241,191}
\tikzset{
 pauli/X/.style={fill=ppmx}, pauli/Y/.style={fill=ppmy},
 pauli/Z/.style={fill=ppmz}, pauli/I/.style={fill=ppmi},
 pauli/P/.style={fill=white},
 pics/ppmone/.style args={#1/#2}{code={
  \draw[pauli/#1] (-.25,.25) rectangle (.25,-.25);
  \node at (0,0) {$#1$};
  \draw[fill=white] (.25,.14) arc[start angle=90,end angle=-90,radius=.14] -- cycle;
  \draw (.27,-.035) -- (.36,.08);
  \node[font=\scriptsize] at (0,-.5) {$#2$};
 }},
 pics/ppmtwo/.style args={#1/#2/#3}{code={
  \fill[pauli/#1] (-.24,.25) rectangle (.24,-.375);
  \fill[pauli/#2] (-.24,-.375) rectangle (.24,-1.0);
  \draw (-.24,.25) rectangle (.24,-1.0);
  \node at (0,0) {$#1$}; \node at (0,-.75) {$#2$};
  \draw[fill=white] (.24,-.25) arc[start angle=90,end angle=-90,radius=.125] -- cycle;
  \draw (.26,-.4) -- (.34,-.31);
  \node[font=\scriptsize] at (0,-1.2) {$#3$};
 }}
}

\newcommand{\pwire}[2][P]{\mathord{\begin{tikzpicture}[baseline=(c.base)]
 \node[draw,pauli/#1,inner xsep=4pt,inner ysep=3pt] (c) {$#2$};
 \draw (c.west) -- ++(-.23,0) (c.east) -- ++(.23,0);
\end{tikzpicture}}}

\newcommand{\resetfeedback}{\mathord{\begin{tikzpicture}[
 baseline=(current bounding box.center),font=\small,>=Latex]
 \node[draw,fill=ppmclassical,inner sep=3pt] (r) at (0,0) {$r$};
 \node[draw,pauli/X,inner sep=3pt] (x) at (.95,.35) {$X$};
 \draw (-.45,0) -- (r.west);
 \draw[->] (r.east) -- (.4,0) |- (x.west);
 \draw (r.east) -- (.4,0) -- (.4,-.35) -- (1.5,-.35) -- (1.5,0);
 \draw (x.east) -| (1.5,0) -- (1.75,0);
 \node[font=\scriptsize] at (.47,.54) {$1$};
 \node[font=\scriptsize] at (.45,-.55) {$0$};
 \node[font=\scriptsize,fill=white,inner sep=1pt] at (.95,-.35) {$\cmd{skip}$};
\end{tikzpicture}}}
\newcommand{\pmeter}[3][P]{\mathord{\begin{tikzpicture}[baseline=(c.base)]
 \node[draw,pauli/#1,inner xsep=5pt,inner ysep=3pt] (c) {$#2$};
 \draw (c.west) -- ++(-.23,0) (c.east) -- ++(.33,0);
 \draw[fill=white] ([yshift=3pt]c.east) arc[start angle=90,end angle=-90,radius=3pt] -- cycle;
 \draw ([xshift=1pt,yshift=-1pt]c.east) -- ++(2pt,3pt);
 \node[anchor=north,font=\scriptsize,inner sep=1pt] at (c.south) {$#3$};
\end{tikzpicture}}}
\newcommand{\qassert}[2]{\mathord{\begin{tikzpicture}[baseline=(a.base)]
 \node[draw,pauli/#1,rounded corners=2pt,inner sep=2pt] (a) {$#2$};
\end{tikzpicture}}}
\newcommand{\htriple}[3]{\{#1\}\,\pwire{#2}\,\{#3\}}
\newcommand{\dtriple}[3]{\{#1\}\,#2\,\{#3\}}
\newcommand{\pchoice}[3]{\mathord{\begin{tikzpicture}[baseline=(current bounding box.center),font=\small,>=Latex]
 \node[draw,fill=ppmclassical,inner sep=3pt] (e) at (0,0) {$#1$};
 \node[draw,fill=white,inner sep=3pt] (t) at (1,.42) {$#2$};
 \node[draw,fill=white,inner sep=3pt] (f) at (1,-.42) {$#3$};
 \draw (-.55,0) -- (e.west);
 \draw[->] (e.east) -- (.45,0) |- (t.west);
 \draw[->] (e.east) -- (.45,0) |- (f.west);
 \node[font=\scriptsize,inner sep=1pt] at (.57,.62) {$1$};
 \node[font=\scriptsize,inner sep=1pt] at (.57,-.62) {$0$};
 \draw (t.east) -| (1.7,0) (f.east) -| (1.7,0);
 \draw (1.7,0) -- (2,0);
\end{tikzpicture}}}
\newenvironment{ppmcompiledrow}[4][.65]{%
 \def\ppmcompiledn{#2}%
 \pgfmathtruncatemacro{\ppmcompiledlast}{#2-1}%
 \pgfmathsetmacro{\ppmcompiledbottom}{-.42*#2}%
 \pgfmathsetmacro{\ppmcompiledfeedback}{\ppmcompiledbottom-.48}%
 \pgfmathsetmacro{\ppmcompiledassert}{\ppmcompiledbottom-#1}%
 \begin{tikzpicture}[x=1cm,y=1cm,font=\scriptsize,>=Latex,
   baseline=(current bounding box.center)]
  \foreach \j in {0,...,\ppmcompiledlast}{
   \pgfmathsetmacro{\ppmwirey}{-.42*(\j+1)}
   \draw (0,\ppmwirey) -- (12,\ppmwirey);
   \node[anchor=east,inner sep=2pt] at (-.08,\ppmwirey) {$q_\j$};
  }
  \node[anchor=west,inner sep=1pt] at (0,\ppmcompiledassert) {$#3$};
  \node[anchor=east,inner sep=1pt] at (12,\ppmcompiledassert) {$#4$};
}{%
 \end{tikzpicture}%
}

\newcommand{\ppmscope}[4]{%
 \begin{scope}[xshift=#1cm]
  \pgfmathsetmacro{\ppmcolumnbottom}{\ppmcompiledbottom-.21}
  \pgfmathsetmacro{\ppmmetery}{-.21*\ppmcompiledn}
  \node[anchor=east,inner sep=1pt] at (.87,0) {$|0\rangle_{\nu_{#2}}$};
  \node[inner sep=1pt] at (2,.43) {scope $#2$};
  \draw (.89,0) -- (3.78,0);
  \fill[pauli/X] (1.02,.2) rectangle (1.48,-.21);
  \node at (1.25,0) {$#3X$};
  \foreach \ppmfactor [count=\ppmindex from 1] in {#4}{
   \pgfmathsetmacro{\ppmwirey}{-.42*\ppmindex}
   \fill[pauli/\ppmfactor] (1.02,\ppmwirey+.21) rectangle (1.48,\ppmwirey-.21);
   \node at (1.25,\ppmwirey) {$\ppmfactor$};
  }
  \draw (1.02,.2) rectangle (1.48,\ppmcolumnbottom);
  \draw[fill=white] (1.48,\ppmmetery+.11)
   arc[start angle=90,end angle=-90,radius=.11cm] -- cycle;
  \draw (1.51,\ppmmetery-.03) -- (1.59,\ppmmetery+.06);
  \node[anchor=east,inner sep=1pt] at (1.16,\ppmcolumnbottom-.12) {$a$};
  \draw[pauli/Y] (1.93,.17) rectangle (2.31,-.17);
  \node at (2.12,0) {$Y$};
  \draw[fill=white] (2.31,.11)
   arc[start angle=90,end angle=-90,radius=.11cm] -- cycle;
  \draw (2.34,-.03) -- (2.42,.06);
  \node[anchor=south,inner sep=1pt] at (2.75,-.29) {$b$};
  \node[fill=white,inner sep=1pt] (ppmxor#2) at (3.12,\ppmcompiledfeedback) {$\oplus$};
  \draw[double,double distance=.45pt]
   (1.25,\ppmcolumnbottom) |- (ppmxor#2.west);
  \draw[double,double distance=.45pt]
   (2.12,-.17) -- (2.12,-.29) -- (3.64,-.29)
   |- (ppmxor#2.east);
  \def\ppmidentity{I}
  \foreach \ppmfactor [count=\ppmindex from 1] in {#4}{
   \edef\ppmcurrentfactor{\ppmfactor}
   \ifx\ppmcurrentfactor\ppmidentity
   \else
    \pgfmathsetmacro{\ppmwirey}{-.42*\ppmindex}
    \draw[double,double distance=.45pt,->]
     (ppmxor#2.north) -- (3.12,\ppmwirey-.16);
   \fi
  }
  \foreach \ppmfactor [count=\ppmindex from 1] in {#4}{
   \edef\ppmcurrentfactor{\ppmfactor}
   \ifx\ppmcurrentfactor\ppmidentity
   \else
    \pgfmathsetmacro{\ppmwirey}{-.42*\ppmindex}
    \node[draw,pauli/\ppmfactor,minimum width=.36cm,minimum height=.32cm,inner sep=1pt]
     at (3.12,\ppmwirey) {$\ppmfactor$};
   \fi
  }
  \draw (3.7,-.13) -- (3.7,.13) (3.78,-.13) -- (3.78,.13);
 \end{scope}%
}

\newcommand{\ppmreadout}[4]{%
 \pgfmathsetmacro{\ppmwirey}{-.42*(#2+1)}%
 \begin{scope}[shift={(#1,\ppmwirey)}]
  \draw[pauli/#3] (-.2,.17) rectangle (.2,-.17);
  \node at (0,0) {$#3$};
  \draw[fill=white] (.2,.11)
   arc[start angle=90,end angle=-90,radius=.11cm] -- cycle;
  \draw (.23,-.03) -- (.31,.06);
  \node[anchor=west,inner sep=1pt] at (.25,-.23) {$#4$};
 \end{scope}%
}

\newcommand{\ppmcorrection}[2]{%
 \def\ppmidentity{I}%
 \foreach \ppmfactor [count=\ppmindex from 1] in {#2}{
  \edef\ppmcurrentfactor{\ppmfactor}
  \ifx\ppmcurrentfactor\ppmidentity
  \else
   \pgfmathsetmacro{\ppmwirey}{-.42*\ppmindex}
   \node[draw,pauli/\ppmfactor,minimum width=.36cm,minimum height=.32cm,inner sep=1pt]
    at (#1,\ppmwirey) {$\ppmfactor$};
  \fi
 }%
}

\newtheorem{theorem}{Theorem}
\newcommand{\cmd}[1]{\mathsf{#1}}
\newcommand{\seq}{\mathbin{\triangleright}}
\newcommand{\WP}{\operatorname{wp}}
\newcommand{\tr}{\operatorname{tr}}
\newcommand{\Inst}{\mathcal{I}}
\newcommand{\triple}[3]{\{#1\}\,#2\,\{#3\}}
\makeatletter
\newcommand{\rulename}[2]{\begingroup
 \def\@currentlabel{\textsc{#1}}\phantomsection\label{#2}\textsc{#1}\endgroup}
\makeatother

\begin{document}
\title{Verified Pauli-Product Measurement Compilation for Fault-Tolerant Quantum Computation}
\ifdefined\PPMReview
\author{Anonymous Author(s)}
\renewcommand{\shortauthors}{Anonymous Authors}
\else
\author{John Zhuoyang Ye}
\orcid{0009-0007-7755-3112}
\affiliation{\institution{University of California Los Angeles}\city{Los Angeles}\country{USA}}
\email{yezhuoyang@cs.ucla.edu}
\author{Jens Palsberg}
\orcid{0009-0002-1149-0706}
\affiliation{\institution{University of California Los Angeles}\city{Los Angeles}\country{USA}}
\email{palsberg@cs.ucla.edu}
\renewcommand{\shortauthors}{John Zhuoyang Ye and Jens Palsberg}
\fi

\begin{abstract}
Fault-tolerant quantum compilers implement gates through Pauli-product measurements and
classical feedback. Verifying such a compiler requires reasoning about the quantum state
and the recorded outcomes together: a correction reads an outcome, a gate's internal
outcomes must be hidden from its caller, and an application's outcomes must survive
optimization. We introduce QPPM, a language for Pauli-product measurement programs, with a
weakest-precondition semantics over store-indexed matrix assertions, following quantum
Hoare logic with classical variables. Every program terminates, so the Hoare logic is a
syntactic presentation of weakest preconditions whose soundness and relative completeness
are immediate. Its value is that component specifications compose. All results are
mechanized in Lean.

We verify lowering of Clifford+$T$ circuits to QPPM with private outcomes and
supplied magic states, and an optimizer that eliminates measurements, fuses and reorders
rotations, and schedules logical rounds while preserving every outcome that subsequent
code can read. A compositional approximation relation bounds the effect of synthesis error
on specifications and on output trace distance, and a certificate-checked pipeline
selects, among synthesized circuits that meet the user's error limit, the one with the
lowest certified depth bound. On QASMBench, verified fusion and scheduling reduce logical
measurement depth by a median of 17.9\%, and complete source-to-QPPM allowances are
certified for all 35 valid selected small-suite inputs.
\end{abstract}

\ccsdesc[500]{Theory of computation~Program verification}
\ccsdesc[300]{Software and its engineering~Compilers}
\keywords{Quantum Hoare logic, Pauli-product measurement, verified compilation, Lean}
\maketitle
\ifdefined\PPMReview\else
\hypersetup{pdfauthor={John Zhuoyang Ye and Jens Palsberg}}
\fi

% !TEX root = main.tex
% Mathematical presentation shared by the author and review manuscripts.

\section{Introduction}
\label{sec:introduction}

Fault-tolerant quantum computation encodes logical qubits in quantum error-correcting codes.
A compiler must translate an algorithm's gates into operations on these encoded qubits.
For surface codes, lattice surgery implements logical gates through joint Pauli
measurements~\cite{horsman2012surgery,litinski2019game}. This measurement interface also
extends to quantum low-density parity-check (qLDPC) codes.

Code-surgery constructions for qLDPC codes support logical Pauli-product measurements
(PPMs)~\cite{cohen2022low,williamson2026gauging}, including joint measurements across
code blocks~\cite{cross2024surgery} and parallel measurements on logically disjoint
sets of qubits~\cite{cowtan2026parallel}. Despite different connectivity requirements and
physical costs, these constructions share a logical measurement interface.

This interface already connects algorithms to proposed architectures.
\citet{cain2026shor} estimate the resources for Shor's algorithm, including RSA--2048
factorization, on neutral-atom architectures using lifted-product codes.
In their compilation strategy, PPMs serve as an intermediate representation: Clifford
gates are absorbed into the measured Pauli operators, and code surgery implements the
measurements. Their time-efficient estimates assume parallel code surgery for layers
of logically disjoint PPMs. This use of PPMs motivates the interface in
Figure~\ref{fig:compilation-stages}.

\begin{figure}[htbp]
\centering
\begin{tikzpicture}[font=\small,>=Latex,
 stage/.style={draw,rounded corners=2pt,minimum height=.85cm,text width=2.5cm,align=center}]
 \node[stage] (source) at (0,0) {Source circuit\\rotations and readouts};
 \node[stage] (basis) at (4.1,0) {Clifford+$T$ circuit};
 \node[stage] (ppm) at (8.5,0) {QPPM program};
 \draw[->] (source) -- node[above=13pt,font=\scriptsize] {gate synthesis}
   node[below=7pt,font=\scriptsize] {allowance $\varepsilon$} (basis);
 \draw[->] (basis) -- node[above=13pt,font=\scriptsize] {exact lowering} (ppm);
 \draw[->] (ppm.north east) -- ++(0,.65) -| (ppm.north west);
 \node[font=\scriptsize] at (8.5,1.28) {exact optimization};
 \node[draw,fill=ppmclassical,align=center,text width=2.5cm] (magic) at (4.1,-1.55)
   {Magic states\\ideal inputs};
 \draw[->] (magic.east) -| (6.3,0);
 \node[stage,dashed,fill=black!4] (backend) at (8.5,-1.65) {Code surgery\\surface or qLDPC};
 \draw[->,dashed] (ppm.south) -- node[right,font=\scriptsize] {physical mapping} (backend.north);
 \node[anchor=west,align=left,font=\scriptsize,text width=3.6cm] at (-1.25,-1.55)
   {The source bound transfers through\\lowering and exact optimization.};
\end{tikzpicture}
\caption{Compilation stages. Synthesis error is certified per input, lowering to QPPM is
verified under explicit magic-state assumptions, and exact transformations preserve the
public specifications. Algorithm~\ref{alg:compile} specifies the solid arrows. The dashed
stage is code-specific physical mapping.}
\Description{Source circuits pass through synthesis to Clifford plus T and exact lowering to
QPPM. Ideal magic states are inputs to lowering. QPPM has an exact optimization loop.
A dashed arrow leads to surface-code or qLDPC code surgery, whose proof obligations are separate.
Algorithm 1 specifies the stages through optimized QPPM.}
\label{fig:compilation-stages}
\end{figure}

Existing lattice-surgery compilers translate circuits into measurement
instructions~\cite{watkins2024compiler,leblond2024realistic}. They do not provide a
machine-checked proof that this translation preserves an algorithm's behavior, leaving a
gap between algorithm verification and error-correcting implementations.
Such a proof must cover both quantum outputs and classical outcomes: a correction reads a
measurement result, a gate's private outcomes must be hidden from its caller, and public
outcomes must survive optimization.

We formalize this measurement interface in QPPM, a language for Pauli-product measurements
and classical feedback. Its Hoare logic composes specifications according to program structure: a verified
phase-gate specification can be reused without expanding the gate's internal measurements.

Quantum Hoare logic interprets assertions as matrices whose expectations describe quantum
states~\cite{ying2009hoare}. Following \citet{feng2021classical}, these matrices depend on the
classical store. An assertion can relate a measurement's output state to its recorded outcome.
Sequencing (\ref{rule:seq}) connects that specification to a later correction. We prove soundness
and relative completeness for QPPM's weakest-precondition semantics. A forward semantics gives
the output states and their probabilities.

Compiled-gate specifications hide private outcomes by summing over their possible values.
Optimizations preserve the outcomes available to a continuation. We verify measurement
elimination and rotation scheduling under these respective conditions.

Approximate synthesis changes a program's behavior. We bound the change in the expectation
of every effect, a matrix predicate whose expectation lies between zero and one.
These bounds compose: replacing a command with allowance $\varepsilon$ lowers its Hoare
guarantee by at most $\varepsilon$. Our compiler checks a finite collection of Clifford+$T$
approximations and, among those meeting the user's error limit, selects the implementation
with the smallest certified measurement-depth bound (Section~\ref{sec:budget-method}).

Our contributions are:
\begin{enumerate}
 \item We define QPPM and prove a sound and relatively complete Hoare logic for its
   measurements, classical control, and scoped ancillas (Sections~\ref{sec:core}--\ref{sec:hoare}).
 \item We verify compilation into QPPM and optimizations that preserve public quantum
   outputs and measurement results. Worked derivations show how their specifications
   compose (Sections~\ref{sec:compilation}--\ref{sec:optimization}).
 \item We prove compositional approximation bounds and their connection to Hoare
   specifications and classical--quantum trace distance. A separate implementation contract
   adds logical-noise bounds without extending the core calculus (Section~\ref{sec:approximation}).
\end{enumerate}
We mechanize these results in Lean and evaluate them on QASMBench.
Verified fusion and scheduling reduce logical measurement depth by 11.8--28.6\% in six
of eight Clifford circuits, against a parallel baseline with matched ancilla resources.
Our proofs concern typed source circuits and ideal logical measurements. Physical mapping
and decoding remain outside their scope.

\section{Background and compilation overview}
\label{sec:example}

\subsection{Pauli-product measurements}

A qubit has basis states $|0\rangle$ and $|1\rangle$, and an $n$-qubit register has
basis vectors indexed by $n$-bit strings. We use density matrices to include
probabilistic mixtures: a state $\rho$ is a positive semidefinite $2^n\times2^n$ complex
matrix with $\tr(\rho)=1$, a pure state $|\psi\rangle$ has density matrix
$|\psi\rangle\langle\psi|$, and a unitary gate $U$ acts as $\rho\mapsto U\rho U^\dagger$,
where $\dagger$ denotes conjugate transpose.

The Pauli matrices are
\[
 I=\begin{pmatrix}1&0\\0&1\end{pmatrix},\quad
 X=\begin{pmatrix}0&1\\1&0\end{pmatrix},\quad
 Y=\begin{pmatrix}0&-i\\i&0\end{pmatrix},\quad
 Z=\begin{pmatrix}1&0\\0&-1\end{pmatrix}.
\]
We form a \emph{signed Pauli product} $P=(-1)^\eta P_1\otimes\cdots\otimes P_n$
by choosing a sign bit $\eta$ and one Pauli matrix for each qubit. The tensor product
$\otimes$ places these operators on their respective qubits. An $I$ factor leaves a qubit
unchanged. Since $P$ is Hermitian and $P^2=I$, its possible measurement outcomes are $+1$
and $-1$. We encode them by a bit $b$, with eigenvalue $(-1)^b$.

Figure~\ref{fig:ppm} illustrates a measurement of $Z\otimes Z$, which returns the parity
of two qubits in the computational basis: outcome zero selects the span of
$|00\rangle,|11\rangle$ and outcome one the span of $|01\rangle,|10\rangle$, preserving
superpositions within the selected subspace, whereas measuring each qubit separately
would destroy such coherence.
We write $r:=\cmd{measure}\ P$ for measuring $P$ and recording its outcome in $r$.
The column in the figure denotes one joint measurement with one result bit. Both
quantum wires remain available afterwards. Lattice surgery implements such
measurements on encoded qubits~\cite{litinski2019game}.

\begin{figure}[htbp]
\centering
\begin{tikzpicture}[font=\small,>=Latex]
 \node at (1.5,.85) {$r:=\cmd{measure}\ (Z\otimes Z)$};
 \node[anchor=east] at (-.08,0) {$q_0$};
 \node[anchor=east] at (-.08,-.85) {$q_1$};
 \draw (0,0) -- (3,0);
 \draw (0,-.85) -- (3,-.85);
 \draw[fill=ppmz] (1,.28) rectangle (1.65,-1.13);
 \node[font=\Large] at (1.325,0) {$Z$};
 \node[font=\Large] at (1.325,-.85) {$Z$};
 \draw[fill=white] (1.65,-.24) arc[start angle=90,end angle=-90,radius=.185] -- cycle;
 \draw (1.68,-.46) -- (1.79,-.33);
 \draw[double,double distance=1.2pt,->] (1.85,-.425) -- (4.5,-.425)
   node[midway,above=3pt] {outcome bit $r$};
 \node[font=\scriptsize,align=center] at (1.5,-1.55) {one measurement of $Z\otimes Z$};
 \node[anchor=west,align=left] at (4.9,.03)
   {$r=0:\quad \operatorname{span}\{|00\rangle,|11\rangle\}$\\[7pt]
    $r=1:\quad \operatorname{span}\{|01\rangle,|10\rangle\}$};
 \node[anchor=west,font=\scriptsize,align=left] at (4.9,-1.12)
   {The outcome selects a subspace.\\Coherence within that subspace is retained.};
 \foreach \x/\p in {2.7/X,4.05/Y,5.4/Z,6.75/I} {
   \node[draw,pauli/\p,minimum width=.6cm,minimum height=.4cm] at (\x,-2.15) {$\p$};
 }
 \node[anchor=east,font=\scriptsize] at (2.15,-2.15) {Pauli factors:};
\end{tikzpicture}
\caption{Input qubits $q_0,q_1$ and classical location $r$. One outlined column denotes one joint measurement. In all circuit figures,
orange, purple, and blue identify $X,Y,Z$. Gray denotes $I$. A meter marks a measurement. The quantum wires continue after the
measurement, while the double line carries its classical result.}
\Description{Two quantum wires pass through a single measurement column labeled Z on both wires.
A legend associates X with orange, Y with purple, Z with blue, and I with gray.
One classical bit leaves the column. Zero selects the span of 00 and 11. One selects the
span of 01 and 10. The quantum wires continue after measurement.}
\label{fig:ppm}
\end{figure}

For a product $P$, define the projectors
\[
 \Pi^P_0=\frac{I+P}{2},\qquad \Pi^P_1=\frac{I-P}{2}.
\]
The output associated with bit $b$ is $\Pi^P_b\rho\Pi^P_b$, and its trace gives the
probability of that outcome. We retain this unnormalized matrix throughout the semantics:
it records the probability and the quantum state in one object. The projectors satisfy
\begin{equation}
 (\Pi^P_b)^\dagger=\Pi^P_b,\qquad
 (\Pi^P_b)^2=\Pi^P_b,\qquad
 \Pi^P_0\Pi^P_1=\Pi^P_1\Pi^P_0=0,\qquad
 \Pi^P_0+\Pi^P_1=I.
 \label{eq:projector-laws}
\end{equation}
We prove these identities from the Pauli matrices and use them in the measurement and
optimization proofs.

\subsection{Classical feedback and gate compilation}

A measurement program can read a stored outcome to choose a correction. In a gate
implementation, these corrections make the output implement the required gate after
summing over internal outcomes. Results returned to the caller must remain distinguishable.

We use the phase gate $S=\operatorname{diag}(1,i)$ as a running example. Consider an
application that measures $X$, applies $S$, and then measures $Y$. Since
$SXS^\dagger=Y$, the two measurements return the same bit. We verify this property
after replacing $S$ by measurements and feedback. The application reads its initial
and final results. The gate implementation's outcomes stay private.

To specify this property, we use assertions that relate a stored bit to a quantum
eigenspace, following \citet{feng2021classical}. Section~\ref{sec:derive-application}
composes the two measurement specifications with the compiled phase gate's specification.

The Clifford gates $H,S,\mathrm{CNOT}$ map Pauli products to Pauli products by
conjugation. Adding $T=\operatorname{diag}(1,e^{i\pi/4})$ permits approximation of
arbitrary single-qubit rotations~\cite{ross2016synthesis}.

When synthesis approximates a source rotation, we bound the change in observable
probabilities. Section~\ref{sec:approximation} composes these bounds along a circuit
and transfers its Hoare specification to the approximate implementation.

\section{QPPM: syntax and semantics}
\label{sec:core}
\subsection{Syntax}
\label{sec:syntax}

A QPPM command $c\in\mathcal{C}_{n,m}$ operates on $n$ data qubits and $m$ classical
bits. Its input and output registers have the same widths. Temporary qubits are introduced
by an ancilla scope. The syntax and metatheory are parameterized by $n$ and $m$:
\[
\begin{aligned}
 e &::= 0 \mid 1 \mid \cmd{read}\ r \mid \neg e \mid e\oplus e,\\
 c &::= \cmd{skip} \mid r:=e \mid r:=\cmd{measure}\ P \mid \cmd{correct}\ P\\
   &\quad\mid c\seq c \mid \cmd{if}\ e\ \cmd{then}\ c\ \cmd{else}\ c\\
   &\quad\mid \cmd{let}\ \nu=\cmd{ancilla}\ |b\rangle\ \cmd{in}\ c.
\end{aligned}
\]

Here $e$ is a Boolean expression, $r$ is a bit position, and $b\in\{0,1\}$ is an
initialization value. The operator $\oplus$ denotes exclusive-or. The product
$P$ contains one Pauli factor for each qubit in the current register.
The correction command applies $P$ by conjugation. The sequence $c\seq d$ executes $c$ then $d$.
Programs have declared input contexts $\Gamma=(q_0,\ldots,q_{n-1})$ for qubits and
$\Delta=(r_0,\ldots,r_{m-1})$ for classical locations, with distinct names in each.
These ordered lists name register positions, and $()$ denotes an empty register.
Every qubit reference must belong to the current $\Gamma$, and every bit read or write
must refer to $\Delta$. Measurement overwrites a declared location rather than
allocating one. The caller supplies the input state and initial classical values.
A scope binds a fresh name $\nu\notin\Gamma$ to a first qubit prepared in $|b\rangle$,
executes its body in $\mathcal C_{n+1,m}$ with context $(\nu,\Gamma)$, and discards $\nu$.
Existing qubits keep their names and move up one position inside the scope.
The name $\nu$ is local to the body. Classical writes remain available afterwards.
On a one-qubit register we omit the qubit subscript. Unmentioned tensor factors are $I$.

We abbreviate $r:=\cmd{measure}\ P$ by $M_P(r)$ and $\cmd{read}\ r$ by $r$ in guards
and assignments. Figure~\ref{fig:reset-syntax} uses these commands to reset a qubit to
$|0\rangle$ while returning a copy of the measurement record. Measurement supplies the
record. The conditional applies a correction only to outcome one. Assignment makes the
copy. We use this same program for the backward, forward, and Hoare derivations below.

\begin{figure}[htbp]
\centering
\begin{tikzpicture}[font=\small,>=Latex]
 \draw (0,0) -- (6.5,0);
 \node[anchor=east] at (0,0) {$q$};
 \pic at (1,0) {ppmone={Z/}};
 \node[draw,pauli/X,minimum width=.5cm,minimum height=.5cm] (x) at (3.3,0) {$X$};
 \draw[double,double distance=.7pt] (1,-.28) -- (1,-.85) -- (5.7,-.85);
 \node[anchor=east] at (.9,-.85) {$r$};
 \fill (3.3,-.85) circle (1.8pt);
 \draw[double,double distance=.7pt] (3.3,-.85) -- (x.south);
 \node[draw,fill=ppmclassical,inner sep=4pt] at (5.7,-.85) {$u:=r$};
 \node at (3.3,.56) {apply $X$ iff $r=1$};
 \node[anchor=west,align=left] at (7,.05) {$c_{\mathrm{reset}}\in\mathcal C_{1,2}$\\[3pt]
   $M_Z(r)\seq $\\
   $\cmd{if}\ r\ \cmd{then}\ \cmd{correct}\ X$\\
   $\phantom{\cmd{if}\ r\ }\cmd{else}\ \cmd{skip}\seq $\\
   $u:=r$};
\end{tikzpicture}
\caption{Inputs: $\Gamma=(q)$, $\Delta=(r,u)$. Measure $Z$, correct outcome $1$,
and copy the result. The double wire carries $r$ to the correction and assignment.
When $r=0$, the data wire passes through unchanged.}
\label{fig:reset-syntax}
\Description{A Z measurement writes r. The data qubit then passes through an X correction
controlled by r, and a classical assignment copies r into u. The QPPM syntax is written
beside the circuit.}
\end{figure}

\subsection{Assertions and weakest preconditions}
\label{sec:assertions}

A classical store $s$ maps the $m$ bit positions to Boolean values. We use assertions
$A$ that map each store to a Hermitian matrix on the quantum register. For a quantum state
$\rho$, the value $\tr(\rho A(s))$ is its expectation.
An \emph{effect} satisfies $0\preceq A(s)\preceq I$ for every store, so this expectation
lies between zero and one on normalized states.
The mechanization admits arbitrary complex matrices, which is convenient for intermediate
calculations. Every assertion in this paper is Hermitian.

The classical locations of a program are divided into \emph{public} bits, which the
caller may read, and \emph{private} bits, which a component uses for its internal
measurement outcomes. Let $\pi$ map a store to its public part. A \emph{public
assertion} has the form $A\circ\pi$: it ignores the private bits. The contracts of
Section~\ref{sec:compilation} are stated for public assertions, so that a compiled
gate's internal outcomes are hidden from its caller, whereas the optimizer of
Section~\ref{sec:optimization} compares assertions on the full store because the
continuation may read any retained bit.

Two forms recur. A classical event $E$ gives the effect $[E]$, with $[E](s)=I$ if $E(s)$
holds and $0$ otherwise, whose expected value on a program output is the probability of
$E$. A quantum property that depends on a measurement result uses $A(s)=\Pi^P_{s(r)}$:
expectation one states that the output lies in the eigenspace recorded in $r$, and
multiplying by $[s(u)=s(r)]$ also requires two result bits to agree. Both forms appear
in the reset and phase-gate examples.

In the L\"owner order, $A\sqsubseteq B$ means that $B(s)-A(s)$ is
positive semidefinite for every store $s$. This order compares expectations on all positive
states. Effects are the assertions used for probability and approximation allowances.

We define $\WP(c,A)$ by reasoning backwards from the postcondition $A$ through $c$:
assignment substitutes the new bit value into $A$, measurement performs this substitution
for each outcome and applies the corresponding projectors, and an ancilla scope ignores
the temporary qubit in the postcondition and assumes its initial state in the precondition.

Write $s[r\mapsto b]$ for the store obtained by updating bit $r$ to $b$.
Let $E_b$ be the inclusion $|\psi\rangle\mapsto|b\rangle\otimes|\psi\rangle$. Write
$\mathsf{prep}_b^*(A)=E_b^\dagger A E_b$ and
$\mathsf{drop}^*(B)=\sum_{j=0}^1 E_j B E_j^\dagger=I_2\otimes B$,
acting pointwise on assertions. The weakest-precondition equations are:
\par\noindent
\begin{minipage}[t]{.48\linewidth}
\begin{equation}
 \WP(\cmd{skip},A)=A
 \tag{W-Skip}\label{wp:skip}
\end{equation}
\begin{equation}
 \begin{aligned}
  &\WP(r:=e,A)(s)\\
  &\quad=A(s[r\mapsto e(s)])
 \end{aligned}
 \tag{W-Asgn}\label{wp:assign}
\end{equation}
\begin{equation}
 \WP(\cmd{correct}\ P,A)(s)=P A(s)P
 \tag{W-Corr}\label{wp:correct}
\end{equation}
\end{minipage}\hfill
\begin{minipage}[t]{.48\linewidth}
\begin{equation}
 \WP(c\seq d,A)=\WP(c,\WP(d,A))
 \tag{W-Seq}\label{wp:seq}
\end{equation}
\begin{equation}
 \begin{aligned}
  &\WP(M_P(r),A)(s)\\
  &\quad=\sum_{j=0}^1\Pi^P_j A(s[r\mapsto j])\Pi^P_j
 \end{aligned}
 \tag{W-Meas}\label{wp:measure}
\end{equation}
\end{minipage}
\begin{equation}
 \WP(\cmd{if}\ e\ \cmd{then}\ c\ \cmd{else}\ d,A)(s)
  =\begin{cases}\WP(c,A)(s)&e(s)=1,\\\WP(d,A)(s)&e(s)=0,
   \end{cases}
 \tag{W-If}\label{wp:if}
\end{equation}
\begin{equation}
 \WP(\cmd{let}\ \nu=\cmd{ancilla}\ |b\rangle\ \cmd{in}\ c,A)
  =\mathsf{prep}_b^*(\WP(c,\mathsf{drop}^*(A))).
 \tag{W-Anc}\label{wp:ancilla}
\end{equation}

We prove that $\WP$ is linear and monotone, that $\WP(c,I)=I$, and that it maps effects
to effects. These properties justify \ref{rule:consequence} and the approximation
rules of Theorem~\ref{thm:approximation}.

\paragraph{Example: calculate the reset precondition.}
\label{sec:reset-wp}
For the program in Figure~\ref{fig:reset-syntax}, set
\[
 Q_0=\Pi^Z_0=\begin{pmatrix}1&0\\0&0\end{pmatrix},\qquad
 Q_1=\Pi^Z_1=\begin{pmatrix}0&0\\0&1\end{pmatrix},\qquad
 R(s)=Q_{s(r)}.
\]
The desired postcondition is $B_{\mathrm{reset}}(s)=[s(u)=s(r)]Q_0$:
the qubit is in $|0\rangle$ and the two records agree. We calculate backwards,
placing each command fragment beside its substitution. A schematic wire carries the
quantum register and classical store. A box denotes a command. A fork selects a branch.
In the expanded circuits, single wires are qubits and double wires carry bits.

\smallskip
\noindent\textit{1. Copy the record.}
Since $r\ne u$, Equation~\eqref{wp:assign} gives
\[
 \begin{array}{c@{\qquad}l}
  \pwire{u:=r}
  &\begin{aligned}
    \WP(u:=r,B_{\mathrm{reset}})(s)
       &=B_{\mathrm{reset}}(s[u\mapsto s(r)])\\
       &=[s(r)=s(r)]Q_0=Q_0.
   \end{aligned}
 \end{array}
\]
\noindent\textit{2. Correct the measured state.}
Write $g$ for the conditional in the figure. The two arms require different
input eigenspaces by \eqref{wp:if}, \eqref{wp:correct}, and \eqref{wp:skip}:
\[
 \begin{array}{c@{\qquad}l}
  \resetfeedback
  &\begin{aligned}
   \WP(g,Q_0)(s)
     &=\begin{cases}XQ_0X&s(r)=1,\\Q_0&s(r)=0\end{cases}\\
     &=\begin{cases}Q_1&s(r)=1,\\Q_0&s(r)=0\end{cases}
       =R(s).
   \end{aligned}
 \end{array}
\]
The correction equation $XQ_0X=Q_1$ can be checked entry by entry.

\noindent\textit{3. Measure and substitute each outcome.}
In \eqref{wp:measure}, the update inside $R$ selects the matching projector:
\[
 \begin{array}{c@{\qquad}l}
  \pmeter[Z]{Z}{r}
  &\begin{aligned}
   \WP(M_Z(r),R)(s)
     &=Q_0\underbrace{R(s[r\mapsto0])}_{Q_0}Q_0
       +Q_1\underbrace{R(s[r\mapsto1])}_{Q_1}Q_1\\
     &=Q_0^3+Q_1^3=Q_0+Q_1=I.
   \end{aligned}
 \end{array}
\]
Applying \eqref{wp:seq} twice yields
$\WP(c_{\mathrm{reset}},B_{\mathrm{reset}})=I$ for every initial store.
In particular, the postcondition has expectation one on every normalized input.
For a fresh qubit measured in $X$ and discarded, \eqref{wp:ancilla} and
\eqref{wp:measure} give
$\WP(\cmd{let}\ \nu=\cmd{ancilla}\ |0\rangle\ \cmd{in}\ M_{X_\nu}(r),[r=1])
=E_0^\dagger\Pi^X_1E_0=1/2$: the record is a fair bit.

\subsection{Forward semantics}
To describe execution, we assign a quantum output matrix to each final store. For input
store $s$, input matrix $\rho$, and final store $t$, write this matrix as
$\Inst_c(s,\rho,t)$. Together these matrices form the program's \emph{instrument}. On a
normalized input, the trace of each matrix is the probability of that final store. We prove
\[
 \rho\succeq0 \Longrightarrow \Inst_c(s,\rho,t)\succeq0,
 \qquad
 \sum_t\tr(\Inst_c(s,\rho,t))=\tr(\rho).
\]
The forward scope prepares $E_b\rho E_b^\dagger$, executes the body, and applies partial
trace separately to each final-store block, even when the ancilla is entangled with the
data. The following duality equation connects the execution semantics to weakest
preconditions:
\begin{equation}
 \begin{gathered}
 \begin{tikzpicture}[baseline=(current bounding box.center),font=\small,>=Latex]
  \node (input) at (0,0) {$(s,\rho)$};
  \node[draw,inner sep=4pt] (c) at (3.3,0) {$c$};
  \node (output) at (7.4,0) {$\{\Inst_c(s,\rho,t)\}_t$};
  \draw[->] (input.east) -- (c.west); \draw[->] (c.east) -- (output.west);
  \node (pre) at (0,-.72) {$\WP(c,A)(s)$};
  \node (post) at (7.4,-.72) {$\{A(t)\}_t$};
  \draw[->] (post.west) -- node[above,font=\scriptsize] {pull back the observation} (pre.east);
  \draw[dotted] (input.south) -- (pre.north) (output.south) -- (post.north);
 \end{tikzpicture}\\[4pt]
 \tr(\rho\,\WP(c,A)(s))
       =\sum_t\tr(\Inst_c(s,\rho,t)A(t)).
 \end{gathered}
 \label{eq:duality}
\end{equation}
Execution runs rightwards. Assertions are pulled back leftwards. The dotted pairings
evaluate the same expected value at the input and at the complete output distribution.
For a component with private bits, its \emph{public output} at public store $t$ sums the
blocks of all final stores with that public part:
\[
 \Inst^{\mathrm{public}}_c(s,\rho,t)=\sum_{u:\pi(u)=t}\Inst_c(s,\rho,u).
\]
By duality, the expectation of $A$ on this output equals that of $A\circ\pi$ on the full
output. Hiding bits is part of a public specification. An ancilla scope itself retains
all classical writes.

Duality is proved by a finite path construction in which every execution path
contributes one Kraus operator and one final store.

\paragraph{Example: execute the reset program.}
\label{sec:reset-forward}
Take the initial store $(r,u)=(0,0)$ and the state
$|+\rangle=(|0\rangle+|1\rangle)/\sqrt2$, with density matrix
$\rho_+=|+\rangle\langle+|=\frac12\left(\begin{smallmatrix}1&1\\1&1\end{smallmatrix}\right)$.
Each row below follows one outcome. The matrix at a boundary retains the probability
of reaching it. The pair below it is the current store $(r,u)$.
\[
\begin{array}{c@{\;}c@{\;}c@{\;}c@{\;}c@{\;}c@{\;}c}
 \rho_+&
 \pmeter[Z]{Z}{r=0}&
 \underbrace{Q_0/2}_{(0,0)}&
 \pwire{\cmd{skip}}&
 \underbrace{Q_0/2}_{(0,0)}&
 \pwire{u:=r}&
 \underbrace{Q_0/2}_{(0,0)}
 \\[12pt]
 \rho_+&
 \pmeter[Z]{Z}{r=1}&
 \underbrace{Q_1/2}_{(1,0)}&
 \pwire[X]{X}&
 \underbrace{Q_0/2}_{(1,0)}&
 \pwire{u:=r}&
 \underbrace{Q_0/2}_{(1,1)}
\end{array}
\]
The first boundary uses $Q_b\rho_+Q_b=Q_b/2$, the lower correction uses
$X(Q_1/2)X=Q_0/2$, and assignment changes only the store. The two nonzero final blocks,
$\Inst_{c_{\mathrm{reset}}}((0,0),\rho_+,(0,0))$ and
$\Inst_{c_{\mathrm{reset}}}((0,0),\rho_+,(1,1))$, both equal $Q_0/2$ and have trace
$1/2$, so the records remain distinguishable although their quantum outputs agree. For $B_{\mathrm{reset}}$, Equation~\eqref{eq:duality} reads
$\tr(\rho_+I)=1=\tr((Q_0/2)Q_0)+\tr((Q_0/2)Q_0)$, and for the event $[r=1]$ only the
second block contributes, giving probability $1/2$.

\section{Hoare logic}
\label{sec:hoare}

We define validity of a triple $\triple{A}{c}{B}$ by $A\sqsubseteq\WP(c,B)$. For
Hermitian assertions, duality means that executing $c$ establishes an expected value
of $B$ at least as large as the initial expected value of $A$.
We write $\vdash\triple{A}{c}{B}$ when the triple
is derivable using the following rules. Commands $c,d$, Boolean expression $e$, and assertions
$A,B,C,Q$ use the same declared contexts, except at a scope boundary.
Here $p$ ranges over assignment, measurement,
and correction, and $A\mathbin{\triangleright_e}B$ selects $A(s)$ when $e(s)=1$ and
$B(s)$ otherwise.

We draw the command inside its judgment:
\begin{equation}
 \htriple{A}{c}{B}\quad\text{denotes}\quad\triple{A}{c}{B},
 \qquad
 \pmeter{P}{r}\quad\text{denotes}\quad r:=\cmd{measure}\ P.
 \label{eq:diagram-notation}
\end{equation}
Joining command boxes means sequencing, and braces mark assertions at their boundaries.
Colored rounded boxes inside assertions denote the displayed matrices, without applying
an operation.
\begin{mathpar}
 \inferrule*[lab=\rulename{H-Skip}{rule:skip}]{}{\htriple{A}{\cmd{skip}}{A}}
 \and
 \inferrule*[lab=\rulename{H-Prim}{rule:primitive}]{}{ \htriple{\WP(p,B)}{p}{B}}
 \quad(p\text{ primitive})
 \and
 \inferrule*[lab=\rulename{H-Seq}{rule:seq}]{\htriple{A}{c}{B}\\\htriple{B}{d}{C}}
             {\dtriple{A}{\pwire{c}\!\pwire{d}}{C}}
 \and
 \inferrule*[lab=\rulename{H-If}{rule:if}]{\htriple{A}{c}{Q}\\\htriple{B}{d}{Q}}
   {\htriple{A\mathbin{\triangleright_e}B}
     {\cmd{if}\ e\ \cmd{then}\ c\ \cmd{else}\ d}{Q}}
 \and
 \inferrule*[lab=\rulename{H-Anc}{rule:ancilla}]{\htriple{A}{c}{\mathsf{drop}^*(B)}}
   {\htriple{\mathsf{prep}_b^*(A)}{\cmd{let}\ \nu=\cmd{ancilla}\ |b\rangle\ \cmd{in}\ c}{B}}
 \and
 \inferrule*[lab=\rulename{H-Cons}{rule:consequence}]{A\sqsubseteq A'\\\htriple{A'}{c}{B'}\\B'\sqsubseteq B}
   {\htriple{A}{c}{B}}
\end{mathpar}
In \ref{rule:ancilla}, the premise uses context $(\nu,\Gamma)$ and the conclusion uses
$\Gamma$. The body postcondition ignores the temporary qubit. Compressing the body
precondition by $\mathsf{prep}_b^*$ accounts for its initialization.

\paragraph{Example: derive the reset triple.}
\label{sec:reset-hoare}
We now use the rules to prove the specification whose precondition we calculated above.
Rules \ref{rule:primitive} and \ref{rule:skip} supply the premises of \ref{rule:if}:
\[
 \frac{\dtriple{Q_1}{\pwire[X]{X}}{Q_0}
       \qquad\htriple{Q_0}{\cmd{skip}}{Q_0}}
      {\dtriple{Q_1\mathbin{\triangleright_r}Q_0}{\resetfeedback}{Q_0}}.
\]
The correction premise uses $XQ_0X=Q_1$.
The guard reads $r$, so $Q_1\mathbin{\triangleright_r}Q_0$ selects
$Q_1$ on outcome one and $Q_0$ on outcome zero.
Rule \ref{rule:primitive}, instantiated with \eqref{wp:measure} and \eqref{wp:assign},
gives the other two triples:
\[
 \dtriple{I}{\pmeter[Z]{Z}{r}}{R}
 \qquad
 \htriple{Q_0}{u:=r}{B_{\mathrm{reset}}}.
\]
Two uses of \ref{rule:seq} join the equal assertion boundaries:
\begin{equation}
 \{I\}\;\pmeter[Z]{Z}{r}\;\{R\}\;
 \resetfeedback\;\{Q_0\}\;
 \pwire{u:=r}\;\{B_{\mathrm{reset}}\}.
 \label{eq:reset-proof}
\end{equation}
This is a derivation of $\vdash\triple{I}{c_{\mathrm{reset}}}{B_{\mathrm{reset}}}$.
To keep only the classical guarantee, put $E(s)\equiv s(u)=s(r)$.
At an agreeing store, $[E](s)-B_{\mathrm{reset}}(s)=I-Q_0=Q_1\succeq0$.
At a disagreeing store, the difference is zero. Hence $B_{\mathrm{reset}}\sqsubseteq[E]$,
and \ref{rule:consequence} gives
\[
 \frac{I\sqsubseteq I\qquad
       \htriple{I}{c_{\mathrm{reset}}}{B_{\mathrm{reset}}}
       \qquad B_{\mathrm{reset}}\sqsubseteq[E]}
      {\htriple{I}{c_{\mathrm{reset}}}{[E]}}.
\]
Thus the records agree with probability one.

The rules are a syntactic presentation of the weakest-precondition equations: each
primitive rule instantiates one equation, and the structural rules follow the
recursion of $\WP$. Since every QPPM command terminates, the following theorem is
therefore immediate, and partial and total correctness coincide.
\begin{theorem}[Soundness and relative completeness]
\label{thm:hoare}
For every QPPM command $c$ and assertions $A,B$,
$\vdash\triple{A}{c}{B}$ if and only if $A\sqsubseteq\WP(c,B)$.
\end{theorem}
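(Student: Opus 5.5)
We prove the two directions separately. Soundness goes by induction on the derivation of $\vdash\triple{A}{c}{B}$; completeness goes by structural induction on $c$. Both rest on two facts stated in Section~\ref{sec:assertions}: $\WP(c,\cdot)$ is monotone, and $\sqsubseteq$ is a preorder, reflexive and transitive since it is defined pointwise by positive semidefiniteness. We also need that $\mathsf{prep}_b^*$ and $\mathsf{drop}^*$ are monotone. Each is a pointwise congruence, $A\mapsto E_b^\dagger AE_b$ or $B\mapsto I_2\otimes B$, and congruences preserve positive semidefiniteness.

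\emph{Soundness.} We show that each rule preserves validity $A\sqsubseteq\WP(c,B)$.
\begin{itemize}
 \item \ref{rule:skip} and \ref{rule:primitive} follow from reflexivity, using \eqref{wp:skip} for the former.
 \item For \ref{rule:seq}, we have $A\sqsubseteq\WP(c,B)$ and $B\sqsubseteq\WP(d,C)$. Monotonicity gives $A\sqsubseteq\WP(c,\WP(d,C))$, which equals $\WP(c\seq d,C)$ by \eqref{wp:seq}.
 \item For \ref{rule:if}, we argue pointwise in $s$. If $e(s)=1$, then $(A\mathbin{\triangleright_e}B)(s)=A(s)\preceq\WP(c,Q)(s)$, which is the value of \eqref{wp:if}; the case $e(s)=0$ is symmetric.
 \item For \ref{rule:ancilla}, the premise gives $A\sqsubseteq\WP(c,\mathsf{drop}^*(B))$ in context $(\nu,\Gamma)$. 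Applying the monotone map $\mathsf{prep}_b^*$ and then \eqref{wp:ancilla} yields the conclusion.
 \item \ref{rule:consequence} uses transitivity together with monotonicity of $\WP(c,\cdot)$ in the postcondition.
\end{itemize}

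\emph{Completeness.} By structural induction on $c$ we first show the stronger claim $\vdash\triple{\WP(c,B)}{c}{B}$ for every $B$. Given this, an arbitrary $A\sqsubseteq\WP(c,B)$ follows by one use of \ref{rule:consequence} with $B'=B$.
\begin{itemize}
 \item For $\cmd{skip}$, use \ref{rule:skip} with \eqref{wp:skip}; for primitives, use \ref{rule:primitive} directly.
 \item For $c\seq d$, the hypotheses give $\vdash\triple{\WP(c,\WP(d,B))}{c}{\WP(d,B)}$ and $\vdash\triple{\WP(d,B)}{d}{B}$. \ref{rule:seq} joins them, and \eqref{wp:seq} identifies the precondition.
 \item For a conditional, apply \ref{rule:if} to the hypotheses for $c$ and $d$ with common postcondition $B$. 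The resulting precondition $\WP(c,B)\mathbin{\triangleright_e}\WP(d,B)$ equals $\WP(\cmd{if}\ e\ \cmd{then}\ c\ \cmd{else}\ d,B)$ pointwise by \eqref{wp:if}.
 \item For a scope, apply the induction hypothesis to the body in $\mathcal C_{n+1,m}$ with postcondition $\mathsf{drop}^*(B)$, then use \ref{rule:ancilla}. The precondition is $\mathsf{prep}_b^*(\WP(c,\mathsf{drop}^*(B)))$, matching \eqref{wp:ancilla}.
\end{itemize}
Termination plays no separate role, because $\WP$ is defined by total structural recursion with no loop rule.

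The only delicate step is the ancilla case, which changes the register width. The induction must range over all widths $n$ simultaneously, with the statement quantified over $B$ at each width. In Lean this means generalizing $n$ and $B$ before inducting, so the hypothesis applies to the body at width $n+1$ with the postcondition $\mathsf{drop}^*(B)$. The \ref{rule:if} case also needs care: equality of assertions must be taken as function extensionality over stores, after which the case split on $e(s)$ matches \eqref{wp:if} definitionally. Everything else is rewriting with the $\WP$ equations.
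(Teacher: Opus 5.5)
Your proposal is correct and follows essentially the same route as the paper. Soundness is proved by induction on the derivation, using monotonicity of $\WP$ and positivity of the compression $\mathsf{prep}_b^*$, and completeness by structural induction deriving $\vdash\triple{\WP(c,B)}{c}{B}$ followed by one application of \ref{rule:consequence}; your remark that the induction must be generalized over register width for the scope case is exactly what that induction needs.
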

\begin{proof}
Soundness is induction on the derivation, using monotonicity of $\WP$ and positivity
of the preparation map. For completeness, structural induction on $c$ derives
$\vdash\triple{\WP(c,B)}{c}{B}$, and \ref{rule:consequence} supplies any valid precondition:
\[
 \frac{A\sqsubseteq\WP(c,B)\qquad\htriple{\WP(c,B)}{c}{B}}
      {\htriple{A}{c}{B}}.
\]
\end{proof}
Completeness is relative to proofs of the matrix-order side conditions. The value of
the rules lies in composition: a derived triple for a component is reused at every
call site without recomputing its weakest precondition. In the soundness case for
\ref{rule:seq}, monotonicity and \eqref{wp:seq} give
$A\sqsubseteq\WP(c,B)\sqsubseteq\WP(c,\WP(d,C))$. For \ref{rule:ancilla}, positive
compression by $E_b$ preserves the matrix order, and \eqref{wp:ancilla} gives the
conclusion.

Define exact equivalence $c\equiv d$ by equality of their weakest preconditions for every
matrix assertion. The theorem immediately transports all Hoare specifications between
equivalent commands. The instrument development also proves equality of their output
matrices for each final store. We use this equivalence as the correctness condition for QPPM optimizations.

\subsection{Verification of the phase-gate application}
\label{sec:derive-application}

We now verify the application introduced in Section~\ref{sec:example}, with inputs
$\Gamma=(d)$ and $\Delta=(a,b,r,t)$. Locations $r,t$ are public. Locations $a,b$ are
private to the phase implementation $C_S$ of Figure~\ref{fig:gate-gallery}.
The application is
\[
 c_{\mathrm{app}}=M_X(r)\seq C_S\seq M_Y(t).
\]
We want the final bits to agree and the quantum state to lie in the corresponding $Y$
eigenspace. Define
\[
 A_X(s)=\qassert{X}{\Pi^X_{s(r)}},\qquad
 A_Y(s)=\qassert{Y}{\Pi^Y_{s(r)}},\qquad
 B(s)=[s(t)=s(r)]\,\qassert{Y}{\Pi^Y_{s(r)}}.
\]
Our specification is
\begin{equation}
 \triple{I}{c_{\mathrm{app}}}{B}.
 \label{eq:running}
\end{equation}
Figure~\ref{fig:application-proof} places each intermediate assertion below its position
in the circuit and the corresponding triples beside it.

\begin{figure}[htbp]
\centering
\begin{tikzpicture}[font=\small,>=Latex]
 \node[anchor=east] at (0,0) {$d$};
 \draw (.05,0) -- (6,0);
 \pic at (1.1,0) {ppmone={X/r}};
 \node[draw,fill=white,minimum width=.9cm,minimum height=.5cm] at (3,0) {$C_S$};
 \pic at (4.9,0) {ppmone={Y/t}};
 \foreach \x/\a in {.35/I,2/A_X,4/A_Y,5.7/B} {
   \draw[densely dotted] (\x,-.18) -- (\x,-.85);
   \node at (\x,-1.03) {$\{\a\}$};
 }
 \node[font=\scriptsize] at (1.1,.65) {measurement};
 \node[font=\scriptsize] at (3,.65) {compiled phase};
 \node[font=\scriptsize] at (4.9,.65) {measurement};
 \node[anchor=west,align=left,text width=5.3cm] at (6.5,-.12)
   {$r:=\cmd{measure}\ X\seq \ C_S\seq \ t:=\cmd{measure}\ Y$\\[7pt]
    $\{I\}\ M_X(r)\ \{A_X\}$\\[3pt]
    $S^\dagger\qassert{Y}{\Pi^Y_{s(r)}}S=\qassert{X}{\Pi^X_{s(r)}}$\\[3pt]
    $\{A_Y\}\ M_Y(t)\ \{B\}$};
 \node[anchor=west,font=\scriptsize] at (6.5,-1.5)
   {Two uses of \ref{rule:seq} give $\{I\}\,c_{\mathrm{app}}\,\{B\}$.};
\end{tikzpicture}
\caption{Inputs: $\Gamma=(d)$, $\Delta=(a,b,r,t)$. Assertions under the wire
mark the intermediate specifications. The result bits $r,t$ are public. The phase
component's private bits are hidden from these assertions.}
\Description{An X measurement, compiled S component, and Y measurement form a one-qubit
circuit. The assertions I, A X, A Y, and B mark successive points. Beside the circuit,
measurement triples and a colored projector identity justify the application triple.}
\label{fig:application-proof}
\end{figure}

\paragraph{Step 1: establish $A_X$.}
Step 3 of the reset calculation, with $X$ in place of $Z$, establishes the eigenspace
named by $r$: for $A_P(s)=\Pi^P_{s(r)}$, \eqref{wp:measure} gives
$\WP(M_P(r),A_P)(s)=\sum_b\Pi^P_b\,\Pi^P_b\,\Pi^P_b=I$.

\paragraph{Step 2: transport the projector through the component.}
The component contract established in Section~\ref{sec:compilation} says that public
postconditions are pulled back by $S^\dagger(\cdot)S$. The relevant matrix calculation is
$SXS^\dagger=\operatorname{diag}(1,i)\,X\,\operatorname{diag}(1,-i)
=\left(\begin{smallmatrix}0&-i\\i&0\end{smallmatrix}\right)=Y$.
Conjugation distributes over $I\mathbin{\pm}X$, so
$S\Pi^X_bS^\dagger=\Pi^Y_b$. Unitarity then gives
$S^\dagger\Pi^Y_bS=\Pi^X_b$. As the component preserves public stores, the label $s(r)$
is unchanged. Its weakest precondition for $A_Y$ is exactly $A_X$, proving
$\triple{A_X}{C_S}{A_Y}$. Both assertions depend only on the public bit $r$ and the data qubit.

\paragraph{Step 3: establish agreement and the final eigenspace.}
A second measurement of the same product copies the recorded eigenspace: since $t\ne r$,
\eqref{wp:measure} gives
$\WP(M_Y(t),B)(s)=\sum_b\Pi^Y_b\,[b=s(r)]\,\Pi^Y_{s(r)}\,\Pi^Y_b=\Pi^Y_{s(r)}=A_Y(s)$.
The precondition is exactly $A_Y$, so there is no intervening matrix inequality to guess. Two uses of \ref{rule:seq}, shown in Figure~\ref{fig:application-proof}, prove
Equation~\eqref{eq:running}.

\paragraph{Step 4: extract an ordinary probability guarantee.}
Let $E(s)$ be $s(t)=s(r)$. Since $0\preceq\Pi^Y_{s(r)}\preceq I$, we have
$B\sqsubseteq[E]$, so \ref{rule:consequence} gives $\{I\}\,c_{\mathrm{app}}\,\{[E]\}$.
Duality on a positive trace-one input yields
$1\leq\sum_{u:E(u)}\tr(\Inst_{c_{\mathrm{app}}}(s,\rho,u))$, and total trace
preservation gives the reverse inequality, so agreement has probability one.

\paragraph{Step 5: reuse the proof after transformation.}
Exact optimization preserves this triple, and a certificate
$\mathcal{O}(c_{\mathrm{app}})\approx_\varepsilon d$ transfers the effect postcondition
$B$ to $d$ with precondition $(1-\varepsilon)I$, so $\Pr_d[E]\geq1-\varepsilon$.

\subsection{An incorrect phase-gate implementation}
\label{sec:incorrect}

Delete the phase component and call the resulting program $d=M_X(r)\seq M_Y(t)$.
For the event $E$, \eqref{wp:seq} and \eqref{wp:measure} give the exact precondition:
\[
 \frac{\WP(d,[E])=\sum_{b=0}^1\Pi^X_b\Pi^Y_b\Pi^X_b
              =\tfrac12\sum_b\Pi^X_b=\tfrac12 I}
      {\dtriple{\tfrac12 I}{\pmeter[X]{X}{r}\!\pmeter[Y]{Y}{t}}{[E]}}
 \qquad I\not\preceq\tfrac12 I.
\]
Every normalized input has agreement probability $1/2$. Soundness therefore rules out
the desired precondition $I$, regardless of the proof-search strategy.

The same event witnesses a necessary approximation allowance. If a certificate related the
optimized correct program to $d$ with allowance $\varepsilon$, specification transfer would
give $1-\varepsilon\leq1/2$, hence $\varepsilon\geq1/2$. The event proves only a lower
bound on the uniform error. A matching upper bound would require a proof for every effect.

\section{Verified compilation}
\label{sec:compilation}

Compilation replaces each source gate by a measurement program whose private outcomes
are hidden from the caller, who observes only the data register and the public bits.
Let $\pi$ drop the two private bits $a,b$. A component $c$ \emph{implements} a source
weakest-precondition transformer $W$ when
\begin{equation}
 \forall A.\quad \WP(c,A\circ\pi)=(W A)\circ\pi.
 \label{eq:component-contract}
\end{equation}
The equation quantifies over every initial value of the private bits, so sequential
components can reuse the same two bits. For a unitary component, $W(A)(s)=U^\dagger A(s)U$.
For a Clifford circuit with matrix $U$, composing the contracts of its gates gives the
public output $U\rho U^\dagger$ with the public store unchanged.

\subsection{Implementing a universal gate set}
\label{sec:gate-gallery}
Figure~\ref{fig:gate-gallery} gives measurement implementations of $H,S,\mathrm{CNOT}$.
Equation~\eqref{eq:injection-proof} gives $T$ injection.
Each diagram shows individual Pauli measurements and corrections beside its QPPM program.
Each row declares its input registers. The caller supplies $d$ (or $c,t$).
The distinct locations $a,b$ are private. Each binder introduces the fresh qubit $\nu$ shown
on its circuit wire and discards it after the correction. Sequential scopes reuse the
name for distinct qubits. Subscripts locate Pauli factors in the current context, and
all others are $I$. Scopes retain classical writes. The output specification hides $a,b$.

\begin{figure}[htbp]
\centering\small\setlength{\jot}{1pt}
\begin{tabular}{@{}m{.44\linewidth}@{\hspace{.025\linewidth}}m{.535\linewidth}@{}}
\multicolumn{2}{@{}l@{}}{\textbf{(a) $S$}\quad Inputs: $\Gamma=(d)$, $\Delta=(a,b)$}\\[3pt]
\begin{tikzpicture}[x=.75cm,y=.75cm,font=\small,>=Latex,
 baseline=(current bounding box.center)]
 \node[anchor=east] at (.45,0) {$|0\rangle_\nu$};
 \node[anchor=east] at (.45,-.75) {$d$};
 \draw (.5,0) -- (5.1,0); \draw (.5,-.75) -- (5.1,-.75);
 \pic at (1.3,0) {ppmtwo={X/Z/a}};
 \pic at (2.65,0) {ppmone={Y/b}};
 \node[draw,pauli/Z,minimum height=.4cm] (z) at (4.1,-.75) {$Z$};
 \node[fill=white,inner sep=1pt] (xor) at (4.1,-1.6) {$\oplus$};
 \draw[double,double distance=.6pt] (1.3,-1.4) |- (xor.west);
 \draw[double,double distance=.6pt] (2.65,-.68) -- (2.65,-1.6);
 \draw[double,double distance=.6pt,->] (xor.north) -- (z.south);
 \draw (5,-.15) -- (5,.15) (5.1,-.15) -- (5.1,.15);
\end{tikzpicture}
&
$\begin{aligned}
C_S\equiv{}&\cmd{let}\ \nu=\cmd{ancilla}\ |0\rangle\ \cmd{in}\ (M_{X_\nu Z_d}(a)\seq M_{Y_\nu}(b)\seq \\
 &\cmd{if}\ a\oplus b\ \cmd{then}\ \cmd{correct}\ Z_d\\
 &\cmd{else}\ \cmd{skip})
\end{aligned}$\\[8pt]
\multicolumn{2}{@{}l@{}}{\textbf{(b) $H$}\quad Inputs: $\Gamma=(d)$, $\Delta=(a,b)$}\\[3pt]
\begin{tikzpicture}[x=.53cm,y=.7cm,font=\scriptsize,>=Latex,
 baseline=(current bounding box.center)]
 \node[anchor=east] at (.15,-.75) {$d$};
 \draw (.2,-.75) -- (10.6,-.75);
 % Repeated scopes start further right so that each |0> label clears the discard bars.
 \foreach \off/\axis in {0/Z,3.8/X,7.6/Z} {
  \begin{scope}[xshift=\off*.53cm]
   \node[anchor=east] at (.55,0) {$|0\rangle_\nu$};
   \draw (.6,0) -- (2.96,0);
   \pic at (1,0) {ppmtwo={X/\axis/a}};
   \pic at (1.85,0) {ppmone={Y/b}};
   \node[draw,pauli/\axis,inner sep=2pt] (corr) at (2.6,-.75) {$\axis$};
   \node[fill=white,inner sep=.5pt] (xor) at (2.6,-1.65) {$\oplus$};
   \draw[double,double distance=.5pt] (1,-1.4) |- (xor.west);
   \draw[double,double distance=.5pt] (1.85,-.68) -- (1.85,-1.65);
   \draw[double,double distance=.5pt,->] (xor.north) -- (corr.south);
   \draw (2.88,-.13) -- (2.88,.13) (2.96,-.13) -- (2.96,.13);
  \end{scope}
 }
\end{tikzpicture}
&
$\begin{aligned}
C_H\equiv{}&\cmd{let}\ \nu=\cmd{ancilla}\ |0\rangle\ \cmd{in}\ (M_{X_\nu Z_d}(a)\seq M_{Y_\nu}(b)\seq \\
 &\cmd{if}\ a\oplus b\ \cmd{then}\ \cmd{correct}\ Z_d\ \cmd{else}\ \cmd{skip})\seq \\
 &\cmd{let}\ \nu=\cmd{ancilla}\ |0\rangle\ \cmd{in}\ (M_{X_\nu X_d}(a)\seq M_{Y_\nu}(b)\seq \\
 &\cmd{if}\ a\oplus b\ \cmd{then}\ \cmd{correct}\ X_d\ \cmd{else}\ \cmd{skip})\seq \\
 &\cmd{let}\ \nu=\cmd{ancilla}\ |0\rangle\ \cmd{in}\ (M_{X_\nu Z_d}(a)\seq M_{Y_\nu}(b)\seq \\
 &\cmd{if}\ a\oplus b\ \cmd{then}\ \cmd{correct}\ Z_d\ \cmd{else}\ \cmd{skip})
\end{aligned}$\\[8pt]
\multicolumn{2}{@{}l@{}}{\textbf{(c) CNOT}\quad Inputs: $\Gamma=(c,t)$, $\Delta=(a,b)$ (control $c$, target $t$)}\\[3pt]
\begin{tikzpicture}[x=.53cm,y=.7cm,font=\scriptsize,>=Latex,
 baseline=(current bounding box.center)]
 \node[anchor=east] at (.15,-.75) {$c$};
 \node[anchor=east] at (.15,-1.5) {$t$};
 \draw (.2,-.75) -- (10.6,-.75) (.2,-1.5) -- (10.6,-1.5);
 \foreach \off/\pc/\pt/\sgn in {0/Z/I/{},3.8/I/X/{},7.6/Z/X/{-}} {
  \begin{scope}[xshift=\off*.53cm]
   \node[anchor=east] at (.4,0) {$|0\rangle_\nu$};
   \draw (.45,0) -- (2.96,0);
   % Leave room for the signed Pauli label in the final measurement.
   \fill[ppmx] (.6,.25) rectangle (1.4,-.375);
   \fill[pauli/\pc] (.6,-.375) rectangle (1.4,-1.125);
   \fill[pauli/\pt] (.6,-1.125) rectangle (1.4,-1.75);
   \draw (.6,.25) rectangle (1.4,-1.75);
   \node at (1,0) {$\sgn X$};
   \node at (1,-.75) {$\pc$}; \node at (1,-1.5) {$\pt$};
   \draw[fill=white] (1.4,-.6) arc[start angle=90,end angle=-90,radius=.125] -- cycle;
   \draw (1.42,-.75) -- (1.5,-.66);
   \node at (1,-1.95) {$a$};
   \pic at (1.85,0) {ppmone={Y/b}};
   \node[fill=white,inner sep=.5pt] (xor) at (2.6,-2.35) {$\oplus$};
   \draw[double,double distance=.5pt] (1,-2.12) |- (xor.west);
   \draw[double,double distance=.5pt] (1.85,-.68) -- (1.85,-2.35);
   \ifdim\off pt=0pt
    \node[draw,pauli/Z,inner sep=2pt] (corr) at (2.6,-.75) {$Z$};
    \draw[double,double distance=.5pt,->] (xor.north) -- (corr.south);
   \else
    \node[draw,pauli/X,inner sep=2pt] (corr) at (2.6,-1.5) {$X$};
    \draw[double,double distance=.5pt,->] (xor.north) -- (corr.south);
    \ifdim\off pt=7.6pt
     \node[draw,pauli/Z,inner sep=2pt] (corrz) at (2.6,-.75) {$Z$};
     \draw (corrz.south) -- (corr.north);
    \fi
   \fi
   \draw (2.88,-.13) -- (2.88,.13) (2.96,-.13) -- (2.96,.13);
  \end{scope}
 }
\end{tikzpicture}
&
$\begin{aligned}
C_{\mathrm{CNOT}}\equiv{}&\cmd{let}\ \nu=\cmd{ancilla}\ |0\rangle\ \cmd{in}\ (M_{X_\nu Z_c}(a)\seq M_{Y_\nu}(b)\seq \\
 &\cmd{if}\ a\oplus b\ \cmd{then}\ \cmd{correct}\ Z_c\ \cmd{else}\ \cmd{skip})\seq \\
 &\cmd{let}\ \nu=\cmd{ancilla}\ |0\rangle\ \cmd{in}\ (M_{X_\nu X_t}(a)\seq M_{Y_\nu}(b)\seq \\
 &\cmd{if}\ a\oplus b\ \cmd{then}\ \cmd{correct}\ X_t\ \cmd{else}\ \cmd{skip})\seq \\
 &\cmd{let}\ \nu=\cmd{ancilla}\ |0\rangle\ \cmd{in}\ (M_{-X_\nu Z_cX_t}(a)\seq M_{Y_\nu}(b)\seq \\
 &\cmd{if}\ a\oplus b\ \cmd{then}\ \cmd{correct}\ Z_cX_t\ \cmd{else}\ \cmd{skip})
\end{aligned}$
\end{tabular}
\caption{Clifford gates using only Pauli measurements and conditional corrections.
Here $M_P(r)$ is $r:=\cmd{measure}\ P$. Every meter records the indicated bit.
An $\oplus$ node controls a correction when the two outcomes differ. Double end bars
discard an ancilla. Each following $|0\rangle$ begins a fresh scope. A gray $I$ factor
is drawn only inside a joint measurement column.
The output specification hides $a,b$.}
\Description{Explicit measurement sequences implement S, H, and CNOT. S measures
XZ and then Y on a zero ancilla, with a parity-controlled Z correction. H repeats
this with data factors Z, X, Z and three fresh ancillas. CNOT uses data factors
Z on control, X on target, then a negative joint XZX measurement, with Y readouts
and parity-controlled corrections.}
\label{fig:gate-gallery}
\end{figure}

Measuring $X_\nu P$ and then $Y_\nu$, and correcting $P$ when the outcomes differ,
implements the matrix $U(P)=\Pi^P_0+i\Pi^P_1$ after discarding $\nu$.
The identities $U(Z)=S$, $U(Z)U(X)U(Z)=e^{i\pi/4}H$, and
$U(-Z_cX_t)U(X_t)U(Z_c)=i\,\mathrm{CNOT}_{c,t}$ verify the Clifford sequences.
Global phases cancel in conjugation, including the sign of the final CNOT correction.
For $T$, supply $|M\rangle=(|0\rangle+e^{i\pi/4}|1\rangle)/\sqrt2$ on wire $\mu$.
The conditional branch performs the explicit $S$ sequence.

\begin{samepage}
\begin{theorem}[Clifford+$T$ implementation by PPM]
\label{thm:gate-compilation}
Every finite circuit over $H,S,\mathrm{CNOT},T$, with distinct control and target
wires for each CNOT, has a QPPM implementation with the same channel on its data
register. The implementation assumes one ideal $|M\rangle$ input per $T$ gate.
It uses basis-state ancillas, Pauli-product measurements, and conditional Pauli
corrections, and hides its private outcomes and consumed magic-state qubits.
\end{theorem}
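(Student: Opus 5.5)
The proof is compositional. We first verify a single \emph{measurement gadget} and derive each gate from it. We then combine gates along the circuit using the component contract~\eqref{eq:component-contract} and \eqref{wp:seq}.

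\textbf{The gadget.} For a signed Pauli product $P$ on the data register, let $G(P)$ be
\[
\cmd{let}\ \nu=\cmd{ancilla}\ |0\rangle\ \cmd{in}\ (M_{X_\nu P}(a)\seq M_{Y_\nu}(b)\seq \cmd{if}\ a\oplus b\ \cmd{then}\ \cmd{correct}\ P\ \cmd{else}\ \cmd{skip}).
\]
We show that $G(P)$ implements $W_P(A)(s)=U(P)^\dagger A(s)U(P)$, where $U(P)=\Pi^P_0+i\Pi^P_1$. Fix a public assertion $A\circ\pi$ and compute backwards.

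\begin{enumerate}
\item By \eqref{wp:if}, \eqref{wp:correct} and \eqref{wp:skip}, the conditional gives $P^{[a\oplus b]}(I_2\otimes A)P^{[a\oplus b]}$, after \eqref{wp:ancilla} has turned the postcondition into $\mathsf{drop}^*(A)=I_2\otimes A$. Because the assertion is public, it does not depend on $a,b$.
\item By \eqref{wp:measure}, the $Y_\nu$ measurement gives $\sum_{j}\Pi^{Y_\nu}_j\,(\cdots)[b\mapsto j]\,\Pi^{Y_\nu}_j$.
\item The $X_\nu P$ measurement gives the analogous sum over $a$.
\item Compressing by $E_0$ yields a sum of four terms of the form $K_{ij}^\dagger A K_{ij}$, one for each outcome pair $(i,j)$, with Kraus operators $K_{ij}=P^{[i\oplus j]}\,E_0^\dagger\Pi^{Y_\nu}_j\Pi^{X_\nu P}_i E_0$.
\end{enumerate}

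Expanding $\Pi^{X_\nu P}_i=\tfrac12(I+(-1)^iX_\nu P)$ and $\Pi^{Y_\nu}_j$, we use $\langle 0|X|0\rangle=0$, $\langle0|Y|0\rangle=0$, and $\langle 0|YX|0\rangle=-i$ (up to the fixed convention). This gives, for each $(i,j)$, the relation $K_{ij}\propto\tfrac12 P^{[i\oplus j]}(I+c_{ij}P)$ with $|c_{ij}|=1$. Each $K_{ij}$ therefore equals $\tfrac12$ times a unitary proportional to $U(P)$. The correction $P^{[i\oplus j]}$ is exactly what aligns the off-diagonal outcomes with the diagonal ones. Summing the four terms gives $U(P)^\dagger A\,U(P)$, since phases cancel in $K^\dagger AK$. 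The final assertion is independent of the initial $a,b$, which establishes the contract for every initial private value.

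\textbf{Clifford gates.} Sequencing contracts composes transformers, because $\WP(c\seq d,A\circ\pi)=\WP(c,(W_dA)\circ\pi)=(W_cW_dA)\circ\pi$. Hence $C_S$, $C_H$ and $C_{\mathrm{CNOT}}$ implement conjugation by $U(Z)$, $U(Z)U(X)U(Z)$ and $U(-Z_cX_t)U(X_t)U(Z_c)$, respectively. The matrix identities stated before the theorem identify these with $S$, $e^{i\pi/4}H$ and $i\,\mathrm{CNOT}$, and global phases drop out under conjugation. Gates on a subset of wires are embedded by padding $P$ with $I$ factors; the gadget proof is uniform in $n$. The distinct-wire hypothesis on CNOT is exactly what makes $Z_cX_t$ a valid Pauli product.

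\textbf{$T$ gates.} For each $T$ we open a scope on wire $\mu$ and interpret the supplied $|M\rangle$ through a variant of \eqref{wp:ancilla}: compression by $|M\rangle$ instead of $E_b$, justified as the ideal input assumption. The body is $M_{Z_dZ_\mu}(a)$, then $M_{X_\mu}(b)$, then a conditional $C_S$ on $d$, with an appropriate Pauli correction. The conditional $C_S$ reuses the contract just proved, so only the outer computation needs to be done directly. The same four-Kraus calculation shows that every outcome branch equals $\tfrac12 T$ up to phase. Hiding $a,b$ and discarding $\mu$ then give $W(A)=T^\dagger AT$. Consumed magic qubits are hidden because the scope ends.

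\textbf{Whole circuits.} Induction on the circuit length, using \eqref{wp:seq}, shows that the lowered program implements $A\mapsto U^\dagger AU$, where $U$ is the circuit's matrix. Duality~\eqref{eq:duality} then turns equality of weakest preconditions on all public assertions into equality of the public instrument with $U\rho U^\dagger$ at the unchanged public store, which is the claimed channel.

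The main obstacle is the gadget calculation: tracking signs and phases for each outcome pair, and setting up the embedding and context bookkeeping so that the result holds uniformly for arbitrary signed $P$. The $T$ injection is the secondary difficulty, because its conditional branch nests a whole component contract inside \eqref{wp:if}.
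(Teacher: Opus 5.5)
Your architecture is the paper's: a gadget contract, the $H$/CNOT matrix identities, an injection contract, and induction through public contracts valid for every initial private value. However, the gadget calculation in your step~4, which carries the whole proof, is wrong as written. Since $\mathsf{drop}^*(A)=I_2\otimes A=E_0AE_0^\dagger+E_1AE_1^\dagger$, compressing by $E_0$ yields eight terms $K_{xyj}^\dagger AK_{xyj}$ with $K_{xyj}=E_j^\dagger(I_2\otimes P^{x\oplus y})\Pi^{Y_\nu}_y\Pi^{X_\nu P}_xE_0$ and $j\in\{0,1\}$. Your $K_{ij}$ keeps only $j=0$: it replaces the discard $I_2\otimes A$ by $E_0AE_0^\dagger$, which projects the ancilla onto $|0\rangle$. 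Moreover, your own expansion (with $YX=-iZ$) gives $K_{ij}=\tfrac14P^{e}(I-i(-1)^{e}P)=\tfrac{1-i(-1)^{e}}{4}\,U(P)$ for $e=i\oplus j$. Its coefficient has squared magnitude $1/8$, not $1/4$, so your four terms sum to $\tfrac12U(P)^\dagger AU(P)$, contradicting $\WP(c,I)=I$. Your claim that each $K_{ij}$ is $\tfrac12$ times a unitary is false for the operator you defined, and it silently compensates for the missing half. The repair is the paper's eight-branch sum: $K_{xy1}=i(-1)^yK_{xy0}$, so all eight coefficients have squared magnitude $1/8$ and the sum is exactly $U(P)^\dagger AU(P)$. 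Equivalently, you can keep four branches but put $\langle y_j|\otimes I$ on the output side instead of $E_0^\dagger$, where $|y_j\rangle$ is the $Y$ eigenstate the readout leaves on $\nu$. Only with that choice is each branch $\tfrac12$ times a unitary.

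Your $T$ case also needs restating. QPPM scopes only prepare $|b\rangle$ with $b\in\{0,1\}$, and the logic has no $|M\rangle$ variant of \eqref{wp:ancilla}. A program that opens a magic-state scope is therefore not a QPPM implementation using basis-state ancillas. The paper stays inside QPPM by making $\mu$ a caller-supplied quantum input (context $(\mu,\Gamma)$) and hiding it through the lifted postcondition $D(B)(s)=I\otimes B(\pi(s))$. It states the ideal-input assumption as external compression, $V^\dagger\WP(C_P,D(B))(s)V=T_P^\dagger B(\pi(s))T_P$ with $V\psi=|M\rangle\otimes\psi$, and uses $V_k,D_k$ for whole circuits. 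Also watch the private bits: the nested $C_S$ overwrites $a$ and $b$. In your order, the $b$-controlled $Z$ correction must therefore precede the $a$-guarded $C_S$, which is allowed because they commute. The paper instead runs the guarded gadget before the $X_\mu$ readout writes $b$. Finally, the outer injection sum has the same discard subtlety as the gadget: the paper counts eight effective branches over $a$, $b$, and the discard index of $\mu$, each of squared magnitude $1/8$.
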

\begin{proof}
Each diagram implements its gate after summing over its private outcomes.
Section~\ref{sec:derive-phase} proves the two-measurement contract, and
Section~\ref{sec:magic-resource} proves the injection contract. The displayed
matrix identities give $H$ and CNOT. Induction on the source circuit composes
these contracts. Each holds for every initial value of the private bits.
Successive implementations can therefore reuse them.
\end{proof}
\end{samepage}

With the standard universality of Clifford+$T$~\cite{boykin1999universal}, which we take
from the literature, this gives universal quantum computation with PPM and supplied
magic states.

\subsection{Clifford+T compilation with magic states}
\label{sec:magic-resource}

For data context $\Gamma$, let $P$ be a signed product on $\Gamma$.
Injection takes quantum inputs $(\mu,\Gamma)$, with $\mu\notin\Gamma$, and distinct private
locations $a,b$ in $\Delta$. The map $\pi$ removes $a,b$ from the store. Set
$\omega=e^{i\pi/4}$, $|M\rangle=(|0\rangle+\omega|1\rangle)/\sqrt2$, and
$T_P=\Pi^P_0+\omega\Pi^P_1$.
The isometry $V\psi=|M\rangle\otimes\psi$ supplies this magic state on $\mu$. Let $C_P$ be the
displayed QPPM body of the standard injection construction~\cite{litinski2019game}.
Inside the local scope, the context is $(\nu,\mu,\Gamma)$. Products on the data
act as identity on $\nu,\mu$. For a public assertion $B$, put
$D(B)(s)=I\otimes B(\pi(s))$. The circuit and its contract are
\begin{equation}
\begin{gathered}
\begin{gathered}
\begin{tikzpicture}[x=.55cm,y=.7cm,
 baseline=(current bounding box.center),font=\scriptsize,>=Latex]
 \node[anchor=east] at (.35,0) {$|M\rangle_\mu$};
 \node[anchor=east] at (.35,-1.5) {data};
 \draw (.4,0) -- (9.65,0) (.4,-1.5) -- (9.65,-1.5);
 % Joint Z_mu tensor P measurement; the temporary wire does not yet exist.
 \fill[ppmz] (.71,.25) rectangle (1.19,-.75);
 \fill[white] (.71,-.75) rectangle (1.19,-1.75);
 \draw (.71,.25) rectangle (1.19,-1.75);
 \node at (.95,0) {$Z$}; \node at (.95,-1.5) {$P$};
 \draw[fill=white] (1.19,-.625) arc[start angle=90,end angle=-90,radius=.125] -- cycle;
 \draw (1.21,-.775) -- (1.29,-.685);
 \node[font=\scriptsize] at (.95,-1.95) {$a$};
 % The entire scoped branch is selected before its measurements overwrite a,b.
 \draw[dashed,rounded corners=2pt] (2.05,-.3) rectangle (6.65,-2.25);
 \node[fill=white,inner sep=2pt,font=\scriptsize] at (4.35,-.3)
  {$a=1$};
 \draw[double,double distance=.7pt,->] (1.14,-1.95) -- (2.02,-1.95);
 \node[anchor=east] at (2.95,-.75) {$|0\rangle_\nu$};
 \draw (3,-.75) -- (6.35,-.75);
 \pic at (3.35,-.75) {ppmtwo={X/P/a}};
 \pic at (4.65,-.75) {ppmone={Y/b}};
 \node[draw,fill=white,inner sep=2pt] (innerp) at (5.9,-1.5) {$P$};
 \node[fill=white,inner sep=1pt] (xor) at (5.9,-2.12) {$\oplus$};
 \draw[double,double distance=.5pt] (3.35,-2.07) |- (xor.west);
 \draw[double,double distance=.5pt] (4.65,-1.42) -- (4.65,-2.12);
 \draw[double,double distance=.5pt,->] (xor.north) -- (innerp.south);
 \draw (6.25,-.9) -- (6.25,-.6) (6.35,-.9) -- (6.35,-.6);
 \pic at (7.6,0) {ppmone={X/b}};
 \node[draw,fill=white,inner sep=2pt] (p) at (8.95,-1.5) {$P$};
 \draw[double,double distance=.5pt,->] (7.6,-.67) -- (7.6,-2.08) -| (p.south);
 \node[fill=white,inner sep=1pt,font=\scriptsize] at (8.2,-2.08) {$b=1$};
\end{tikzpicture}
\end{gathered}
\quad
\text{\small\setlength{\jot}{1pt}$\begin{aligned}
 C_P\equiv{}&M_{Z_\mu P}(a)\seq \\
 &(\cmd{if}\ a\ \cmd{then}\ \cmd{let}\ \nu=\cmd{ancilla}\ |0\rangle\ \cmd{in}\\
 &\quad(M_{X_\nu P}(a)\seq M_{Y_\nu}(b)\seq \\
 &\quad\cmd{if}\ a\oplus b\ \cmd{then}\ \cmd{correct}\ P\ \cmd{else}\ \cmd{skip})\\
 &\quad\cmd{else}\ \cmd{skip})\seq \\
 &M_{X_\mu}(b)\seq \\
 &\cmd{if}\ b\ \cmd{then}\ \cmd{correct}\ P\ \cmd{else}\ \cmd{skip}
\end{aligned}$}\\[3pt]
 V^\dagger\WP(C_P,D(B))(s)V=T_P^\dagger B(\pi(s))T_P.
\end{gathered}
\label{eq:injection-proof}
\end{equation}
The dashed branch reads its guard before overwriting $a,b$. It is skipped when $a=0$.
Double end bars discard the fresh ancilla $\nu$.
Once the inner phase gadget is summed to the unitary $U(P)$ (its own eight branches are
in Section~\ref{sec:derive-phase}), the injection has eight effective Kraus branches,
indexed by the joint outcome $a$, the readout $b$, and the basis index $j$ of the consumed
qubit $\mu$, which $D(B)$ ignores. Each branch is a scalar multiple of $T_P$ with squared
magnitude $1/8$. Summing them proves the equation and the public output
$T_P\rho T_P^\dagger$. Negating $P$ gives the inverse action. Setting $P=Z$ gives $T$.
Induction extends the contract to a finite Clifford+$T$ circuit $c$ with public readouts.
If $c$ uses $k$ injections, supply distinct inputs $\mu_0,\ldots,\mu_{k-1}$ before its
data register, set $V_k\psi=|M\rangle^{\otimes k}\otimes\psi$ and
$D_k(B)(s)=I\otimes B(\pi(s))$, and let $W_c$ compose the source instruction
transformers. The compiler leaves each consumed qubit idle and keeps temporary basis
ancillas scoped. Writing $\mathcal L(c)$ for the compiled QPPM program, we prove, for
every assertion $B$ and store $s$,
\begin{equation}
 V_k^\dagger\,\WP(\mathcal L(c),D_k(B))(s)\,V_k=W_c(B)(\pi(s)).
 \label{eq:circuit-contract}
\end{equation}
Hoare transfer is one application of Theorem~\ref{thm:hoare}: if $A\sqsubseteq W_c(B)$,
then $Q=\WP(\mathcal L(c),D_k(B))$ satisfies $\vdash\triple{Q}{\mathcal L(c)}{D_k(B)}$
and $A(\pi(s))\preceq V_k^\dagger Q(s)V_k$ for every $s$.
Replacing $|M\rangle$ by $|0\rangle$ changes the transformer to $B\mapsto\sum_a\Pi^P_a B\Pi^P_a$.
Thus the magic-state premise is necessary.

\subsection{Correctness of the phase measurement sequence}
\label{sec:derive-phase}

Fix data inputs $\Gamma$, private locations $a\ne b$, and a signed product $P$ on
$\Gamma$. The sequence of Figure~\ref{fig:gate-gallery}(a), with $P$ in place of $Z_d$,
binds a fresh zero ancilla $\nu$, measures $X\otimes P$ into $a$ and $Y\otimes I$ into
$b$, applies $I\otimes P$ when the outcomes differ, and discards $\nu$. Let
$x,y\in\{0,1\}$ be the recorded values, $j\in\{0,1\}$ the partial-trace index, and
$C_{xy}=I\otimes P^{x\oplus y}$. The data-register Kraus operator of one branch is
\begin{equation}
 \begin{gathered}
 \pwire{E_0}\!\pwire{\Pi^{X\otimes P}_x}\!\pwire{\Pi^{Y\otimes I}_y}\!\pwire{C_{xy}}\!\pwire{E_j^\dagger}\\[3pt]
 K_{xyj}=E_j^\dagger C_{xy}\Pi^{Y\otimes I}_y\Pi^{X\otimes P}_x E_0.
 \end{gathered}
 \label{eq:phase-kraus}
\end{equation}
Boxes are linear maps composed left to right. The matrix product runs right to left.
Set $e=x\oplus y$. Expanding the two projectors with the $X$ and $Y$ matrices and using
$P^2=I$ gives
\[
 K_{xy0}=\tfrac14 P^e\bigl(I-i(-1)^eP\bigr)=\tfrac{1-i(-1)^e}{4}\,U(P),\qquad
 K_{xy1}=i(-1)^y K_{xy0}.
\]
Thus $K_{xyj}=\alpha_{xyj}U(P)$ with $\alpha_{xyj}=[i(-1)^y]^j(1-i(-1)^{x\oplus y})/4$.
Writing $u=(1-i)/4$ and $v=(1+i)/4$, the eight coefficients are
\[
\begin{array}{cc|c|cc}
 x & y & \text{correction} & \alpha_{xy0} & \alpha_{xy1}\\\hline
 0&0&I&u&iu\\ 0&1&P&v&-iv\\ 1&0&P&v&iv\\ 1&1&I&u&-iu
\end{array}
\]
and every entry has squared magnitude $1/8$. Summing the branches therefore gives
\[
 \sum_{x,y,j}K_{xyj}^\dagger B K_{xyj}=U(P)^\dagger B U(P),\qquad
 \dtriple{S^\dagger B S}{\pwire{C_S}}{B}\quad(P=Z).
\]
Here $B$ ignores the two private bits. Both are overwritten before the guard reads them,
and public bits are preserved, so the contract holds for every initial value of the
private bits. The proof retains the off-diagonal blocks of joint assertions, so the
contract also holds when the data qubit is entangled with an idle qubit of the caller.

\section{Verified optimization}
\label{sec:optimization}

The QPPM optimizer, denoted $\mathcal{O}$, combines measurement-run optimization, adjacent
repetition and empty-scope cleanup, and lowering to unsigned products. A negative measurement
becomes an unsigned measurement followed by a flip of its result bit. The sign of a correction
can be removed because its two conjugating factors cancel the global sign.

The nonlocal rewrite follows a pending measurement through a sequence of measurements.
It can cross a measurement of $Q$ when $P$ and $Q$ commute and the intervening destination
does not overwrite the pending result. A later measurement of the same signed $P$ can then
copy the saved result. The commutation test is executable and its matrix interpretation is
proved. The intervening measurements need not commute with each other.

In Figure~\ref{fig:measurement-rewrite}, with inputs $\Gamma=(q_0,q_1)$ and
$\Delta=(r_0,r_1,r_2,r_3)$, the first measurement of $Z\otimes I$ moves past
$I\otimes X$ and $I\otimes Z$, which anticommute with each other but commute with
$Z\otimes I$. The two identical measurements are then adjacent, so the last is replaced
by a copy. The result has three measurements and retains all four classical outcomes.

\begin{figure}[htbp]
\centering
\begin{tikzpicture}[font=\small,>=Latex]
 \node[font=\small\bfseries] at (1.7,.75) {Original program};
 \node[font=\small\bfseries] at (6.3,.75) {Commuted measurements};
 \node[font=\small\bfseries] at (10.35,.75) {Result reuse};
 \node[anchor=east] at (.15,0) {$q_0$};
 \node[anchor=east] at (.15,-.75) {$q_1$};
 \draw (.25,0) -- (3.32,0);
 \draw (.25,-.75) -- (3.32,-.75);
 \pic at (.7,0) {ppmtwo={Z/I/r_0}};
 \pic at (1.45,0) {ppmtwo={I/X/r_1}};
 \pic at (2.2,0) {ppmtwo={I/Z/r_2}};
 \pic at (2.95,0) {ppmtwo={Z/I/r_3}};
 \draw[->,thick] (3.55,-.375) -- (4.22,-.375);
 \begin{scope}[xshift=4.6cm]
   \draw (.2,0) -- (3.27,0);
   \draw (.2,-.75) -- (3.27,-.75);
   \pic at (.65,0) {ppmtwo={I/X/r_1}};
   \pic at (1.4,0) {ppmtwo={I/Z/r_2}};
   \pic at (2.15,0) {ppmtwo={Z/I/r_0}};
   \pic at (2.9,0) {ppmtwo={Z/I/r_3}};
 \end{scope}
 \draw[->,thick] (8.08,-.375) -- (8.75,-.375);
 \begin{scope}[xshift=9cm]
   \draw (.2,0) -- (2.65,0);
   \draw (.2,-.75) -- (2.65,-.75);
   \pic at (.6,0) {ppmtwo={I/X/r_1}};
   \pic at (1.35,0) {ppmtwo={I/Z/r_2}};
   \pic at (2.1,0) {ppmtwo={Z/I/r_0}};
   \node[draw,fill=ppmclassical,minimum width=1.9cm,minimum height=.4cm]
     (copy) at (1.35,-1.85) {$r_3:=r_0$};
   \draw[double,double distance=1pt,->] (2.1,-1.37) -- (2.1,-1.55) -- (copy.north east);
 \end{scope}
 \node[draw,rounded corners=2pt,align=center,text width=5.65cm,minimum height=1.1cm]
   at (2.9,-2.75)
   {Commutation: $PQ=QP$, $r\ne t$\\[4pt]
    $M_P(r)\seq M_Q(t)\equiv M_Q(t)\seq M_P(r)$};
 \node[draw,rounded corners=2pt,align=center,text width=5.65cm,minimum height=1.1cm]
   at (8.95,-2.75)
   {Repeated measurement\\[4pt]
    $M_P(r)\seq M_P(t)\equiv M_P(r)\seq t:=r$};
\end{tikzpicture}
\caption{Measurement elimination in QPPM. Each outlined column is one measurement and the
label below it names the destination bit. All four destinations are distinct. The first
$Z\otimes I$ measurement moves past $I\otimes X$ and $I\otimes Z$, then its result replaces
the final measurement. The green box is a classical assignment. The two rules preserve
both the quantum output and every final bit.}
\Description{Three two-qubit circuits show four Pauli-product measurements, the same
measurements after commuting the first past the middle two, and three measurements followed
by a classical copy. Boxes below state commutation for commuting products with distinct
result destinations and the repeated-measurement rule.}
\label{fig:measurement-rewrite}
\end{figure}

With inputs $\Gamma=(q)$ and $\Delta=(r,t)$, the following attempted rewrite copies a result
that the middle measurement has overwritten:
\[
 \underbrace{\pmeter[I]{I}{r}\!\pmeter[I]{-I}{r}\!\pmeter[I]{I}{t}}_{r=1,\ t=0}
 \quad\not\equiv\quad
 \underbrace{\pmeter[I]{I}{r}\!\pmeter[I]{-I}{r}\!\pwire{t:=r}}_{r=1,\ t=1}.
\]
The scanner stops at such overwrites and at noncommuting products.
Assignments, corrections, conditionals, and
scopes are scan barriers. The optimizer recursively visits conditional arms and scope bodies.

\begin{theorem}[Exact optimization]
\label{thm:optimizer}
For every $c$, $\mathcal{O}(c)\equiv c$. The result uses only unsigned products and
has no more syntactic measurement nodes than $c$. It has the same forward output matrix
for each input store, input matrix, and final store, and the same derivable Hoare triples.
\end{theorem}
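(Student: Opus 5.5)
The proof decomposes $\mathcal O$ into its passes and shows that each pass is an exact equivalence. The passes are measurement-run optimization, adjacent-repetition and empty-scope cleanup, and sign lowering. Since $\equiv$ is equality of $\WP$ for every assertion, it is reflexive, symmetric and transitive. It is also a congruence: by \eqref{wp:seq}, \eqref{wp:if} and \eqref{wp:ancilla}, $\WP$ of a compound command depends only on $\WP$ of its parts, so replacing a subcommand by an equivalent one preserves $\WP$. Because the optimizer recursively visits conditional arms and scope bodies, it therefore suffices to verify a finite list of local rewrite lemmas, each at a single position. The whole-program statement then follows by structural induction on $c$, composing the passes by transitivity.

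The local lemmas, each checked directly from the $\WP$ equations:
\begin{enumerate}
\item \emph{Commutation.} If $PQ=QP$ and $r\ne t$, then $M_P(r)\seq M_Q(t)\equiv M_Q(t)\seq M_P(r)$. The projectors $\Pi^P_j$ and $\Pi^Q_k$ commute, and the two store updates commute because $r\ne t$, so the double sums from \eqref{wp:measure} agree term by term.
\item \emph{Repetition.} $M_P(r)\seq M_P(t)\equiv M_P(r)\seq t:=r$. This uses $\Pi^P_j\Pi^P_k=[j=k]\Pi^P_j$ from \eqref{eq:projector-laws}.
\item \emph{Signs.} $M_{-P}(r)\equiv M_P(r)\seq r:=\neg r$, since $\Pi^{-P}_j=\Pi^P_{1-j}$. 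Also $\cmd{correct}\ (-P)\equiv\cmd{correct}\ P$, since $(-P)A(-P)=PAP$.
\item \emph{Cleanup.} An empty scope is equivalent to $\cmd{skip}$, because $E_b^\dagger(I_2\otimes A)E_b=A$. Removing $\cmd{skip}$ is justified by \eqref{wp:skip}.
\end{enumerate}

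The main obstacle is the nonlocal scan. A pending $M_P(r)$ is moved across a run $M_{Q_1}(t_1)\seq\cdots\seq M_{Q_k}(t_k)$ and then used to replace a later $M_P(t)$ by $t:=r$. I would state this as an invariant proved by induction on the run length. Suppose $M_P(r)$ followed by the rest of the run is equivalent to the rewritten sequence. If every $Q_i$ commutes with $P$ and every $t_i\ne r$, the invariant is maintained by applying lemma (1) once per step and then lemma (2) at the end. The executable commutation test must be linked to the matrix identity $PQ=QP$; I would take this from the proved interpretation mentioned before the theorem. The scanner's stopping conditions (overwrites, noncommuting products, and barriers) are exactly the cases where the hypotheses of the invariant fail. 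So the proof must show that the scanner only fires when those hypotheses hold. The counterexample displayed before the theorem shows that the side condition $t_i\ne r$ cannot be dropped.

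The remaining claims follow from this. Lemmas (1)--(4) never add measurements, and lemma (2) removes one, so a simple count per pass shows that the number of measurement nodes does not increase. Every signed product is rewritten by lemma (3), which gives unsigned output. Equal $\WP$ implies equal derivable Hoare triples by Theorem~\ref{thm:hoare}. Equal output matrices for each final store follow from duality \eqref{eq:duality}: instantiate $A$ with $A(t')=[t'=t]\,Z$ for an arbitrary matrix $Z$. This determines $\tr(\Inst_c(s,\rho,t)Z)$ for every $Z$, and hence determines $\Inst_c(s,\rho,t)$ itself. The mechanization admits arbitrary complex assertions, so such $Z$ are allowed.
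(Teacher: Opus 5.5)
Your proposal is correct and follows essentially the paper's own route. The paper proves the scanner invariant $\mathsf{scan}(P,r,d)\equiv M_P(r)\seq d$ from projector commutation, record substitution and repetition elimination, and lifts it to whole programs by reassociation and structural induction. It then composes the cleanup and sign-lowering passes, proves the count bound by a separate induction, and obtains the output blocks from duality and the Hoare triples from Theorem~\ref{thm:hoare}, as you do.

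One refinement: quantify your run invariant over an arbitrary continuation $d$. When the scanner stops at a barrier without finding a match, it still relocates $M_P(r)$ past the commuting prefix, a case your lemma (1) already covers.
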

\begin{proof}
The rewrites in Figure~\ref{fig:measurement-rewrite} show that scanning preserves the
pending measurement followed by its continuation, $\mathsf{scan}(P,r,d)\equiv M_P(r)\seq d$,
using projector commutation, record substitution, and repetition elimination. Sequence
reassociation and structural induction lift the invariant to whole programs, and
composition with the cleanup and sign-lowering passes proves the result. Count
inequalities are proved separately by induction. Duality gives equality of every output
block. Theorem~\ref{thm:hoare} gives equivalence of the derivable Hoare specifications.
\end{proof}
The count adds both arms of a conditional.

\subsection{Rotation fusion and scheduling}
\label{sec:rotation-schedule}

For the phase unitary $U(P)=(I+P)/2+i(I-P)/2$, we prove $U(P)^2=P$ and $U(P)^4=I$, so
a quarter-turn count $k\in\{0,1,2,3\}$ expands to skip, the two-measurement gadget, a
Pauli correction, or the inverse gadget. We optimize these source rotations before
expanding them to QPPM programs. Their component specifications hold for every signed
product and register width.
With data inputs $\Gamma=(q_0,q_1)$, put $P=Z\otimes I$ and $Q=I\otimes Z$. The
following diagram places each source unitary on its data wire. Commutation moves the
first box. Adding turns modulo four removes the upper pair.
\begin{equation}
\begin{tikzpicture}[baseline=(current bounding box.center),font=\small,>=Latex]
 \foreach \yy/\lab in {0/{\text{before}},-1.12/{\text{swap}},-2.24/{\text{fuse}}} {
  \node[anchor=east] at (-.4,\yy-.24) {$\lab$};
  \draw (0,\yy) -- (5,\yy);
  \draw (0,\yy-.48) -- (5,\yy-.48);
 }
 \node[draw,pauli/Z] at (.8,0) {$U(Z)$};
 \node[draw,pauli/Z] at (2.4,-.48) {$U(Z)$};
 \node[draw,pauli/Z] at (4,0) {$U(Z)^3$};
 \node[draw,pauli/Z] at (.8,-1.6) {$U(Z)$};
 \node[draw,pauli/Z] at (2.4,-1.12) {$U(Z)$};
 \node[draw,pauli/Z] at (4,-1.12) {$U(Z)^3$};
 \node[draw,pauli/Z] at (.8,-2.72) {$U(Z)$};
 \node[anchor=west,align=left] at (5.5,-.24)
  {$P=Z\otimes I,\ Q=I\otimes Z$\\$U(P)\seq U(Q)\seq U(P)^3$};
 \node[anchor=west,align=left] at (5.5,-1.36)
  {$PQ=QP$\\$U(Q)\seq U(P)\seq U(P)^3$};
 \node[anchor=west,align=left] at (5.5,-2.48)
  {$U(P)^4=I$\\$U(Q)$};
 \draw[->] (4.8,-.64) -- (4.8,-.89);
 \draw[->] (4.8,-1.76) -- (4.8,-2.01);
\end{tikzpicture}
\label{eq:rotation-proof}
\end{equation}
With two private ancillas, the parallel baseline groups the first two disjoint rotations,
costing four logical rounds overall. The result costs two. Lean checks
this derivation and both costs.

The scheduling checker replays each proposed edit after verifying its commutation and
classical-dependency conditions, then compares the result with the proposed schedule.
Induction proves equality of public weakest preconditions, and composing gadget
contracts transfers every public Hoare specification and output block. Private
measurement results may differ. Rotation motion stops at readouts, and swapping readouts
requires distinct destinations.

For the two disjoint gadgets, allocation binds fresh $\nu_0,\nu_1$ and private locations
$\Delta=(a_0,b_0,a_1,b_1)$. The labels $J_i$, $R_i$, and $F_i$ denote the joint measurement into
$a_i$, the ancilla readout into $b_i$, and the conditional correction of gadget $i$.
Each wire groups a gadget's data, ancilla, and private bits. Dashed columns mark
simultaneous logical rounds after three guarded exchanges:
\begin{equation}
\begin{tikzpicture}[baseline=(current bounding box.center),font=\small,>=Latex]
 \foreach \y/\i in {0/0,-.65/1} {
  \node[anchor=east] at (0,\y) {$\text{gadget }\i$};
  \draw (.1,\y) -- (4.55,\y);
  \draw (6.1,\y) -- (9.15,\y);
 }
 \foreach \x/\y/\t in {.5/0/J_0,1.2/0/R_0,1.9/0/F_0,2.7/-.65/J_1,3.4/-.65/R_1,4.1/-.65/F_1,
  6.6/0/J_0,7.6/0/R_0,8.6/0/F_0,6.6/-.65/J_1,7.6/-.65/R_1,8.6/-.65/F_1} {
  \node[draw,fill=white,inner sep=3pt] at (\x,\y) {$\t$};
 }
 \draw[->,thick] (4.85,-.325) -- node[above,font=\scriptsize] {$\simeq$} (5.75,-.325);
 \draw[dashed] (6.6,.38) -- (6.6,-1.03) (7.6,.38) -- (7.6,-1.03);
 \node at (2.3,-1.2) {four logical rounds};
 \node at (7.6,-1.2) {two logical rounds};
\end{tikzpicture}
\label{eq:schedule-proof}
\end{equation}
$J_1$ moves left across $F_0$ and $R_0$. Then $R_1$ moves left across $F_0$.
Quantum commutation and independent result bits justify each exchange. The checker verifies disjoint supports,
distinct private bits, and the ancilla budget, and we prove that the allocated program
has the macros' public weakest precondition. We charge two logical rounds per odd turn,
one per readout, and zero for corrections. Routing and hardware timing remain outside
this model.

\subsection{Compilation of a complete circuit}
\label{sec:two-qubit-compilation}

The source has inputs $\Gamma=(q_0,q_1)$ and $\Delta=()$.
Figure~\ref{fig:two-qubit-compilation} applies $S$ to each qubit, CNOT from $q_0$ to $q_1$,
$R_\theta=\operatorname{diag}(1,e^{i\theta})$ to $q_0$, and $X$ to $q_1$.
Here $\theta=\pi/2+\delta$ and $\delta=\pi/2^{26}$. We round $R_\theta$ to $S$,
then compile and optimize. Let $U_\theta$ be its circuit matrix and
$W_\theta(B)=U_\theta^\dagger B U_\theta$ the
source transformer and $W_0$ the rounded circuit's transformer.

\begin{figure}[htbp]
\centering
\begin{tikzpicture}[font=\small,>=Latex]
 \foreach \yy/\lab in {0/source,-2.0/rotations,-4.0/optimized} {
   \node[anchor=east,font=\scriptsize] at (-.15,\yy-.375) {\lab};
   \draw (.1,\yy) -- (7.7,\yy) (.1,\yy-.75) -- (7.7,\yy-.75);
 }
 \node[anchor=east] at (.1,.27) {$q_0$};
 \node[anchor=east] at (.1,-1.03) {$q_1$};
 \foreach \yy in {0,-.75} {
   \node[draw,pauli/Z,minimum size=.46cm] at (1,\yy) {$S$};
 }
 \fill (3,0) circle (2pt);
 \draw (3,0) -- (3,-.75);
 \draw (3,-.75) circle (.14);
 \node[draw,pauli/Z,minimum height=.46cm] at (5,0) {$R_\theta$};
 \node[draw,pauli/X,minimum size=.46cm] at (7,-.75) {$X$};
 \node[anchor=west,align=left,font=\scriptsize] at (8.05,-.37)
   {$R_\theta\rightsquigarrow S$\\$\varepsilon=2^{-20}$};
 \draw[->] (7.4,-1.0) -- (7.4,-1.65);
 \node[anchor=west,font=\scriptsize] at (8.05,-1.3) {round and expand};
 % The baseline uses four two-round layers.
 \foreach \xx in {1,3,5,7} {
   \draw[dashed,gray] (\xx,-1.6) -- (\xx,-3.16);
 }
 \foreach \xx in {1,3,5} {
   \draw[dashed,gray] (\xx,-3.6) -- (\xx,-5.16);
 }
 \foreach \xx/\yy/\axis/\txt in {
   1/-2.0/Z/{U(Z)},1/-2.75/Z/{U(Z)},
   3/-2.0/Z/{U(Z)},3/-2.75/X/{U(X)},
   7/-2.0/Z/{U(Z)},7/-2.75/X/{X},
   1/-4.0/Z/{U(Z)^3},1/-4.75/Z/{U(Z)},
   3/-4.75/X/{U(X)^3}} {
   \node[draw,pauli/\axis,inner sep=3pt] at (\xx,\yy) {$\txt$};
 }
 \foreach \yy in {-2,-4} {
   \fill[ppmz] (4.63,\yy+.22) rectangle (5.37,\yy-.375);
   \fill[ppmx] (4.63,\yy-.375) rectangle (5.37,\yy-.97);
   \draw (4.63,\yy+.22) rectangle (5.37,\yy-.97);
   \node at (5,\yy) {$Z$};
   \node at (5,\yy-.75) {$X$};
   \node[anchor=west,font=\scriptsize,fill=white,inner sep=1pt] at (5.43,\yy-.375)
     {$U(\cdot)^3$};
 }
 \node[anchor=west,align=left,font=\scriptsize] at (8.05,-2.37)
   {6 gadgets, 12 PPMs\\4 layers, 8 rounds};
 \node[anchor=west,align=left,font=\scriptsize] at (8.05,-4.37)
   {4 gadgets, 8 PPMs\\3 layers, 6 rounds};
 \draw[->] (7,-3.12) -- node[right,font=\scriptsize] {commute and fuse} (7,-3.65);
\end{tikzpicture}
\caption{Source unitaries, rotations after certified rounding, and optimization. A split
blue--orange box denotes $U(Z\otimes X)^3$. Each dashed column costs two logical
rounds. The unmarked $X$ correction costs zero. The baseline already runs disjoint
rotations in parallel. Figure~\ref{fig:two-qubit-ppm} shows the individual Pauli measurements.}
\Description{Two data wires carry S gates, a control-zero target-one CNOT, a phase rotation
on qubit zero, and X on qubit one. Rounding the phase to S gives six odd Pauli turns.
Commuting and fusing gives an inverse Z turn on qubit zero parallel to an S on qubit one,
then an inverse X turn on qubit one, then an inverse ZX turn on both wires.}
\label{fig:two-qubit-compilation}
\end{figure}

\paragraph{Step 1: certify the approximation.}
The bound $|e^{i\theta}-i|\leq\delta<2^{-24}$ gives the conservative allowance
$\varepsilon=16\cdot2^{-24}=2^{-20}<9.54\times10^{-7}$: expanding
$(S\otimes I)^\dagger B(S\otimes I)-(R_\theta\otimes I)^\dagger B(R_\theta\otimes I)$ in
matrix units gives sixteen terms, each of operator norm at most $2^{-24}$ for an effect
$B$. All other gates are exact, so
$-\varepsilon I\sqsubseteq W_\theta(B)-W_0(B)\sqsubseteq\varepsilon I$
for every two-qubit effect $B$, including observations on entangled inputs.

\paragraph{Step 2: expand the gates into measurement gadgets.}
Put $P=Z\otimes I$, $Q=I\otimes Z$, and $R=I\otimes X$, so $PR=Z\otimes X$.
The S gates become $U(P)$ or $U(Q)$, CNOT becomes
$U(P)\seq U(R)\seq U(PR)^3$, and X becomes the correction $R$.
Each odd turn uses two measurements and feedback through the phase gadget.
The expanded circuit therefore has twelve measurements.

\paragraph{Step 3: commute, fuse, and allocate.}
Since $P$ commutes with $Q,R,PR$, move its three turns together and fuse them into
$U(P)^3$. Since $R$ commutes with $PR$, move the final correction next to $U(R)$
and fuse them into $U(R)^3$. This gives
\[
 \underbrace{U(Q)\seq U(P)^3}_{\text{disjoint supports}}\seq U(R)^3\seq U(PR)^3.
\]
Six gadgets become four. The first two run in parallel. The others share $q_1$
and run in order (Figure~\ref{fig:two-qubit-ppm}), with at most two
ancillas and four private bits. Against the parallel baseline, measurements fall
from twelve to eight and depth from eight to six logical rounds, a 25\% reduction.
Pauli feedback costs zero logical rounds.

\begin{figure}[htbp]
\centering
\begin{tikzpicture}[font=\small,>=Latex,x=.98cm,y=.98cm]
 % Dashed columns are simultaneous primitive measurement rounds.
 \foreach \xx/\rr in {1.65/1,2.7/2,5.05/3,6.05/4,8.5/5,9.5/6} {
   \draw[dashed,gray] (\xx,.55) -- (\xx,-2.55);
   \node[font=\scriptsize] at (\xx,.82) {\rr};
 }
 \node[anchor=east,font=\scriptsize] at (.7,.82) {round};
 \draw (.65,-.75) -- (11.55,-.75) (.65,-1.5) -- (11.55,-1.5);
 \node[anchor=east] at (.65,-.75) {$q_0$};
 \node[anchor=east] at (.65,-1.5) {$q_1$};
 \node[anchor=east] at (.9,0) {$|0\rangle$};
 \node[anchor=east] at (.9,-2.25) {$|0\rangle$};
 \draw (.95,0) -- (3.95,0) (.95,-2.25) -- (3.95,-2.25);
 % U(P)^3 on the upper pair; U(Q) on the lower pair.
 \pic at (1.65,0) {ppmtwo={X/Z/r_P}};
 \node[fill=ppmx,inner sep=0pt] at (1.65,0) {$-X$};
 \pic at (1.65,-1.5) {ppmtwo={Z/X/r_Q}};
 \pic at (2.7,0) {ppmone={Y/s_P}};
 \pic at (2.7,-2.25) {ppmone={Y/s_Q}};
 \node[draw,pauli/Z,minimum size=.43cm] (cp) at (3.5,-.75) {$Z$};
 \node[draw,pauli/Z,minimum size=.43cm] (cq) at (3.5,-1.5) {$Z$};
 \node[draw,fill=ppmclassical,inner sep=2pt,font=\scriptsize] (fp) at (3.5,-.25) {$f_P$};
 \node[draw,fill=ppmclassical,inner sep=2pt,font=\scriptsize] (fq) at (3.5,-2) {$f_Q$};
 \draw[double,double distance=.6pt] (fp.south) -- (cp.north);
 \draw[double,double distance=.6pt] (fq.north) -- (cq.south);
 \foreach \yy in {0,-2.25} {
   \draw (3.93,\yy-.13) -- (3.93,\yy+.13) (4,\yy-.13) -- (4,\yy+.13);
 }
 % Reinitialize the lower ancilla for U(R)^3.
 \node[anchor=east,font=\scriptsize] at (4.65,-2.25) {$|0\rangle$};
 \draw (4.7,-2.25) -- (7.35,-2.25);
 \pic at (5.05,-1.5) {ppmtwo={X/X/r_R}};
 \node[fill=ppmx,inner sep=0pt] at (5.05,-2.25) {$-X$};
 \pic at (6.05,-2.25) {ppmone={Y/s_R}};
 \node[draw,pauli/X,minimum size=.43cm] (cr) at (6.8,-1.5) {$X$};
 \node[draw,fill=ppmclassical,inner sep=2pt,font=\scriptsize] (fr) at (6.8,-2) {$f_R$};
 \draw[double,double distance=.6pt] (fr.north) -- (cr.south);
 \draw (7.3,-2.38) -- (7.3,-2.12) (7.37,-2.38) -- (7.37,-2.12);
 % Final three-body measurement for U(PR)^3.
 \node[anchor=east,font=\scriptsize] at (8.03,-2.25) {$|0\rangle$};
 \draw (8.08,-2.25) -- (11.2,-2.25);
 \fill[ppmz] (8.26,-.5) rectangle (8.74,-1.125);
 \fill[ppmx] (8.26,-1.125) rectangle (8.74,-2.5);
 \draw (8.26,-.5) rectangle (8.74,-2.5);
 \node at (8.5,-.75) {$Z$};
 \node at (8.5,-1.5) {$X$};
 \node at (8.5,-2.25) {$-X$};
 \draw[fill=white] (8.74,-1.36) arc[start angle=90,end angle=-90,radius=.14] -- cycle;
 \draw (8.76,-1.535) -- (8.85,-1.42);
 \node[font=\scriptsize] at (8.5,-2.7) {$r_{PR}$};
 \pic at (9.5,-2.25) {ppmone={Y/s_{PR}}};
 \node[draw,pauli/Z,minimum size=.43cm] (czx) at (10.45,-.75) {$Z$};
 \node[draw,pauli/X,minimum size=.43cm] (cxx) at (10.45,-1.5) {$X$};
 \draw (czx.south) -- (cxx.north);
 \node[draw,fill=ppmclassical,inner sep=2pt,font=\scriptsize] (fzx) at (10.45,-2) {$f_{PR}$};
 \draw[double,double distance=.6pt] (fzx.north) -- (cxx.south);
 \draw (11.16,-2.38) -- (11.16,-2.12) (11.23,-2.38) -- (11.23,-2.12);
 \node[anchor=west,font=\scriptsize] at (.65,-3.15)
   {Set $f_j=r_j\oplus s_j$. Apply the unmetered correction iff $f_j=1$.};
\end{tikzpicture}
\caption{Inputs: $\Gamma=(q_0,q_1)$, $\Delta=(r_Q,s_Q,r_P,s_P)$, all bits private.
Eight measurements occupy six rounds.
The first two rounds use two fresh ancillas. Subsequent pairs each prepare and discard
one ancilla. Double end bars denote discard. The later labels are aliases:
$(r_R,s_R)=(r_{PR},s_{PR})=(r_Q,s_Q)$. The expression $f_j$ abbreviates a guard, not a stored bit.
A minus sign changes the measured product.
Unsigned lowering implements it by flipping that measurement's recorded bit.}
\Description{Two data wires and two temporary ancillas implement four gadgets.
In rounds one and two, negative XZ and ZX measurements on disjoint pairs are followed
by Y readouts on the ancillas and controlled Z corrections on the data.
Rounds three and four measure negative XX and read out Y, correcting X on qubit one.
Rounds five and six measure negative ZXX on the two data qubits and a fresh ancilla,
read out Y, and correct ZX. Each correction is controlled by the parity of its pair of records.}
\label{fig:two-qubit-ppm}
\end{figure}

\paragraph{Step 4: retain the specification and allowance.}
For the allocated program $d$, let $\pi$ forget the four private bits.
Exact rewriting and allocation give $\WP(d,B\circ\pi)=W_0(B)\circ\pi$.
Thus every effect $B$ and precondition $A\sqsubseteq W_\theta(B)$ give
\[
 \dtriple{(A-2^{-20}I)\circ\pi}{\pwire{d}}{B\circ\pi}.
\]
For every normalized input and initial private store, public effect expectations
therefore differ from the source by at most $2^{-20}$. Optimization adds zero error.

\section{Approximate compilation}
\label{sec:approximation}

An approximate implementation may change the probability of an observation. We bound this
change uniformly over all effects. For commands of the same type, write $c\approx_\varepsilon d$ when $\varepsilon\geq0$ and,
for every effect $B$,
\[
 \WP(c,B)\sqsubseteq\WP(d,B)+\varepsilon I
 \quad\text{and}\quad
 \WP(d,B)\sqsubseteq\WP(c,B)+\varepsilon I.
\]
This relation bounds observations on the entire declared quantum
register and the retained classical records, uniformly over the initial classical store.
It is defined on the declared register only: a certified subroutine must be recertified
when it is placed inside a larger register, and the per-rotation certificates of
Section~\ref{sec:source-errors} are therefore computed on the full register of each
circuit. The following theorem uses this relation to transfer an exact Hoare
specification to an approximate implementation.

\begin{theorem}[Composition and specification transfer]
\label{thm:approximation}
Sequential approximation allowances add. Conditional approximations with the same guard use the
maximum of their branch allowances. Applying the same ancilla preparation and scope to two
approximated bodies preserves their allowance. If $c\approx_\varepsilon d$,
$\vdash\triple{A}{c}{B}$, and $B$ is an effect, then
$\vdash\triple{A-\varepsilon I}{d}{B}$.
\end{theorem}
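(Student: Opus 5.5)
The plan is to work entirely with the weakest-precondition characterization. By Theorem~\ref{thm:hoare}, every derivability claim is equivalent to a Löwner inequality against $\WP$. The proof then uses three facts proved earlier: $\WP$ is linear and monotone, $\WP(c,I)=I$, and $\WP$ maps effects to effects.

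\emph{Sequencing.} Suppose $c\approx_\varepsilon c'$ and $d\approx_\delta d'$, and fix an effect $B$. First, $\WP(d,B)$ is an effect, so the hypothesis on $c$ applies with that postcondition:
\[
\WP(c,\WP(d,B))\sqsubseteq\WP(c',\WP(d,B))+\varepsilon I.
\]
Next I apply monotonicity and linearity of $\WP(c',\cdot)$ to $\WP(d,B)\sqsubseteq\WP(d',B)+\delta I$. Together with $\WP(c',I)=I$, this gives
\[
\WP(c',\WP(d,B))\sqsubseteq\WP(c',\WP(d',B))+\delta I.
\]
Chaining the two inequalities and using \eqref{wp:seq} yields the bound $\varepsilon+\delta$. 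The symmetric direction is identical.

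\emph{Conditionals.} For conditionals with a common guard $e$, equation \eqref{wp:if} evaluates both sides pointwise in the store $s$ on the same branch. The bound at each $s$ is therefore that branch's allowance, and it is dominated by the maximum of the two.

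\emph{Ancilla scopes.} Apply the body hypothesis to $\mathsf{drop}^*(B)=I_2\otimes B$, which is again an effect. Then compress both sides by $\mathsf{prep}_b^*$. This map is positive and linear, and it satisfies $E_b^\dagger(\varepsilon I)E_b=\varepsilon I$ because $E_b$ is an isometry. Equation \eqref{wp:ancilla} then gives the same allowance for the scopes.

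\emph{Specification transfer.} By soundness, $A\sqsubseteq\WP(c,B)$. Since $B$ is an effect, the definition of $c\approx_\varepsilon d$ gives $\WP(c,B)\sqsubseteq\WP(d,B)+\varepsilon I$. Hence $A-\varepsilon I\sqsubseteq\WP(d,B)$, and completeness in Theorem~\ref{thm:hoare} gives $\vdash\triple{A-\varepsilon I}{d}{B}$.

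\emph{Main obstacle.} I expect the main care point to be bookkeeping rather than any deep inequality. The definition of $\approx$ quantifies only over effects. Each composition step therefore needs the intermediate assertions, $\WP(d,B)$ and $\mathsf{drop}^*(B)$, to be effects, and these are exactly the effect-preservation facts about $\WP$ and $\mathsf{drop}^*$. The constant shift $\varepsilon I$ must also be pushed through $\WP(c',\cdot)$ and through $\mathsf{prep}_b^*$ unchanged, which uses unitality ($\WP(c',I)=I$, $E_b^\dagger E_b=I$). In the mechanization the type change $\mathcal C_{n+1,m}$ versus $\mathcal C_{n,m}$ at the scope boundary is the likely fiddly point.
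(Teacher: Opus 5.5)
Your proposal is correct and follows the paper's proof essentially step for step. The sequencing chain matches the paper's appendix argument: apply the $c$-premise to the effect $\WP(d,B)$, then push the $\delta I$ slack through the monotone, linear, unital $\WP(c',\cdot)$. The other cases match as well: storewise branch selection for conditionals, compression by $E_b$ with $E_b^\dagger E_b=I$ for scopes, and soundness plus the first approximation inequality plus completeness for specification transfer.
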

\begin{proof}
Composition uses effect preservation, monotonicity, and identity preservation:
$\WP(d',B)$ is an effect, $\WP(c,\cdot)$ is monotone, and $\WP(c,\delta I)=\delta I$,
so the two allowances add. The conditional selects one branch in each input store.
For scopes, \eqref{wp:ancilla} lifts effects and preserves order. Hoare transfer follows
from soundness, the first approximation inequality, and completeness.
\end{proof}
For an event guaranteed by an exact triple with precondition $I$, this yields probability
at least $1-\varepsilon$ on positive trace-one inputs, and exact optimization preserves
the relation, so applying $\mathcal{O}$ to both programs does not change the valid allowances.

\paragraph{Trace-distance interpretation.}
Let $\Delta^{(t)}=\Inst_c(s,\rho,t)-\Inst_d(s,\rho,t)$ for a positive trace-one input, and let
$\|\cdot\|_1$ be the spectral trace norm. Then
\begin{equation}
 c\approx_\varepsilon d
 \quad\Longrightarrow\quad
 \frac12\sum_t\|\Delta^{(t)}\|_1\leq\varepsilon.
 \label{eq:distance}
\end{equation}
The proof identifies the half-sum with the largest absolute expectation difference over
store-indexed effects. Total trace preservation gives $\sum_t\tr(\Delta^{(t)})=0$. Individual
blocks need not have trace zero.

Section~\ref{sec:source-errors} applies the composition theorem to synthesized rotations.

\subsection{Adding logical implementation error}
\label{sec:logical-noise}

Synthesis error compares ideal programs. A noisy implementation needs a second bound.
A \emph{transformer} on the source register type is additive, order preserving, and fixes
every scalar identity. The duals of trace-preserving channels, and the weakest-precondition
maps of QPPM commands, are transformers. A \emph{location} is an ideal transformer
$F_i$, an implemented transformer $\widetilde F_i$ that includes its classical records,
and a rate $p_i\geq0$ such that, for every effect $B$,
\[
 F_i(B)\sqsubseteq\widetilde F_i(B)+p_iI,
 \qquad \widetilde F_i(B)\sqsubseteq F_i(B)+p_iI.
\]
The contract concerns the complete declared register. A measured failure frequency or
a code distance alone does not establish it.

\begin{theorem}[Synthesis and logical noise]
\label{thm:logical-noise}
Let $\ell_1,\ldots,\ell_N$ be a sequential list of locations on the source register type,
with rates $p_1,\ldots,p_N$, and let $W$ be the source transformer. Suppose that for every
effect $B$ the ideal composite $F_1\circ\cdots\circ F_N$ is within $\varepsilon$ of $W$ in
both inequalities above. If $A\sqsubseteq W(B)$ and $B$ is an effect, then
$A-(\varepsilon+\sum_i p_i)I\sqsubseteq(\widetilde F_1\circ\cdots\circ\widetilde F_N)(B)$.
\end{theorem}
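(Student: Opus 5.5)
The argument splits into two stages: a telescoping bound on how far the implemented composite sits from the ideal composite, and then the synthesis allowance $\varepsilon$. I will use only the three transformer axioms: additivity, order preservation, and $G(\delta I)=\delta I$ for every scalar $\delta\geq0$. Each $\widetilde F_i$ is itself a transformer. A preliminary step is that transformers map effects to effects. If $0\sqsubseteq B\sqsubseteq I$, order preservation gives $G(0)\sqsubseteq G(B)\sqsubseteq G(I)=I$. Additivity gives $G(0)=0$, so $G(B)$ is an effect.

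\emph{Telescoping.} Write $G_k=F_1\circ\cdots\circ F_k\circ\widetilde F_{k+1}\circ\cdots\circ\widetilde F_N$. Then $G_N$ is the ideal composite and $G_0$ is the implemented one. I will show that for every effect $B$,
\[
 G_k(B)\sqsubseteq G_{k-1}(B)+p_kI .
\]
Put $B'=(\widetilde F_{k+1}\circ\cdots\circ\widetilde F_N)(B)$. By the preliminary step, $B'$ is an effect. The location contract then gives $F_k(B')\sqsubseteq\widetilde F_k(B')+p_kI$. Apply the prefix $H=F_1\circ\cdots\circ F_{k-1}$ to both sides. $H$ is order preserving, and by additivity together with $H(p_kI)=p_kI$ we get $H(X+p_kI)=H(X)+p_kI$. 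This yields the displayed inequality. Chaining over $k=N,\dots,1$ gives
\[
 (F_1\circ\cdots\circ F_N)(B)\sqsubseteq(\widetilde F_1\circ\cdots\circ\widetilde F_N)(B)+\textstyle\sum_ip_iI .
\]
This is the same argument as the sequential case of Theorem~\ref{thm:approximation}, lifted from $\WP$ maps to abstract transformers.

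\emph{Adding synthesis error.} By hypothesis, $W(B)\sqsubseteq(F_1\circ\cdots\circ F_N)(B)+\varepsilon I$ for the effect $B$. Combining this with $A\sqsubseteq W(B)$ and the telescoped bound gives
\[
 A\sqsubseteq(\widetilde F_1\circ\cdots\circ\widetilde F_N)(B)+(\varepsilon+\textstyle\sum_ip_i)I .
\]
Subtracting the scalar multiple of the identity pointwise in the store gives the claim, since the Löwner order is translation invariant.

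The only delicate point is the well-formedness of the intermediate assertions. The location contract is stated only for effects, so each suffix $B'$ must be an effect. This is exactly where identity preservation and order preservation of the $\widetilde F_j$ are needed. All transformers must also act on the same store-indexed register type, so that composition and the scalar $I$ make sense throughout. In the mechanization, I expect the main bookkeeping to be commuting the transformer past $+p_kI$ using additivity and scalar fixing, rather than any genuinely hard estimate.
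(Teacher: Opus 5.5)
Your proof is correct and follows essentially the same route as the paper: a hybrid argument that replaces one location at a time, where the suffix maps effects to effects and the prefix preserves order and scalar slack, followed by adding the synthesis allowance $\varepsilon$. The only difference is the replacement order: you start from $\ell_N$, so your suffix consists of the implemented maps $\widetilde F_j$ and your prefix of the ideal maps $F_j$, whereas the paper replaces $F_1$ first and the two roles are swapped; since every $F_i$ and $\widetilde F_i$ is a transformer, either order works.
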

\begin{proof}
Replace one location at a time:
\[
 \pwire{F_1}\,\pwire{F_2}
 \ \xrightarrow{\ p_1\ }\
 \pwire{\widetilde F_1}\,\pwire{F_2}
 \ \xrightarrow{\ p_2\ }\
 \pwire{\widetilde F_1}\,\pwire{\widetilde F_2}.
\]
The suffix preserves effects. The prefix preserves order and scalar allowances.
Induction adds the local rates, then the synthesis allowance.
\end{proof}

The compiled program enters through Equation~\eqref{eq:circuit-contract}, whose left
side is a transformer on the source register type. The ideal $F_i$ can therefore be the
compressed transformers of compiled segments and $\varepsilon$ their certified allowance.
For $N_L$ logical, $N_T$ magic-state, and $N_I$ idle locations with uniform rates
$p_L,p_T,p_I$, the total allowance is $\varepsilon+p_LN_L+p_TN_T+p_IN_I$. The rates
need not describe independent faults, because each contract must cover the state
reaching its location. In the two-qubit example, with only measurements noisy at rate
$p_L$, the 12-to-8 measurement reduction changes the allowance from $2^{-20}+12p_L$ to
$2^{-20}+8p_L$. A physical backend must also count routing, preparation, and idle
locations and establish their contracts before it can determine whether reducing depth
reduces error. An $X$ fault with probability $p$ after an ideal command is a location of
rate $p$: its transformer $B\mapsto(1-p)B+pXBX$ satisfies both inequalities because
$(I-B)+XBX$ and $B+(I-XBX)$ are positive for every effect $B$.

\section{Evaluation}
\label{sec:evaluation}

We use complete QASMBench circuits to evaluate lowering, depth optimization, and
approximation. A lowering certificate (Section~\ref{sec:qasmbench}) starts from a compiled
gate sequence. A depth certificate (Section~\ref{sec:depth-evaluation}) checks a parallel
schedule. A selection certificate (Section~\ref{sec:budget-sweep}) checks an allowance
against the requested limit. A source certificate (Section~\ref{sec:source-errors})
also accounts for synthesis. Lean checks the certificates. External tools parse sources,
synthesize gate sequences, and propose schedules. Magic states are ideal throughout.
Every experiment includes \emph{mutation tests}: deliberately corrupted certificates, such
as an altered gate, an understated allowance, or an invalid swap, which the checker must
reject.
Appendices~\ref{app:cat-case-study} and \ref{app:grover-case-study} derive full
compilation proofs for \texttt{cat\_state\_n4} and \texttt{grover\_n2}, using the
Hoare rules at every target step.

\subsection{Compilation under a requested error limit}
\label{sec:budget-method}

Algorithm~\ref{alg:compile} specifies the solid arrows in Figure~\ref{fig:compilation-stages}.
Its input is a typed circuit $c$, an error limit $\varepsilon_{\mathrm{user}}\geq0$,
an ancilla budget $a\geq1$, and a finite ordered collection $\mathcal S$ of synthesis proposals.
Each proposal supplies Clifford+$T$ words and local error certificates, which the checks
compose over the whole source before verifying the emitted target. A failed check
discards the proposal. A requested synthesis tolerance alone is insufficient.

\begin{algorithm}[htbp]
\caption{Certified compilation and depth selection}
\label{alg:compile}
\begin{algorithmic}[1]
\Require $c$, $\varepsilon_{\mathrm{user}}$, $a$, finite ordered collection $\mathcal S$
\Ensure A QPPM implementation, a public-effect allowance, and a depth upper bound, or failure
\State $\mathcal V\gets[\,]$
\For{each proposal $w\in\mathcal S$}
  \State Check $c\xrightarrow{\varepsilon_w}c_w$ by local matrix bounds and their composition
  \State \textbf{continue} if the check fails or $\varepsilon_w>\varepsilon_{\mathrm{user}}$
  \State Lower $c_w$ to $d_w$ using the verified gadgets and their magic-state contracts
  \State Check the complete instruction sequence, public interface, and magic-state inputs
  \State \textbf{continue} if the check fails
  \State $\mathsf{sd}_w\gets$ the proved serial depth bound of $d_w$
  \State Propose exact rewrites and a schedule under budget $a$
  \State Replay the rewrites and check allocation, feedback order, and the claimed depth
  \If{this check succeeds}
    \State $(d_w,\mathsf{sd}_w)\gets$ the checked optimized target and its depth bound
  \EndIf
  \State Append $(d_w,\varepsilon_w,\mathsf{sd}_w)$ and its certificates to $\mathcal V$
\EndFor
\State \Return failure if $\mathcal V=[\,]$
\State \Return the first element with least $\mathsf{sd}_w$
\end{algorithmic}
\end{algorithm}

Every returned target meets the error limit by Theorem~\ref{thm:approximation}, exact
rewrites preserve it, and the selected depth is the smallest certified bound among the
supplied proposals. If no proposal passes the checks, compilation fails.
The artifact implements the checks in separate synthesis, lowering, and scheduling
drivers, with Ross--Selinger synthesis proposals. The evaluation exercises the schedule
step on Clifford circuits (Section~\ref{sec:depth-evaluation}) and the selection step on
$T$-containing circuits with the serial bound alone (Section~\ref{sec:budget-sweep}).
Parallel scheduling of $T$ injections is future work.
The optional physical backend in the figure needs the contracts of
Section~\ref{sec:logical-noise} in addition to this algorithm.

The greedy proposal algorithm for fusion and parallel scheduling scans backwards from
each turn to an equal axis across commuting turns, adds exponents modulo four, and then
fills layers of disjoint supports in source order up to the ancilla budget. It does not
prove a minimum. All variants use the same layer costs and resource limit.

\subsection{Lowering the Qiskit outputs}
\label{sec:qasmbench}

We select the 39 canonical small-suite QASMBench inputs~\cite{li2022qasmbench} at revision
\texttt{357b942} after excluding the three error-correction circuits. The selection
is by application domain, independent of compilation results. Qiskit 2.3.1 compiles valid inputs to Clifford+$T$ using Ross--Selinger
synthesis~\cite{ross2016synthesis}, optimization level~1, seed~0, and requested local
tolerance $10^{-8}$. Four sources reference undeclared registers and fail parsing.
Eight outputs contain only Clifford gates and Z readouts. Their expanded QPPM targets
support the depth study below. The magic-state compiler handles 19 further outputs
containing T gates, with one ideal magic state per T or inverse T. For each, we check the
emitted instructions and apply the compilation theorem to obtain public Hoare
specifications and static measurement bounds. The remaining eight
outputs contain barriers, reset, or classical control, which this frontend does not
accept: it takes only the gate set of Figure~\ref{fig:gate-gallery} with Z readouts,
whereas the typed-source frontend of Section~\ref{sec:source-errors} expands reset into
a private measurement with a conditional correction and register guards into nested
conditionals, and so covers those inputs. Eight
mutation tests are rejected. A separate audit of the 56 large-suite
inputs yields 23 complete source certificates and eleven depth certificates. At four
ancillas, the 280-qubit Bernstein--Vazirani circuit falls from 1,069 to 715 logical rounds
(33.1\%) with zero approximation error.

\subsection{Logical measurement depth}
\label{sec:depth-evaluation}

\begin{table}[htbp]
\centering\footnotesize\setlength{\tabcolsep}{4pt}
\begin{tabular}{@{}lrrrr@{}}
\toprule
Input & Baseline & Commute & Fuse+commute & Reduction\\
\midrule
\texttt{bb84\_n8}  & 33  & 33  & 33  & 0.0\%\\
\texttt{cat\_state\_n4}  & 19  & 19  & 15  & 21.1\%\\
\texttt{deutsch\_n2}  & 17  & 17  & 15  & 11.8\%\\
\texttt{grover\_n2}  & 27  & 27  & 23  & 14.8\%\\
\texttt{hs4\_n4}  & 27  & 27  & 21  & 22.2\%\\
\texttt{iswap\_n2}  & 29  & 29  & 21  & 27.6\%\\
\texttt{lpn\_n5}  & 21  & 19  & 15  & 28.6\%\\
\texttt{qrng\_n4}  & 7  & 7  & 7  & 0.0\%\\
\bottomrule
\end{tabular}
\caption{Certified logical rounds with the same budget of four private gadget
ancillas, for all eight complete Clifford circuits in the selected application suite.
Inputs are the Qiskit optimization-level-1 outputs. The 19 T-containing lowering
certificates are not included.}
\label{tab:qasmbench}
\end{table}

This experiment fixes synthesis error at zero and varies the ancilla budget.
The eight expanded Clifford circuits contain 337 measurements in total. Measurement
elimination removes two of them, both readouts of \texttt{bb84\_n8}.
Table~\ref{tab:qasmbench} instead evaluates the rotation optimizer of
Section~\ref{sec:rotation-schedule} under the scheduler of Section~\ref{sec:budget-method}:
the baseline schedules disjoint macros in parallel while preserving source order on
overlapping supports, Commute relaxes that order when the checked swap permits it, and
Fuse+commute also combines equal-axis turns.
With matched budgets 1, 2, and 4, Lean checks 72 schedules across three variants.
Certificates check source expansion, macro and primitive replay, scoped allocation, disjoint logical rounds,
depth, and public weakest-precondition equality. Seventeen mutation tests are rejected.

At budget four, fusion plus commutation improves six cases by 11.8--28.6\%. Two remain
unchanged. The median reduction across all eight is 17.9\%. Commutation alone changes
one row, so most of the reduction comes from combining rotations before their
expansion into measurements. Across the eight circuits, fusion reduces the expanded
measurement count from 337 to 281 before readout cleanup.
For learning parity with noise, budgets 1, 2, and 4 give baseline/optimized depths
$43/31$, $23/17$, and $21/15$. The optimizer does not reduce the number of magic states.
The magic-input counts reported below are the T-counts of the emitted circuits.
Repeating the compilation at Qiskit optimization level~3 produces the same gate
sequences for these eight inputs. All 24 level-3 schedules are certified, and their
depths equal the corresponding level-1 depths in Table~\ref{tab:qasmbench}.
The comparison measures the contribution of our passes under one scheduler. We do not
compare with an unverified PPM scheduler, so the baseline is our own pipeline. The depth
model counts logical measurements and excludes routing and magic-state preparation
latency.

\subsection{Varying the requested error limit}
\label{sec:budget-sweep}

We hold the complete typed expansion of \texttt{qft\_n4} fixed and vary
$\varepsilon_{\mathrm{user}}$ over $10^{-2},10^{-4},10^{-6}$.
Three candidate circuits use local synthesis tolerances $10^{-5},10^{-7},10^{-9}$ for
the non-exact phases. Each certificate covers all 41 source steps and four public
readouts. The checker establishes the whole-circuit allowance $\widehat\varepsilon$, the inequality
$\widehat\varepsilon\leq\varepsilon_{\mathrm{user}}$, and the serial depth bound
$\overline{\mathsf{sd}}=6n_H+6n_{\mathrm{CX}}+2n_S+4n_T+n_M$, a gate-count formula
(two logical rounds per phase gadget, four per injection, one per readout) rather than a
parallel schedule.
Candidates use one basis ancilla at a time. Ideal magic inputs are counted in the table.

\begin{table}[htbp]
\centering\small
\begin{tabular}{@{}rrrr@{}}
\toprule
Requested limit & Certified allowance & Serial depth bound & Magic inputs\\
\midrule
$10^{-2}$ & $<1.525\times10^{-3}$ & 5,236 & 471\\
$10^{-4}$ & $<5.859\times10^{-6}$ & 7,540 & 672\\
$10^{-6}$ & $<5.434\times10^{-8}$ & 9,184 & 828\\
\bottomrule
\end{tabular}
\caption{Selection under a requested error limit for the same QASMBench QFT circuit. Each row
selects the smallest certified serial-depth bound among the feasible candidates.
These are upper bounds for serial execution, not the parallel depths of
Table~\ref{tab:qasmbench}.}
\label{tab:budget-sweep}
\end{table}

Relaxing the limit from $10^{-6}$ to $10^{-2}$ lowers the selected depth bound by
43\%. At $10^{-10}$, none of these candidates meets the requested limit, so the compiler
returns no result. Other synthesis proposals might still meet it.
For a requested limit of $10^{-3}$, the coarsest candidate is infeasible and the middle
one is selected because $7{,}540<9{,}184$. Its allowance bounds the change in the
probability of any event on the four public readouts by $5.859\times10^{-6}$. This
experiment measures precision selection under serial bounds. Parallel scheduling of $T$
injections is future work.

\subsection{Source approximation allowances}
\label{sec:source-errors}

We bound compilation error for all 35 valid typed source expansions in the selected suite, assuming ideal
magic states. Their checked outputs are synthesized separately from the Qiskit outputs above.

Eight Ross--Selinger certificates cover phases $\pm\pi/2^k$ for $k=3,4,5,6$, with 204--210 gates
including 81--84 T or inverse-T gates. Lean checks exact $\mathbb{Q}(\sqrt{2},i)$ matrices,
rational enclosures, and Taylor remainders, and derives an effect allowance below
$1.263\times10^{-7}$ for each rotation on a selected wire of the full circuit register,
including when it is entangled with the other wires.

The four-qubit QFT expansion contains 18 phase rotations. Its certificate adds the local
synthesis allowances and preserves their sum through exact lowering, giving
$\varepsilon_{\mathrm{QFT}}=9\varepsilon_{\pi/4}+6\varepsilon_{\pi/8}+3\varepsilon_{\pi/16}
<1.132\times10^{-6}$. Eighteen inputs have nonzero allowances. Table~\ref{tab:source-errors}
reports the ten for which a numerical cross-check is available. The other eight
(\texttt{dnn\_n8}, \texttt{qaoa\_n6}, \texttt{inverseqft\_n4}, \texttt{ipea\_n2},
\texttt{pea\_n5}, \texttt{qpe\_n9}, \texttt{hhl\_n7}, and \texttt{ising\_n10}) have
certified allowances between $1.264\times10^{-7}$ and $4.057\times10^{-5}$.
Appendix~\ref{app:ising-case-study} derives the Ising guarantee step by step.
Let $c$ be the typed source, $d$ its QPPM target, $k$ the number of ideal magic inputs,
$\widetilde s$ the store $s$ extended by two zero private bits, and $\varepsilon$ the
certified allowance. For every effect $B$ and store $s$, the certificate proves
\[
 -\varepsilon I\;\sqsubseteq\;
 V_k^\dagger\,\WP(d,D_k(B))(\widetilde s)\,V_k-W_c(B)(s)
 \;\sqsubseteq\;\varepsilon I.
\]
This is the approximate form of Equation~\eqref{eq:circuit-contract}. Theorem~\ref{thm:hoare}
then turns a source specification $A\sqsubseteq W_c(B)$ into a target triple
$\vdash\triple{Q}{d}{D_k(B)}$ whose compressed precondition dominates $A-\varepsilon I$.
\begin{table}[htbp]
\centering
\small
\begin{tabular}{@{}lrrrrrr@{}}
\toprule
Input & Typed steps & Output instr. & Magic inputs & Allowance $\varepsilon$ & Numeric $d$ & $\varepsilon/d$\\
\midrule
\texttt{wstate\_n3} & 33 & 246 & 93 & $<1.001\times10^{-7}$ & $4.88\times10^{-9}$ & 20\\
\texttt{linearsolver\_n3} & 22 & 469 & 172 & $<4.152\times10^{-7}$ & $9.97\times10^{-9}$ & 42\\
\texttt{quantumwalks\_n2} & 13 & 3,171 & 1,233 & $<1.299\times10^{-6}$ & $1.70\times10^{-8}$ & 76\\
\texttt{qaoa\_n3} & 18 & 1,267 & 493 & $<8.117\times10^{-7}$ & $1.80\times10^{-8}$ & 45\\
\texttt{basis\_change\_n3} & 36 & 4,766 & 1,807 & $<1.783\times10^{-6}$ & $2.32\times10^{-8}$ & 77\\
\texttt{basis\_trotter\_n4} & 1,630 & 73,084 & 28,140 & $<3.199\times10^{-5}$ & $2.75\times10^{-7}$ & 116\\
\texttt{vqe\_n4} & 94 & 6,848 & 2,628 & $<3.258\times10^{-6}$ & $4.01\times10^{-8}$ & 81\\
\texttt{dnn\_n2} & 228 & 17,572 & 6,710 & $<6.605\times10^{-6}$ & $3.24\times10^{-8}$ & 204\\
\texttt{qft\_n4} & 41 & 1,921 & 756 & $<1.132\times10^{-6}$ & $2.34\times10^{-8}$ & 48\\
\texttt{variational\_n4} & 58 & 1,648 & 646 & $<8.368\times10^{-7}$ & $1.70\times10^{-8}$ & 49\\
\bottomrule
\end{tabular}
\caption{Nonzero approximation allowances for the typed source expansions with at most
four qubits and no reset or classical control. Output instructions are the emitted
Clifford+$T$ instructions lowered to QPPM. The magic-input count is the T-count, with one
ideal magic state per injection. The numeric distance $d$ is the phase-free operator-norm
distance between the source and output unitaries, computed in floating point.}
\label{tab:source-errors}
\end{table}
The inverse-QFT certificate includes six guarded phases and four public readouts,
with each guard checked in both branches. The PEA certificate checks words prefixed by T or inverse
T directly against $\pm3\pi/8$. Exact identities reduce $X$, $Y$, and general single-qubit
rotations to phases, and macro expansion (SWAP in basis-Trotter, Qiskit's legacy
square-root-$X$ convention in VQE) adds no error. Decimal radian literals denote exact
rationals and $\pi$ the mathematical constant. HHL uses both. Exported instruction
sequences are compared with the source expansion. Parsing and macro inlining remain
outside the proof.

How tight are the allowances? For the ten inputs with at most four qubits and no reset
or classical control, we computed numerically (Qiskit, NumPy) the distance
$d=\min_\phi\|U_c-e^{i\phi}U_d\|_{\mathrm{op}}$ between the source and the certified
Clifford+$T$ unitaries, readouts removed. The effect error that the certificate bounds
lies between about $d$ and $2d$ for small $d$, and the certified allowance exceeds $d$
by a factor of 20 (\texttt{wstate\_n3}) to 204 (\texttt{dnn\_n2}). The slack has two
sources: each per-rotation allowance applies a uniform bound to sixteen matrix-unit
terms, costing up to a factor of 16 (Section~\ref{sec:two-qubit-compilation}), and the
certificate adds local allowances along the circuit although the errors of successive
rotations do not add coherently, so the ratio grows with the number of synthesized
rotations. The numeric distance is a floating-point cross-check, not a certificate.

The other seventeen inputs compile exactly. Twelve use Clifford+$T$ directly and three
(adder, SAT, and Simon) admit exact macro expansion. These fifteen (seven with T gates)
emit 603 instructions with zero error, and Simon's two barriers are proved erasable.
The remaining two are Bell and Shor. Shor preserves two resets, four register guards,
and three readouts. Appendix~\ref{app:guarded-source} derives its feedback contract.
IPEA composes approximation with three resets, eleven guards, and four readouts.
Mutation tests reject altered gates, guards, magic-input counts, and understated
allowances in every certificate.

Appendix~\ref{app:adder-case-study} works through the four-qubit adder and its
measurement bound.

\subsection{Mechanization and reproduction}
\label{sec:artifact}

The Lean audit permits only propositional extensionality, classical choice, and quotient
soundness across the compilation, matrix, and spectral results, and it rejects proof
holes and kernel bypasses. Table~\ref{tab:trust} separates the trusted components from the verified ones. A result
is \emph{certified} when the Lean kernel checks its certificate for one input. The
theorems of Sections~\ref{sec:core}--\ref{sec:approximation} are proved once for all inputs.

\begin{table}[htbp]
\centering\small
\begin{tabular}{@{}ll@{}}
\toprule
Trusted (unverified) & Verified in Lean\\
\midrule
QASM parser and macro expansion & Optimizer: elimination, fusion, scheduling\\
Qiskit Clifford+$T$ output & Gadget lowering and magic-state contracts\\
Ross--Selinger synthesis proposals & Replay checker for proposed rewrites\\
Scheduler proposals & Synthesis arithmetic: exact matrices, enclosures, remainders\\
& Selection check against the error limit\\
\bottomrule
\end{tabular}
\caption{Trust boundary of the pipeline. A proposal from the left column is accepted only
when its certificate is checked by the right column.}
\label{tab:trust}
\end{table}

The default build audits 1,476 theorems against these three assumptions with Lean and
its mathematical library at version 4.29.1. An isolated rebuild of all 48 source files
reuses only pinned third-party caches and checks source hashes. Reproduction has been
tested on one Windows x86-64 machine (Intel Core i5-12400) with two prover workers per check.

Of 7,739 physical lines, 317 define the syntax, semantics, and logic, with seven commands
and eight rule constructors. Supplemental certificates are checked and axiom-audited
separately.

\subsection{Limitations}

The parallel depth study excludes $T$ injections. The QFT selection instead uses
serial depth upper bounds. Neither includes layout, code distance, or magic-state
production. Depth selection compares only the supplied proposals and schedules and does
not establish a global minimum. We do not compare against an unverified PPM scheduler,
so our baseline is our own pipeline. The optimizer does not reduce the number of magic
states. Parsing and macro expansion are compared with exported sequences but remain unverified.
The logical-noise theorem
composes supplied contracts. It does not certify physical error rates for a
surface-code or lifted-product backend.

\section{Related work}
\label{sec:related}

\paragraph{Quantum program logics.}
\citet{dhondt2006weakest} introduced quantum weakest preconditions as duals of completely
positive maps. \citet{ying2009hoare} proved Hoare-logic soundness and relative completeness,
and \citet{feng2021classical} added the store-indexed assertions we adopt. QPPM specializes
these logics to terminating measurement programs to verify compilation and optimization.
\citet{unruh2019ghost} adds ghost variables entangled with program variables. We express
the magic-state premise through compression $V_k$ of the compiled precondition.
Relational quantum Hoare logic~\cite{barthe2020relational} uses couplings, and
aQHL~\cite{zhou2019applied} gives error-robustness rules. Our uniform effect bound
$\approx_\varepsilon$ weakens Hoare triples by a scalar and composes through classical
control. Unlike the predicate-constrained diamond distance of \citet{hung2019robustness},
it covers only the declared register and store. Its reference-system extension remains
open. CoqQ mechanizes local and parallel reasoning~\cite{zhou2022coqq}, and CoqQLR uses
distribution formulas~\cite{wu2025local}. Neither hides reusable result bits behind a
component contract.

\paragraph{Verified compilers and circuit languages.}
VOQC verifies gate-level optimizations against matrix semantics~\cite{hietala2021verified},
and SQIR proves algorithms correct in the same framework~\cite{hietala2021proving}.
QWIRE embeds a typed circuit language in Coq~\cite{paykin2017qwire}, Qbricks automates
proofs of parametric circuit families~\cite{chareton2021qbricks}, and CertiQ verifies
Qiskit passes by SMT solving~\cite{shi2019certiq}. These developments
treat gates as unitaries. QPPM transformations are verified at the measurement level,
where feedback and private outcomes are part of the semantics. The measurement
calculus~\cite{danos2007measurement} gives compositional semantics for measurement-based
computation. QPPM restricts to lattice-surgery operations and adds Hoare
specifications. \citet{booth2026stabiliser} give fully abstract semantics for
stabiliser programs. QPPM's complex-matrix instruments additionally support $T$
injection and approximate synthesis.

\paragraph{Pauli-based and lattice-surgery compilation.}
\citet{bravyi2016trading} execute Clifford+$T$ circuits as adaptive Pauli-product
measurements on magic states. \citet{litinski2019game} gives a lattice-surgery instruction
set and the rotation fusion and commutation rules used in Section~\ref{sec:rotation-schedule}.
We verify these rules and check proposed schedules on complete circuits.
\citet{paler2019clifford} schedule $T$ gates using queueing models, and
\citet{beverland2022surface} route operations along edge-disjoint paths. The Lattice
Surgery Compiler~\cite{watkins2024compiler} and \citet{leblond2024realistic} place
surface-code operations and estimate resources. Our logical depth model excludes their
layout and routing costs. \citet{wakizaka2025connectivity} verify path availability.
CircLS lowers PPMs to physical circuits and checks them by simulation~\cite{zhang2026circls}.
These checks address physical implementation. Our proof establishes source-to-PPM
semantic preservation. \citet{cowtan2019phase} synthesize phase gadgets for shallow
circuits. Our rotation fusion handles commuting equal-axis turns and is proved exact.
$T$-count reduction~\cite{amy2014polynomial,heyfron2018efficient} reduces magic-state
consumption, which our optimizer leaves unchanged.

\section{Conclusion}

QPPM's Hoare logic verifies measurement-based compilation and optimization, transfers
specifications through approximate synthesis and logical noise, and certifies complete
QASMBench circuits. Physical backends, broader source coverage, and parameterized
circuits remain future work.

\section*{AI-Assistance Statement}
OpenAI Codex was used to generate and revise Lean definitions and proofs, investigate
literature, and draft and revise the manuscript. Lean checked the formal development,
including the proof-dependency audit in Section~\ref{sec:artifact}.

\section*{Data-Availability Statement}
Artifact:
\url{https://anonymous.4open.science/r/PPMCompilation-FE38/}.
The artifact contains the Lean development, compiler passes, pinned benchmark inputs,
certificates, proof audit, dependency versions, and instructions for reproducing the
proofs and evaluation.
We plan to submit the artifact for Artifact Evaluation when the paper becomes eligible.

\label{ppm-main-end}
\bibliography{main}
\clearpage
\appendix
% Worked instances of the notation and rules whose full calculations exceed the main text.
\section{A fresh qubit: execution, precondition, and scope rule}
\label{app:scope-example}

With inputs $\Gamma=()$ and $\Delta=(r)$, the program
$f=\cmd{let}\ \nu=\cmd{ancilla}\ |0\rangle\ \cmd{in}\ M_{X_\nu}(r)$ returns one bit and no quantum
register. Its input state is the scalar $1$, viewed as a $1\times1$ density matrix.
The scope changes the quantum width from zero to one for the duration of its body.
Inside this one-qubit scope, $X_\nu=X$ and $Z_\nu=Z$.
The inclusion matrices and the measurement projectors are
\[
 E_0=\begin{pmatrix}1\\0\end{pmatrix},\qquad
 E_1=\begin{pmatrix}0\\1\end{pmatrix},\qquad
 \Pi^X_0=\tfrac12\begin{pmatrix}1&1\\1&1\end{pmatrix},\qquad
 \Pi^X_1=\tfrac12\begin{pmatrix}1&-1\\-1&1\end{pmatrix}.
\]

\paragraph{Step 1: prepare, measure, and discard.}
Preparation gives $E_0\,1\,E_0^\dagger=Q_0$. The two paths are
\[
 \begin{array}{c@{\;}c@{\;}c@{\;}c@{\;}c}
 Q_0&\pmeter[X]{X}{r=0}&
 \displaystyle\tfrac14\begin{pmatrix}1&1\\1&1\end{pmatrix}&
 \pwire{\mathsf{drop}}&\underbrace{\tfrac12}_{r=0}\\[12pt]
 Q_0&\pmeter[X]{X}{r=1}&
 \displaystyle\tfrac14\begin{pmatrix}1&-1\\-1&1\end{pmatrix}&
 \pwire{\mathsf{drop}}&\underbrace{\tfrac12}_{r=1}
 \end{array}
\]
For a one-qubit matrix $H$, discarding the only qubit gives
$E_0^\dagger H E_0+E_1^\dagger H E_1=H_{00}+H_{11}$.
Thus both displayed matrices yield the scalar $1/2$. These are the two final-store
blocks $\Inst_f(s,1,s[r\mapsto b])$, for $b=0,1$.
The qubit disappears, but the record distinguishes the branches.

\paragraph{Step 2: calculate backwards from a recorded one.}
Let $F(s)=[s(r)=1]$ be the scalar event. Lifting through discard gives
$\mathsf{drop}^*(F)(s)=I_2 F(s)$.
By \eqref{wp:measure} and \eqref{wp:ancilla}, the body updates $r$ before evaluating this assertion:
\[
 \begin{array}{c@{\qquad}l}
 \pmeter[X]{X}{r}
 &\begin{aligned}
  \WP(M_X(r),\mathsf{drop}^*(F))(s)
   &=\Pi^X_0(0I_2)\Pi^X_0+\Pi^X_1(1I_2)\Pi^X_1\\
   &=\Pi^X_1.
 \end{aligned}\\[10pt]
 \pwire{|0\rangle}
 &\begin{aligned}
  \mathsf{prep}_0^*(\Pi^X_1)
   &=E_0^\dagger\Pi^X_1 E_0\\
   &=\begin{pmatrix}1&0\end{pmatrix}
      \tfrac12\begin{pmatrix}1&-1\\-1&1\end{pmatrix}
      \begin{pmatrix}1\\0\end{pmatrix}=\tfrac12 .
 \end{aligned}
 \end{array}
\]
Consequently $\WP(f,F)=1/2$, matching the lower forward branch.
For an arbitrary scalar assertion $A$, the same calculation replaces $0,1$ by
$A(s[r\mapsto0]),A(s[r\mapsto1])$ and gives Equation~\eqref{eq:fair-wp}.

\paragraph{Step 3: apply the scope rule.}
Rule \ref{rule:primitive} supplies the premise of \ref{rule:ancilla}. The compression just calculated
supplies the conclusion's precondition:
\[
 \frac{\dtriple{\Pi^X_1}{\pmeter[X]{X}{r}}{\mathsf{drop}^*(F)}}
      {\dtriple{\tfrac12}{
       \underbrace{\pwire{|0\rangle}\!\pmeter[X]{X}{r}\!\pwire{\mathsf{drop}}}
                   _{\cmd{let}\ \nu=\cmd{ancilla}\ |0\rangle\ \cmd{in}\ M_{X_\nu}(r)}}{F}}.
\]
The boxes for preparation and discard show the scope boundary, rather than additional
commands in the grammar. The triple gives a lower bound of $1/2$ for recording one.
The exact weakest-precondition calculation gives equality.

The initialization bit also has an observable role. Replacing $X$ by $Z$ gives
$\mathsf{prep}_b^*(\Pi^Z_1)=b$ for $b\in\{0,1\}$.
Thus $\cmd{let}\ \nu=\cmd{ancilla}\ |0\rangle\ \cmd{in}\ M_{Z_\nu}(r)$ always records zero, whereas
$\cmd{let}\ \nu=\cmd{ancilla}\ |1\rangle\ \cmd{in}\ M_{Z_\nu}(r)$ always records one.

\paragraph{Names and register positions.}
The binder makes the qubit in the diagram available by name inside its body.
With inputs $\Gamma=(d)$ and $\Delta=(r)$, consider
\[
 c=\cmd{let}\ \nu=\cmd{ancilla}\ |0\rangle\ \cmd{in}\ M_{X_\nu Z_d}(r).
\]
\[
\begin{tikzpicture}[x=.8cm,y=.8cm,font=\small,>=Latex,
 baseline=(current bounding box.center)]
 \node[anchor=east] at (.3,0) {$|0\rangle_\nu$};
 \node[anchor=east] at (.3,-.75) {$d$};
 \draw (.35,0) -- (3.1,0) (.35,-.75) -- (3.65,-.75);
 \pic at (1.65,0) {ppmtwo={X/Z/r}};
 \draw (3,-.14) -- (3,.14) (3.1,-.14) -- (3.1,.14);
\end{tikzpicture}
\qquad
\begin{aligned}
 \Gamma&=(d),& (\nu,\Gamma)&=(\nu,d),\\
 d&\mapsto0,&\nu&\mapsto0,\quad d\mapsto1,\\
 &&X_\nu Z_d&=X\otimes Z.
\end{aligned}
\]
Names in a context are distinct, and each bound name is fresh for the surrounding context.
Tensor factors follow context order. Omitted factors are $I$.
If an inner scope binds a fresh $\mu$, the context becomes $(\mu,\nu,d)$:
the same product $X_\nu Z_d$ then denotes $I\otimes X\otimes Z$.
The inner scope discards $\mu$ and restores $(\nu,d)$. The outer scope discards $\nu$
and restores $(d)$. Thus a name continues to refer to the same qubit when its position changes.

\section{Signs and classical expressions}
\label{app:bits}

With inputs $\Gamma=(q)$ and $\Delta=(r)$, take the signed product $P=-Z$,
so its sign bit is $\eta=1$ and its single factor is $Z$.
Its projectors exchange those of $Z$:
$\Pi^{-Z}_b=(I+(-1)^b(-Z))/2=\Pi^Z_{1-b}$.
Unsigned lowering therefore has the following concrete instance:
\[
 \pmeter[Z]{-Z}{r}\quad\equiv\quad
 \pmeter[Z]{Z}{r}\!\pwire{r:=\neg r}.
\]
On input $Q_0=|0\rangle\langle0|$, the left circuit records one with output $Q_0$.
The right circuit first records zero, then changes the record to one. For any
postcondition $A$, \eqref{wp:seq}, \eqref{wp:assign}, and \eqref{wp:measure} give
\[
 \WP(M_Z(r)\seq r:=\neg r,A)(s)
 =\sum_{b=0}^1Q_b A(s[r\mapsto1-b])Q_b
 =\sum_{a=0}^1\Pi^{-Z}_a A(s[r\mapsto a])\Pi^{-Z}_a.
\]
The second equality changes the index to $a=1-b$. In \eqref{wp:correct}, the sign cancels
without changing the store: $(-X)A(s)(-X)=XA(s)X$. Hence
$\pwire[X]{-X}\equiv\pwire[X]{X}$ as correction commands.

Literal, read, negation, and exclusive-or expressions can all be checked on this short
classical program, with inputs $\Gamma=()$ and $\Delta=(r,t,u)$.
Each annotation shows the updated store:
\[
 \underbrace{\pwire{r:=0}}_{r=0}\quad
 \underbrace{\pwire{t:=1}}_{r=0,\ t=1}\quad
 \underbrace{\pwire{r:=\neg\,\cmd{read}\ r}}_{r=1,\ t=1}\quad
 \underbrace{\pwire{u:=(\cmd{read}\ r)\oplus(\cmd{read}\ t)}}
             _{r=1,\ t=1,\ u=0}.
\]
For the last assignment and event $[u=0]$, \ref{rule:primitive} and \eqref{wp:assign} give
$[s(r)\oplus s(t)=0]$, which is true at its displayed input store.
In the phase gadget with inputs $\Gamma=(d)$, $\Delta=(a,b)$, the same exclusive-or expression controls a quantum correction:
the outcomes $(a,b)=(0,1)$ give $a\oplus b=1$, so the circuit applies $Z$.
The outcomes $(1,1)$ give zero, so it skips the correction.

\section{Shor's circuit: register guards and qubit reuse}
\label{app:guarded-source}

The QASMBench input \texttt{shor\_n5} uses iterative phase estimation. It reuses one
qubit for three readouts and chooses later phase gates from earlier outcomes.
Declare quantum inputs $\Gamma=(q_0,\ldots,q_4)$, public locations
$\Delta=(c_0,\ldots,c_4)$, and private locations $u,v\notin\Delta$.
Only $c_0,c_1,c_2$ are written. The unused public bits still participate in the
source's register comparisons. The proof therefore covers arbitrary initial values
of all five bits.

\paragraph{Step 1: expose the finite source.}
Figure~\ref{fig:guarded-source} shows the three stages in execution order.
The joint blocks abbreviate
\[
 \begin{aligned}
 U_2&=\mathrm{CNOT}_{4,2}\seq\mathrm{CNOT}_{4,0},\\
 U_3&=\mathrm{CSWAP}_{4,1,0}\seq\mathrm{CSWAP}_{4,2,1}
       \seq\mathrm{CSWAP}_{4,3,2}\\
    &\qquad\seq\mathrm{CNOT}_{4,3}\seq\mathrm{CNOT}_{4,2}
       \seq\mathrm{CNOT}_{4,1}\seq\mathrm{CNOT}_{4,0}.
 \end{aligned}
\]
Each controlled swap expands into its standard Clifford+$T$ sequence. The 25 source
statements become 73 typed steps. The complete certificate retains both resets,
all four phase guards, and all three readouts.

\begin{figure}[htbp]
\centering
\begin{tikzpicture}[x=.96cm,y=.67cm,font=\small,>=Latex,
 gate/.style={draw,fill=white,minimum width=.5cm,minimum height=.4cm,inner sep=3pt},
 joint/.style={gate,minimum height=1.1cm,minimum width=.7cm}]
 \foreach \y/\stage in {0/1,-2.5/2,-5/3}{
   \node[anchor=east] at (-.1,\y) {$q_0,\ldots,q_3$};
   \node[anchor=east] at (-.1,\y-1) {$q_4$};
   \draw (0,\y) -- (11.5,\y) (0,\y-1) -- (11.5,\y-1);
   \draw (.1,\y-.1) -- (.25,\y+.1);
   \node[font=\scriptsize,anchor=east] at (-.1,\y+.55) {stage \stage};
 }
 \node[gate,fill=ppmx] at (1,0) {$X_0$};
 \node[gate] at (1,-1) {$H$};
 \node[gate] at (2.4,-1) {$H$};
 \pic at (4,-1) {ppmone={Z/c_0}};
 \node[gate] at (7,-1) {$\mathrm{reset}_0$};
 \node[gate] at (1,-3.5) {$H$};
 \node[joint] at (2.4,-3) {$U_2$};
 \node[gate,fill=ppmz] at (4.2,-3.5) {$S$};
 \node[fill=ppmclassical,font=\scriptsize,inner sep=2pt] at (4.2,-2.9) {$w=1$};
 \node[gate] at (5.8,-3.5) {$H$};
 \pic at (7.4,-3.5) {ppmone={Z/c_1}};
 \node[gate] at (9.6,-3.5) {$\mathrm{reset}_0$};
 \node[gate] at (1,-6) {$H$};
 \node[joint] at (2.4,-5.5) {$U_3$};
 \foreach \x/\value/\phase in {4.2/3/ST,5.8/2/S,7.4/1/T}{
   \node[gate,fill=ppmz] at (\x,-6) {$\phase$};
   \node[fill=ppmclassical,font=\scriptsize,inner sep=2pt] at (\x,-5.4) {$w=\value$};
 }
 \node[gate] at (9,-6) {$H$};
 \pic at (10.6,-6) {ppmone={Z/c_2}};
\end{tikzpicture}
\caption{The complete five-qubit Shor source, read one stage at a time from top
 to bottom. A four-qubit bus carries the data register. Qubit $q_4$ is measured and
 reset between stages. Green conditions test the full public register
 $w=c_0+2c_1+4c_2+8c_3+16c_4$. Blue phase gates are the exact words for the source
 angles. The joint blocks $U_2,U_3$ are defined in the text.}
\Description{Three consecutive circuit stages on a four-qubit data bus and one
 reused phase-estimation qubit. The first stage applies X on data qubit zero,
 two Hadamards on the reused qubit, a Z measurement into c0, and reset. The second
 applies H, U2, S guarded by register value one, H, measurement into c1, and reset.
 The third applies H, U3, phases ST, S, T guarded by register values three, two,
 one respectively, H, and measurement into c2.}
\label{fig:guarded-source}
\end{figure}

\paragraph{Step 2: lower reset without erasing public outcomes.}
Put $Q_b=\Pi_b^{Z_{q_4}}$ and $X_4=X_{q_4}$. A reset uses the private bit $u$:
\[
 \pmeter[Z]{Z_{q_4}}{u}\quad
 \pchoice{u}{X_4}{\cmd{skip}}.
\]
For an assertion $B$ that reads only $\Delta$, substitution removes the private
record. Measurement and conditional give
\[
 \WP(\mathrm{reset}_0(q_4),B)(s)
 =Q_0B(s)Q_0+Q_1X_4B(s)X_4Q_1.
\]
These are the two Kraus terms for replacing $q_4$ by $\lvert0\rangle$.
In particular, the first reset leaves $c_0$ available to the next stage. The
second leaves both $c_0,c_1$ available. This is the reset derivation of
Section~\ref{sec:reset-wp}, instantiated on a reused algorithm qubit.

\paragraph{Step 3: derive a whole-register guard.}
Write $w(s)=\sum_{j=0}^4 2^j s(c_j)$, interpreting each Boolean as zero or one,
and $R_\theta=\operatorname{diag}(1,e^{i\theta})$ on $q_4$.
The last guarded phase, $G_1$, executes $R_{\pi/4}=T$ only when $w=1$.
Its core guard is five nested bit tests. A true $c_0$ continues to the next test,
each of $c_1,\ldots,c_4$ must be false, and every other branch skips.
Repeated use of \eqref{wp:if} gives
\[
 W_{G_1}(B)(s)=
 \begin{cases}
 T^\dagger B(s)T &s(c_0)=1,\ s(c_1)=\cdots=s(c_4)=0,\\
 B(s)&\text{otherwise}.
 \end{cases}
\]
Here $T$ is the gate matrix on $q_4$. Its exact measurement implementation has
the same public weakest precondition after resource compression. The required
magic input is the same as for an unconditional injection.

The preceding guards execute $R_{\pi/2}=S$ when $w=2$ and
$R_{3\pi/4}=ST$ when $w=3$. None writes a public location. Their conditions are
disjoint, so sequencing gives a calculation for every initial store:
\[
 \begin{array}{c|c|c}
 w(s)&\text{executed phase}&W_{G_3\seq G_2\seq G_1}(B)(s)\\\hline
 0&\pwire{\cmd{skip}}&B(s)\\[3pt]
 1&\pwire[Z]{T}&T^\dagger B(s)T\\[3pt]
 2&\pwire[Z]{S}&S^\dagger B(s)S\\[3pt]
 3&\pwire[Z]{ST}&(ST)^\dagger B(s)(ST)\\[3pt]
 4,\ldots,31&\pwire{\cmd{skip}}&B(s)
 \end{array}
\]
The last row checks why the unused high bits cannot be dropped from the guards.
The earlier $w=1$ guard is derived in the same way, with $S$ in place of $T$.

\paragraph{Step 4: compose the complete contract.}
The exact phase words produce 74 instructions and require $k=23$ ideal magic
inputs. Let $c$ be the complete typed source and $d$ its QPPM target. For any
public postcondition $B$, set $A=W_c(B)$. Composition gives
\[
 \vdash\dtriple{Q}{\pwire{d}}{D_k(B)},\qquad
 V_k^\dagger Q(\sigma)V_k=A(\pi(\sigma)),\qquad\varepsilon=0.
\]
The equality covers the whole circuit, including both resets and all feedback
branches. It preserves the source specification for every input state and public
store. Parsing and controlled-swap expansion are compared with the exported source
and remain outside the proof. This case contributes a source certificate, with no
parallel-depth claim.

\section{An arithmetic source: the four-qubit adder}
\label{app:adder-case-study}

The QASMBench input \texttt{adder\_n4} is already expressed in Clifford+$T$.
Declare $\Gamma=(q_0,q_1,q_2,q_3)$ and $\Delta=(r_0,r_1,r_2,r_3)$.
Figure~\ref{fig:adder-case-study} contains all 27 source instructions, including
the two initial X gates and four final readouts.

\begin{figure}[htbp]
\centering
\begin{tikzpicture}[x=.86cm,y=.54cm,font=\scriptsize,
 gate/.style={draw,fill=white,minimum width=.39cm,minimum height=.34cm,inner sep=2pt}]
 \node[anchor=east] at (.1,-0) {$q_0$};
 \draw (.25,-0) -- (14,-0);
 \node[anchor=east] at (.1,-1) {$q_1$};
 \draw (.25,-1) -- (14,-1);
 \node[anchor=east] at (.1,-2) {$q_2$};
 \draw (.25,-2) -- (14,-2);
 \node[anchor=east] at (.1,-3) {$q_3$};
 \draw (.25,-3) -- (14,-3);
 \node[anchor=west] at (.25,0.8) {(a) Instructions 1--13};
 \node[gate,fill=ppmx] at (1.2,-0) {$X$};
 \node[gate,fill=ppmx] at (2.2,-1) {$X$};
 \node[gate,fill=white] at (3.2,-3) {$H$};
 \draw (4.2,-2) -- (4.2,-3);
 \fill (4.2,-2) circle (1.5pt);
 \node[gate,fill=ppmx] at (4.2,-3) {$X$};
 \node[gate,fill=ppmz] at (5.2,-0) {$T$};
 \node[gate,fill=ppmz] at (6.2,-1) {$T$};
 \node[gate,fill=ppmz] at (7.2,-2) {$T$};
 \node[gate,fill=ppmz] at (8.2,-3) {$T^\dagger$};
 \draw (9.2,-0) -- (9.2,-1);
 \fill (9.2,-0) circle (1.5pt);
 \node[gate,fill=ppmx] at (9.2,-1) {$X$};
 \draw (10.2,-2) -- (10.2,-3);
 \fill (10.2,-2) circle (1.5pt);
 \node[gate,fill=ppmx] at (10.2,-3) {$X$};
 \draw (11.2,-3) -- (11.2,-0);
 \fill (11.2,-3) circle (1.5pt);
 \node[gate,fill=ppmx] at (11.2,-0) {$X$};
 \draw (12.2,-1) -- (12.2,-2);
 \fill (12.2,-1) circle (1.5pt);
 \node[gate,fill=ppmx] at (12.2,-2) {$X$};
 \draw (13.2,-0) -- (13.2,-1);
 \fill (13.2,-0) circle (1.5pt);
 \node[gate,fill=ppmx] at (13.2,-1) {$X$};
 \node[anchor=east] at (.1,-5.3) {$q_0$};
 \draw (.25,-5.3) -- (14,-5.3);
 \node[anchor=east] at (.1,-6.3) {$q_1$};
 \draw (.25,-6.3) -- (14,-6.3);
 \node[anchor=east] at (.1,-7.3) {$q_2$};
 \draw (.25,-7.3) -- (14,-7.3);
 \node[anchor=east] at (.1,-8.3) {$q_3$};
 \draw (.25,-8.3) -- (14,-8.3);
 \node[anchor=west] at (.25,-4.5) {(b) Instructions 14--27};
 \draw (1.2,-7.3) -- (1.2,-8.3);
 \fill (1.2,-7.3) circle (1.5pt);
 \node[gate,fill=ppmx] at (1.2,-8.3) {$X$};
 \node[gate,fill=ppmz] at (2.2,-5.3) {$T^\dagger$};
 \node[gate,fill=ppmz] at (3.2,-6.3) {$T^\dagger$};
 \node[gate,fill=ppmz] at (4.2,-7.3) {$T^\dagger$};
 \node[gate,fill=ppmz] at (5.2,-8.3) {$T$};
 \draw (6.2,-5.3) -- (6.2,-6.3);
 \fill (6.2,-5.3) circle (1.5pt);
 \node[gate,fill=ppmx] at (6.2,-6.3) {$X$};
 \draw (7.2,-7.3) -- (7.2,-8.3);
 \fill (7.2,-7.3) circle (1.5pt);
 \node[gate,fill=ppmx] at (7.2,-8.3) {$X$};
 \node[gate,fill=ppmz] at (8.2,-8.3) {$S$};
 \draw (9.2,-8.3) -- (9.2,-5.3);
 \fill (9.2,-8.3) circle (1.5pt);
 \node[gate,fill=ppmx] at (9.2,-5.3) {$X$};
 \node[gate,fill=white] at (10.2,-8.3) {$H$};
 \pic at (12.4,-5.3) {ppmone={Z/r_0}};
 \pic at (12.4,-6.3) {ppmone={Z/r_1}};
 \pic at (12.4,-7.3) {ppmone={Z/r_2}};
 \pic at (12.4,-8.3) {ppmone={Z/r_3}};
\end{tikzpicture}
\caption{The complete four-qubit adder source. Panel (b) continues panel (a).
Orange boxes apply X, including the targets of CNOTs. Blue phase gates rotate
about Z. All four public outputs are retained by the compiled QPPM program.}
\Description{Two consecutive four-wire panels show the complete adder input.
The circuit contains two X gates, two Hadamards, ten CNOTs, one S gate, four T
gates, four inverse-T gates, and four final Z measurements.}
\label{fig:adder-case-study}
\end{figure}

\paragraph{Step 1: count the component contracts.}
The source has two X gates, two H gates, ten CNOTs, one S gate, eight T or inverse-T
gates, and four Z readouts. The X gates compile to corrections. Each H and CNOT
uses six measurements, each S uses two, and each injection uses four. Therefore
\[
 \pwire{\mathrm{Adder}_4}\xrightarrow{\quad\varepsilon=0\quad}
 \pwire{d_{\mathrm{add}}},\qquad
 6\cdot2+6\cdot10+2\cdot1+4\cdot8+4=110.
\]
The certificate checks this bound for the complete target. It supplies eight ideal
magic inputs and two private classical locations $u,v\notin\Delta$.
Each basis ancilla has a local scope.

\paragraph{Step 2: compose assertions at gate boundaries.}
Write $g_1,\ldots,g_{27}$ for the displayed instructions and let $B$ be any public
postcondition. Set $A_{27}=B$ and $A_{i-1}=W_{g_i}(A_i)$.
For example, the last readout, the final H, and the preceding CNOT give successive
backward calculations
\[
 \begin{aligned}
 A_{26}(s)&=\sum_b\Pi_b^{Z_{q_3}}B(s[r_3\mapsto b])\Pi_b^{Z_{q_3}},\\
 A_{22}&=H_3^\dagger A_{23}H_3,\\
 A_{21}&=\mathrm{CNOT}_{3,0}^\dagger A_{22}\mathrm{CNOT}_{3,0}.
 \end{aligned}
\]
The three intervening readouts define $A_{25},A_{24},A_{23}$ by the same substitution
rule. Every local lowering has exactly the corresponding public transformer.
Sequencing joins these assertion boundaries from right to left. With the resource
embedding $V_8$ and assertion lift $D_8$, the complete result is
\[
 \vdash\dtriple{Q}{\pwire{d_{\mathrm{add}}}}{D_8(B)},\qquad
 V_8^\dagger Q(\sigma)V_8=A_0(\pi(\sigma)).
\]
There is no approximation term: this source needs no angle synthesis. The theorem
preserves every source postcondition for every data input. It is a compilation
correctness result. A functional proof that the source implements addition would
be a separate source-level specification.

\section{Approximation through classical feedback}
\label{app:inverseqft}

With inputs $\Gamma=(q_0,q_1,q_2,q_3)$ and $\Delta=(r_0,r_1,r_2,r_3)$,
the QASMBench source in Figure~\ref{fig:inverseqft-source} combines phase
synthesis with measurement-dependent control. Consider its guarded $\pi/8$
phase $R_{\pi/8}=\operatorname{diag}(1,e^{i\pi/8})$ on $q_3$.
Let $W$ be its synthesized word on the same wire, and let $\varepsilon_8$
be its checked effect allowance. The conditional rule compares these paths:
\[
 \pchoice{r_0}{R_{\pi/8}}{\cmd{skip}}
 \quad\xrightarrow{\ \varepsilon_8\ }\quad
 \pchoice{r_0}{W}{\cmd{skip}}.
\]
For $s(r_0)=0$, both weakest preconditions are $B(s)$, so their difference
is zero. For $s(r_0)=1$, they are $R_{\pi/8}^{\dagger}B(s)R_{\pi/8}$ and
$W^{\dagger}B(s)W$. The local certificate bounds this difference by
$\varepsilon_8 I$ in both directions. Thus the same allowance holds for
every store, without assuming a probability for either branch.
No gate in the Clifford+$T$ word changes a public bit. Testing the same guard
before each gate therefore makes the same branch selection throughout the word.

\begin{figure}[htbp]
\centering
\begin{tikzpicture}[x=.94cm,y=.86cm,font=\small,
 gate/.style={draw,fill=white,minimum width=.42cm,minimum height=.4cm,inner sep=2pt},
 phase/.style={gate,fill=ppmz}]
 \foreach \y/\q in {0/0,-1/1,-2/2,-3/3}{
   \draw (0,\y) -- (11.2,\y);
   \node[anchor=east] at (-.1,\y) {$q_\q$};
   \node[gate] at (.4,\y) {$H$};
 }
 \draw[densely dashed] (.95,.28) -- (.95,-3.28);
 \foreach \x/\y in {1.5/0,3.5/-1,6.2/-2,9.8/-3}{
   \node[gate] at (\x,\y) {$H$};
 }
 \foreach \x/\y/\r in {2.1/0/0,4.1/-1/1,6.8/-2/2,10.5/-3/3}{
   \pic at (\x,\y) {ppmone={Z/r_\r}};
 }
 \foreach \x/\y/\g/\r in {2.8/-1/S/0,4.8/-2/T/0,5.5/-2/S/1,
                              8.3/-3/T/1,9.1/-3/S/2}{
   \node[phase] at (\x,\y) {$\g$};
   \node[font=\scriptsize,fill=ppmclassical,inner sep=2pt] at (\x,\y+.48) {$r_\r$};
 }
 \node[phase] at (7.5,-3) {$R_{\pi/8}$};
 \node[font=\scriptsize,fill=ppmclassical,inner sep=2pt] at (7.5,-2.52) {$r_0$};
\end{tikzpicture}
\caption{The complete source named \texttt{inverseqft\_n4} in QASMBench.
A green bit label means that the phase executes when that bit is one.
The source's $\pi/2$ and $\pi/4$ phases are displayed as S and T. The
$\pi/8$ phase is replaced by a certified word of 210 Clifford+$T$ gates.
All six guards and all four public readouts occur in the checked expansion.}
\Description{Four wires start with H gates and a barrier. Each wire then ends
with H followed by a Z measurement. Before that final H, the second wire has
an S guarded by the first outcome. The third has T and S guarded by the first
two outcomes. The fourth has a pi-over-eight phase, T, and S guarded by the
first three outcomes. Operations appear in source order from left to right.}
\label{fig:inverseqft-source}
\end{figure}

The source has three $\pi/2$ phases, two $\pi/4$ phases, and one $\pi/8$
phase. The first two kinds use S and T, respectively. The analytic checker
assigns them conservative allowances $\varepsilon_S<1.612\times10^{-11}$ and
$\varepsilon_T<1.601\times10^{-11}$. The synthesized word has
$\varepsilon_8<1.263\times10^{-7}$. The exact gates receive a tiny nonzero allowance
because the checker encloses $\pi$ by rationals and bounds a Taylor remainder of the
phase, rather than recognizing $S$ and $T$ symbolically. Pulling the final effect back through
the measurements and composing the six guarded replacements gives
\[
 \varepsilon_{\mathrm{invQFT}}
   =3\varepsilon_S+2\varepsilon_T+\varepsilon_8
   <1.264\times10^{-7}.
\]
All comparisons and the sum use exact rational allowances in the certificate.
The source's 19 steps emit 227 Clifford+$T$ instructions. Their PPM target
reserves 85 ideal magic-state inputs and has a bound of 988 primitive
measurements. When a guard is false, the corresponding magic states stay
idle. Marginalizing them leaves the data unchanged. The proof covers these
skipped injections as well as the executed ones. Lowering adds no approximation
error, and the ordinary Hoare precondition loses only
$\varepsilon_{\mathrm{invQFT}}I$. This case does not add a parallel-depth result.

\section{A benchmark proof from source gates to measurements}
\label{app:cat-case-study}

We prove compilation correctness for the complete QASMBench input
\texttt{cat\_state\_n4} from Table~\ref{tab:qasmbench}. The proof first treats an
arbitrary public postcondition. It then derives the benchmark's cat-state readout
property. Every source instruction and every compiled measurement scope appears
below. The proof uses the weakest-precondition equations and the Hoare rules of
Section~\ref{sec:hoare}.

\paragraph{Step 1: state the source and goal.}
The input contexts are $\Gamma=(q_0,q_1,q_2,q_3)$ and
$\Delta=(r_0,r_1,r_2,r_3)$. The source at the benchmark revision used in
Section~\ref{sec:qasmbench} is
\[
 \begin{aligned}
 c={}&H_0\seq\mathrm{CNOT}_{01}\seq\mathrm{CNOT}_{12}
       \seq\mathrm{CNOT}_{23}\seq\\
     &M_{Z_0}(r_0)\seq M_{Z_1}(r_1)\seq M_{Z_2}(r_2)\seq M_{Z_3}(r_3).
 \end{aligned}
\]
Subscripts name data wires. Each CNOT's first index is its control.
The four input qubits are supplied by the caller. We impose the usual all-zero
input only when deriving the concrete output probabilities in Step~6.
Figure~\ref{fig:cat-case-study} places an assertion at every instruction boundary.

\begin{figure}[htbp]
\centering
\begin{tikzpicture}[x=1cm,y=.85cm,font=\small,>=Latex,
 gate/.style={draw,fill=white,minimum width=.44cm,minimum height=.4cm,inner sep=2pt}]
 \foreach \y/\j in {0/0,-1/1,-2/2,-3/3}{
   \draw (0,\y) -- (12,\y);
   \node[anchor=east] at (-.12,\y) {$q_\j$};
 }
 \node[gate] at (1,0) {$H$};
 \foreach \x/\c/\t in {2.4/0/-1,3.8/-1/-2,5.2/-2/-3}{
   \draw (\x,\c) -- (\x,\t);
   \fill (\x,\c) circle (1.8pt);
   \draw (\x,\t) circle (.14cm);
   \draw (\x,\t-.2) -- (\x,\t+.2);
 }
 \foreach \x/\y/\j in {6.6/0/0,8/-1/1,9.4/-2/2,10.8/-3/3}{
   \pic at (\x,\y) {ppmone={Z/r_\j}};
 }
 \foreach \x/\j in {1/1,2.4/2,3.8/3,5.2/4,6.6/5,8/6,9.4/7,10.8/8}{
   \node[font=\scriptsize,text=black!65] at (\x,.62) {\j};
 }
 \draw[gray] (.3,-4.05) -- (11.5,-4.05);
 \foreach \x/\j in {.3/0,1.7/1,3.1/2,4.5/3,5.9/4,7.3/5,8.7/6,10.1/7}{
   \draw[gray] (\x,-3.9) -- (\x,-4.2);
   \node[anchor=north,font=\scriptsize,inner sep=2pt] at (\x,-4.25) {$A_\j$};
 }
 \draw[gray] (11.5,-3.9) -- (11.5,-4.2);
 \node[anchor=north,font=\scriptsize,inner sep=2pt] at (11.5,-4.25) {$A_8=B$};
\end{tikzpicture}
\caption{The complete \texttt{cat\_state\_n4} source with public results $r_0,r_1,r_2,r_3$.
Steps 1--4 apply $H$ to $q_0$, then CNOTs from $q_0$ to $q_1$, from $q_1$ to $q_2$,
and from $q_2$ to $q_3$. Steps 5--8 measure $Z$ on $q_0,q_1,q_2,q_3$ in that order.
Each blue meter records its result in the displayed bit. The bottom strip places the
full-register assertions $A_0,\ldots,A_8$ at the boundaries of these eight steps.
The final assertion is $A_8=B$.}
\Description{Four quantum wires carry a Hadamard on qubit zero followed by a chain of
three CNOT gates. Four Z measurements occupy separate columns in source order and write
public bits zero through three. A bottom strip aligns nine assertion boundaries with
the eight operations, from A zero before the Hadamard to A eight equal to B after the
last measurement.}
\label{fig:cat-case-study}
\end{figure}

Let $B$ be any Hermitian assertion on the four data qubits and public store.
The target also has distinct private bits $a,b$, disjoint from $\Delta$.
Let $\pi$ forget these two bits. We will calculate $A_0=W_c(B)$ and derive
\begin{equation}
 \WP(d,B\circ\pi)=A_0\circ\pi,
 \qquad \vdash\triple{A_0\circ\pi}{d}{B\circ\pi}.
 \label{eq:cat-goal}
\end{equation}
The equality proves correctness for every public observation, including joint
quantum and classical assertions. It holds for every initial value of $a,b$.
There are no magic-state inputs or approximation allowances in this case.

\paragraph{Step 2: calculate all eight source boundaries.}
Write $Q_x^{(j)}=\Pi_x^{Z_j}$ and $C_{ij}=\mathrm{CNOT}_{ij}$ for the source
matrix. All omitted tensor factors are identities. Starting from $A_8=B$, four
uses of \eqref{wp:measure} give
\begin{align*}
 A_7(s)&=\sum_w Q_w^{(3)}B(s[r_3\mapsto w])Q_w^{(3)},\\
 A_6(s)&=\sum_{z,w}Q_z^{(2)}Q_w^{(3)}
    B(s[r_2\mapsto z,r_3\mapsto w])Q_w^{(3)}Q_z^{(2)},\\
 A_5(s)&=\sum_{y,z,w}Q_y^{(1)}Q_z^{(2)}Q_w^{(3)}
    B(s[r_1\mapsto y,r_2\mapsto z,r_3\mapsto w])
    Q_w^{(3)}Q_z^{(2)}Q_y^{(1)},\\
 A_4(s)&=\sum_{x,y,z,w}Q_x^{(0)}Q_y^{(1)}Q_z^{(2)}Q_w^{(3)}
    B(s[r_0\mapsto x,r_1\mapsto y,r_2\mapsto z,r_3\mapsto w])
    Q_w^{(3)}Q_z^{(2)}Q_y^{(1)}Q_x^{(0)}.
\end{align*}
Every summation variable ranges over $\{0,1\}$. In each line the measurement
substitutes its own result before evaluating the suffix assertion. For example,
$A_6(s)=\sum_z Q_z^{(2)}A_7(s[r_2\mapsto z])Q_z^{(2)}$.
The distinct result locations make the displayed substitutions commute.
The four remaining source transformers are unitary conjugations:
\[
 \begin{aligned}
 A_3&=C_{23}^{\dagger}A_4C_{23},&
 A_2&=C_{12}^{\dagger}A_3C_{12},\\
 A_1&=C_{01}^{\dagger}A_2C_{01},&
 A_0&=H_0^{\dagger}A_1H_0.
 \end{aligned}
\]
These equations specify every boundary in the source. We next derive the target
triples that connect the same boundaries.

\paragraph{Step 3: write every compiled scope.}
Table~\ref{tab:cat-scopes} lists the target's twelve scopes in execution order.
For row $j$, the complete command $d_j$ is
\begin{equation}
 \begin{aligned}
 d_j={}&\cmd{let}\ \nu=\cmd{ancilla}\ |0\rangle\ \cmd{in}\bigl(
       M_{X_\nu P_j}(a)\seq M_{Y_\nu}(b)\seq\\
     &\qquad\cmd{if}\ a\oplus b\ \cmd{then}\ \cmd{correct}\ P_j
       \ \cmd{else}\ \cmd{skip}\bigr).
 \end{aligned}
 \label{eq:cat-scope-program}
\end{equation}
Inside the scope, $P_j$ acts on the data and as identity on $\nu$.
Thus $X_\nu P_j=X\otimes P_j$ and $Y_\nu=Y\otimes I$.
For a negative $P_j$, the joint measurement is signed. The sign of its Pauli
correction cancels in conjugation, so the correction may use the unsigned product.
Each row prepares and discards its own fresh qubit. The bits $a,b$ are reused.

\begin{table}[htbp]
 \centering\small
 \begin{tabular}{@{}clcl@{}}
 \toprule
 Row $j$ & Source gate & $P_j$ & Derived triple\\
 \midrule
 1 & $H_0$ & $Z_0$ & $\{T_0\circ\pi\}\ d_1\ \{T_1\circ\pi\}$\\
 2 & & $X_0$ & $\{T_1\circ\pi\}\ d_2\ \{T_2\circ\pi\}$\\
 3 & & $Z_0$ & $\{T_2\circ\pi\}\ d_3\ \{T_3\circ\pi\}$\\
 \midrule
 4 & $\mathrm{CNOT}_{01}$ & $Z_0$ & $\{T_3\circ\pi\}\ d_4\ \{T_4\circ\pi\}$\\
 5 & & $X_1$ & $\{T_4\circ\pi\}\ d_5\ \{T_5\circ\pi\}$\\
 6 & & $-Z_0X_1$ & $\{T_5\circ\pi\}\ d_6\ \{T_6\circ\pi\}$\\
 \midrule
 7 & $\mathrm{CNOT}_{12}$ & $Z_1$ & $\{T_6\circ\pi\}\ d_7\ \{T_7\circ\pi\}$\\
 8 & & $X_2$ & $\{T_7\circ\pi\}\ d_8\ \{T_8\circ\pi\}$\\
 9 & & $-Z_1X_2$ & $\{T_8\circ\pi\}\ d_9\ \{T_9\circ\pi\}$\\
 \midrule
 10 & $\mathrm{CNOT}_{23}$ & $Z_2$ & $\{T_9\circ\pi\}\ d_{10}\ \{T_{10}\circ\pi\}$\\
 11 & & $X_3$ & $\{T_{10}\circ\pi\}\ d_{11}\ \{T_{11}\circ\pi\}$\\
 12 & & $-Z_2X_3$ & $\{T_{11}\circ\pi\}\ d_{12}\ \{T_{12}\circ\pi\}$\\
 \bottomrule
 \end{tabular}
 \caption{Every scope of the compiled cat-state benchmark.
 Each $d_j$ is the explicit program in Equation~\eqref{eq:cat-scope-program}
 with the displayed Pauli product. The assertions are calculated in Steps~4--5.}
 \label{tab:cat-scopes}
\end{table}

The full target is
\[
 d=d_1\seq d_2\seq\cdots\seq d_{12}\seq
 M_{Z_0}(r_0)\seq M_{Z_1}(r_1)\seq M_{Z_2}(r_2)\seq M_{Z_3}(r_3).
\]
The finite list contains 24 internal Pauli measurements and four public readouts.
It uses one temporary qubit at a time. These 28 measurements match the
original-source certificate's measurement bound. This proof starts from the
displayed typed source, as do the certificates. Parsing the benchmark file is
outside the logic.

Figure~\ref{fig:cat-ppm} draws the complete compiled circuit.
Its numbered scopes are the rows of Table~\ref{tab:cat-scopes}.
The boundary after scope $j$ carries $T_j\circ\pi$.
The displayed group boundaries connect these assertions to $A_0,\ldots,A_4$.
The four public meters then connect $A_4$ to $A_8=B$.
Step~4 proves each scope triple. Step~5 joins them in the displayed order.

\begin{figure}[p]
\centering
\textbf{Source step 1: scopes 1--3}\par\smallskip
\begin{ppmcompiledrow}{4}{A_0=T_0}{A_1=T_3}
 \ppmscope{0}{1}{}{Z,I,I,I}
 \ppmscope{4}{2}{}{X,I,I,I}
 \ppmscope{8}{3}{}{Z,I,I,I}
 \node at (4,\ppmcompiledassert) {$T_1$};
 \node at (8,\ppmcompiledassert) {$T_2$};
\end{ppmcompiledrow}\par\smallskip
\textbf{Source step 2: scopes 4--6}\par\smallskip
\begin{ppmcompiledrow}{4}{A_1=T_3}{A_2=T_6}
 \ppmscope{0}{4}{}{Z,I,I,I}
 \ppmscope{4}{5}{}{I,X,I,I}
 \ppmscope{8}{6}{-}{Z,X,I,I}
 \node at (4,\ppmcompiledassert) {$T_4$};
 \node at (8,\ppmcompiledassert) {$T_5$};
\end{ppmcompiledrow}\par\smallskip
\textbf{Source step 3: scopes 7--9}\par\smallskip
\begin{ppmcompiledrow}{4}{A_2=T_6}{A_3=T_9}
 \ppmscope{0}{7}{}{I,Z,I,I}
 \ppmscope{4}{8}{}{I,I,X,I}
 \ppmscope{8}{9}{-}{I,Z,X,I}
 \node at (4,\ppmcompiledassert) {$T_7$};
 \node at (8,\ppmcompiledassert) {$T_8$};
\end{ppmcompiledrow}\par\smallskip
\textbf{Source step 4: scopes 10--12}\par\smallskip
\begin{ppmcompiledrow}{4}{A_3=T_9}{A_4=T_{12}}
 \ppmscope{0}{10}{}{I,I,Z,I}
 \ppmscope{4}{11}{}{I,I,I,X}
 \ppmscope{8}{12}{-}{I,I,Z,X}
 \node at (4,\ppmcompiledassert) {$T_{10}$};
 \node at (8,\ppmcompiledassert) {$T_{11}$};
\end{ppmcompiledrow}\par\smallskip
\textbf{Source steps 5--8: public readouts}\par\smallskip
\begin{ppmcompiledrow}[.38]{4}{A_4}{A_8=B}
 \ppmreadout{1.5}{0}{Z}{r_0}
 \ppmreadout{4.5}{1}{Z}{r_1}
 \ppmreadout{7.5}{2}{Z}{r_2}
 \ppmreadout{10.5}{3}{Z}{r_3}
 \node at (3,\ppmcompiledassert) {$A_5$};
 \node at (6,\ppmcompiledassert) {$A_6$};
 \node at (9,\ppmcompiledassert) {$A_7$};
\end{ppmcompiledrow}
\caption{The complete compiled QPPM circuit for \texttt{cat\_state\_n4}.
Read left to right, then top to bottom, keeping the same data qubits.
Each scope prepares a fresh zero ancilla. Its tall meter measures the displayed
Pauli product into $a$. The purple meter measures $Y_\nu$ into $b$.
The $\oplus$ node applies the unsigned correction product iff $a\oplus b=1$.
Double bars discard the ancilla. The next scope reuses private bits $a,b$.
Minus signs belong to the joint observables. All assertions are lifted by $\pi$.
Steps~4--5 derive these boundaries. All 28 measurements are shown before fusion.}
\Description{Five consecutive circuit rows show the full compiled cat-state
program on four persistent data wires. Twelve fresh ancilla scopes each have
a joint Pauli measurement, a Y measurement, a parity-controlled correction, and
ancilla disposal. The signed joint measurements are in scopes six, nine, and
twelve. Four final Z meters write the public bits. Assertions connect every
row to the source instruction boundaries.}
\label{fig:cat-ppm}
\end{figure}

\paragraph{Step 4: derive a row using the primitive and structural rules.}
Fix a row, put $P=P_j$, and let $F$ be its public suffix assertion.
We calculate backwards through every command in that row.
Write $\sigma$ for a store containing $a,b$ and the public bits. At the body exit,
\eqref{wp:ancilla} requires the assertion
$D(\sigma)=I_2\otimes F(\pi(\sigma))$.
Set $C_{xy}=I_2\otimes P^{x\oplus y}$.
Let $f$ denote the conditional displayed in Equation~\eqref{eq:cat-scope-program}.
The four backward steps are
\begin{align}
 D_2(\sigma)
   &=C_{\sigma(a),\sigma(b)}D(\sigma)C_{\sigma(a),\sigma(b)},
       &&\text{\eqref{wp:if}, \eqref{wp:correct}, \eqref{wp:skip}},\label{eq:cat-feedback-wp}\\
 D_1(\sigma)
   &=\sum_y\Pi_y^{Y_\nu}C_{\sigma(a),y}D(\sigma)C_{\sigma(a),y}\Pi_y^{Y_\nu},
       &&\text{\eqref{wp:measure}},\label{eq:cat-y-wp}\\
 D_0(\sigma)
   &=\sum_{x,y}\Pi_x^{X_\nu P}\Pi_y^{Y_\nu}C_{xy}D(\sigma)C_{xy}
                    \Pi_y^{Y_\nu}\Pi_x^{X_\nu P},
       &&\text{\eqref{wp:measure}},\label{eq:cat-joint-wp}\\
 \WP(d_j,F\circ\pi)(\sigma)
   &=E_0^\dagger D_0(\sigma)E_0,
       &&\text{\eqref{wp:seq}, \eqref{wp:ancilla}}.\label{eq:cat-ancilla-wp}
\end{align}
The substitution for $b$ removes its old value in the second line. The substitution
for $a$ removes its old value in the third. Neither substitution changes $D$.
This proves independence from the initial scratch values.

Rules~\ref{rule:primitive} and \ref{rule:skip} supply the leaves:
\[
 \vdash\triple{(I_2\otimes P)D(I_2\otimes P)}{\cmd{correct}\ P}{D},
 \qquad\vdash\triple{D}{\cmd{skip}}{D}.
\]
Rule~\ref{rule:if} combines these leaves into $\triple{D_2}{f}{D}$.
The two measurement instances of \ref{rule:primitive} give
\[
 \dtriple{D_0}{\pmeter{X_\nu P}{a}}{D_1},
 \qquad
 \dtriple{D_1}{\pmeter[Y]{Y_\nu}{b}}{D_2}.
\]
Two applications of \ref{rule:seq} join these triples to the conditional.
Rule~\ref{rule:ancilla} then yields
\[
 \frac{\displaystyle
   \vdash\dtriple{D_0}{\pmeter{X_\nu P}{a}\!\pmeter[Y]{Y_\nu}{b}\!\pwire{f}}{D}}
 {\displaystyle
   \vdash\htriple{E_0^\dagger D_0E_0}{d_j}{F\circ\pi}}
 \quad\text{\ref{rule:ancilla}}.
\]
Thus the row triple is derived from the listed rules, with its precondition still
given by an explicit matrix expression.

To simplify that expression, expand
$D=\sum_{z=0}^1 E_zF(\pi(\sigma))E_z^\dagger$ and put
$K_{xyz}=E_z^\dagger C_{xy}\Pi_y^{Y_\nu}\Pi_x^{X_\nu P}E_0$.
The Pauli-matrix calculation in Equation~\eqref{eq:phase-kraus} gives
\[
 \begin{aligned}
 K_{xy0}&=\tfrac14P^{x\oplus y}\bigl(I-i(-1)^{x\oplus y}P\bigr)
          =\alpha_{xy0}U(P),\\
 K_{xy1}&=i(-1)^yK_{xy0},\qquad
 \alpha_{xyz}=[i(-1)^y]^z\frac{1-i(-1)^{x\oplus y}}4.
 \end{aligned}
\]
Here $U(P)=\Pi_0^P+i\Pi_1^P$ is the matrix defined in
Section~\ref{sec:gate-gallery}. Each of the eight scalars has squared magnitude
$2/16=1/8$. Therefore
\[
 E_0^\dagger D_0(\sigma)E_0
 =\sum_{x,y,z}K_{xyz}^\dagger F(\pi(\sigma))K_{xyz}
 =U(P)^\dagger F(\pi(\sigma))U(P).
\]
This calculation applies to the whole four-qubit matrix $F$.
It does not require a product state or a separable assertion.

\paragraph{Step 5: join all target triples and identify the source precondition.}
Set $T_{12}=A_4$. Read the twelve rows of Table~\ref{tab:cat-scopes} backwards,
using the calculated precondition
\[
 T_{j-1}=U(P_j)^\dagger T_jU(P_j),\qquad j=12,11,\ldots,1.
\]
For example, the last three instances are
\[
 \begin{aligned}
 T_{11}&=U(-Z_2X_3)^\dagger A_4U(-Z_2X_3),\\
 T_{10}&=U(X_3)^\dagger T_{11}U(X_3),\\
 T_9&=U(Z_2)^\dagger T_{10}U(Z_2)=C_{23}^\dagger A_4C_{23}=A_3.
 \end{aligned}
\]
The last equality uses
$U(-Z_2X_3)U(X_3)U(Z_2)=iC_{23}$.
Applying the same matrix identity to the two other specified pairs, followed by
$U(Z_0)U(X_0)U(Z_0)=e^{i\pi/4}H_0$, gives all gate boundaries:
\[
 \begin{array}{c|c|c}
 \text{rows}&\text{matrix in execution order}&\text{backward boundary}\\\hline
 10,11,12&U(-Z_2X_3)U(X_3)U(Z_2)=iC_{23}&T_9=A_3\\
 7,8,9&U(-Z_1X_2)U(X_2)U(Z_1)=iC_{12}&T_6=A_2\\
 4,5,6&U(-Z_0X_1)U(X_1)U(Z_0)=iC_{01}&T_3=A_1\\
 1,2,3&U(Z_0)U(X_0)U(Z_0)=e^{i\pi/4}H_0&T_0=A_0
 \end{array}
\]
Matrices multiply from right to left. Their scalar phases cancel in every
conjugation. Step~4 supplies the twelve row triples in the table.
Step~2 and \ref{rule:primitive} supply the four readout triples
\[
 \vdash\dtriple{A_{4+j}\circ\pi}{\pmeter[Z]{Z_j}{r_j}}
                        {A_{5+j}\circ\pi},\qquad j=0,1,2,3.
\]
Their boundaries match because $T_{12}=A_4$ and $A_8=B$.
Fifteen applications of \ref{rule:seq} join the sixteen target triples into
Equation~\eqref{eq:cat-goal}. Replacing each triple by its calculated
weakest-precondition equality and applying \eqref{wp:seq} gives its equality
statement as well. This proves the complete compilation contract.

\paragraph{Step 6: derive the benchmark's concrete readout guarantee.}
Choose $B=[r_0=r_1=r_2=r_3]$. In the following display a bracket is a scalar
Boolean indicator, and every $r_j$ in an assertion means $s(r_j)$.
Substitution into Step~2 evaluates every boundary:
\[
 \begin{array}{c|l|l}
 \text{boundary}&\text{assertion}&\text{backward step}\\\hline
 A_8&[r_0=r_1=r_2=r_3]I&\text{postcondition}\\
 A_7&[r_0=r_1=r_2]Q_{r_0}^{(3)}&\text{\eqref{wp:measure} for }r_3\\
 A_6&[r_0=r_1]Q_{r_0}^{(2)}Q_{r_0}^{(3)}&\text{\eqref{wp:measure} for }r_2\\
 A_5&Q_{r_0}^{(1)}Q_{r_0}^{(2)}Q_{r_0}^{(3)}&\text{\eqref{wp:measure} for }r_1\\
 A_4&\sum_x Q_x^{(0)}Q_x^{(1)}Q_x^{(2)}Q_x^{(3)}&\text{\eqref{wp:measure} for }r_0\\
 A_3&\sum_x Q_x^{(0)}Q_x^{(1)}Q_x^{(2)}Q_0^{(3)}&C_{23}\text{ conjugation}\\
 A_2&\sum_x Q_x^{(0)}Q_x^{(1)}Q_0^{(2)}Q_0^{(3)}&C_{12}\text{ conjugation}\\
 A_1&Q_0^{(1)}Q_0^{(2)}Q_0^{(3)}&C_{01}\text{ conjugation}\\
 A_0&A_1&H_0\text{ conjugation}
 \end{array}
\]
In particular, $A_1=I_2\otimes|000\rangle\langle000|$.
For example, $A_4$ projects onto $\operatorname{span}\{|0000\rangle,|1111\rangle\}$.
Conjugating by $C_{23}$ changes this span to
$\operatorname{span}\{|0000\rangle,|1110\rangle\}$, which is $A_3$.
The next two CNOTs similarly give the displayed $A_2$ and $A_1$.
Finally $H_0$ acts on the identity factor of $A_1$, so it leaves the assertion unchanged.

Put $Q_{0000}=|0000\rangle\langle0000|$.
The difference $A_0-Q_{0000}=|1000\rangle\langle1000|$ is positive.
Applying \ref{rule:consequence} to the triple already derived in Step~5 gives
\[
 \frac{Q_{0000}\preceq A_0\qquad
       \vdash\htriple{A_0}{d}{[r_0=r_1=r_2=r_3]}}
      {\vdash\htriple{Q_{0000}}{d}{[r_0=r_1=r_2=r_3]}}.
\]
Here the displayed assertions are lifted by $\pi$ implicitly.
On input $\rho=Q_{0000}$, duality gives a lower bound of one for the event's
probability. Trace preservation gives the matching upper bound.

For the separate events $E_x\equiv(r_0=r_1=r_2=r_3=x)$, the same backward
calculation retains just the $x$ summand at $A_4$ and gives
\[
 \WP(d,[E_x]\circ\pi)=\Pi_x^{X_0}Q_0^{(1)}Q_0^{(2)}Q_0^{(3)},\qquad
 \Pr_d[E_x]=\tr\!\left(Q_{0000}\Pi_x^{X_0}Q_0^{(1)}Q_0^{(2)}Q_0^{(3)}\right)
           =\tfrac12.
\]
Thus the compiled program returns $0000$ or $1111$, each with probability $1/2$.
The probabilities follow from exact weakest preconditions on the specified input.

\paragraph{Step 7: reuse the derivation for the benchmark's fusion.}
The certified fusion for this benchmark combines rows 3 and 4. Both have
$P_j=Z_0$. For every public suffix assertion $F$, Step~4 and \eqref{wp:seq} give
\[
 \begin{aligned}
 \WP(d_3\seq d_4,F\circ\pi)
   &=\bigl((U(Z_0)^2)^\dagger F U(Z_0)^2\bigr)\circ\pi\\
   &=(Z_0FZ_0)\circ\pi
    =\WP(\cmd{correct}\ Z_0,F\circ\pi).
 \end{aligned}
\]
The middle equality uses $U(Z_0)^2=Z_0$.
Rule~\ref{rule:primitive} supplies the replacement correction's triple.
Rule~\ref{rule:seq} joins it to the unchanged prefix and suffix.
Every suffix precondition is public by Steps~2 and 4, so the private-bit values
left by the two programs cannot affect this composition.
The fused target therefore satisfies Equation~\eqref{eq:cat-goal} and the same
concrete readout guarantees. Replacing the two scopes removes four internal
measurements, reducing the count from 28 to 24.

\section{Grover search: a complete compilation proof}
\label{app:grover-case-study}

The QASMBench input \texttt{grover\_n2} implements Grover's search algorithm on
four items~\cite{grover1996search,li2022qasmbench}. We prove that its compiled
measurement program returns the marked item with probability one on the prescribed
input. We also prove equality of source and target weakest preconditions for
every public postcondition. The proof follows all 18 original source instructions.

\paragraph{Step 1: identify the search problem and source.}
The quantum inputs are $\Gamma=(q_0,q_1)$ and the public bits are
$\Delta=(r_0,r_1)$. Basis states are written $|q_0q_1\rangle$.
Figure~\ref{fig:grover-case-study} shows the entire source.
Writing $C=\mathrm{CNOT}_{01}$ for the source matrix, its instruction list is
\[
 \begin{aligned}
 c={}&\underbrace{H_0\seq H_1}_{\text{preparation}}\seq
       \underbrace{H_1\seq C\seq H_1}_{\text{oracle}}\seq\\
     &\underbrace{H_0\seq H_1\seq X_0\seq X_1\seq H_1\seq C\seq H_1
             \seq X_0\seq X_1\seq H_0\seq H_1}_{\text{diffusion}}\seq\\
     &M_{Z_0}(r_0)\seq M_{Z_1}(r_1).
 \end{aligned}
\]
The adjacent $H_1$ instructions at positions 2 and 3 are retained.
Let $g_i$ be source instruction $i$, and let $A_i$ be the assertion after it.
The initial assertion is $A_0$.

\begin{figure}[htbp]
\centering
\begin{tikzpicture}[x=.62cm,y=1cm,font=\scriptsize,>=Latex,
 gate/.style={draw,fill=white,minimum width=.34cm,minimum height=.34cm,inner sep=1pt}]
 \foreach \y/\j in {0/0,-1/1}{
   \draw (0,\y) -- (19,\y);
   \node[anchor=east] at (-.15,\y) {$q_\j$};
 }
 \foreach \x/\y in {1/0,2/-1,3/-1,5/-1,6/0,7/-1,10/-1,12/-1,15/0,16/-1}{
   \node[gate] at (\x,\y) {$H$};
 }
 \foreach \x/\y in {8/0,9/-1,13/0,14/-1}{
   \node[gate,pauli/X] at (\x,\y) {$X$};
 }
 \foreach \x in {4,11}{
   \draw (\x,0) -- (\x,-1);
   \fill (\x,0) circle (1.6pt);
   \draw (\x,-1) circle (.12cm);
   \draw (\x,-1.12) -- (\x,-.88);
 }
 \pic at (17,0) {ppmone={Z/r_0}};
 \pic at (18,-1) {ppmone={Z/r_1}};
 \foreach \j in {1,...,18}{
   \node[text=black!65] at (\j,.42) {\j};
 }
 \foreach \a/\b/\mid/\name in {.5/2.5/1.5/preparation,2.5/5.5/4/oracle,
                                 5.5/16.5/11/diffusion,16.5/18.5/17.5/readout}{
   \draw[gray] (\a,.72) -- (\a,.81) -- (\b,.81) -- (\b,.72);
   \node[anchor=south,inner sep=2pt] at (\mid,.84) {\name};
 }
 \draw[gray] (.5,-1.94) -- (18.5,-1.94);
 \foreach \x/\j in {.5/0,2.5/2,5.5/5,16.5/16}{
   \draw[gray] (\x,-1.84) -- (\x,-2.04);
   \node[anchor=north,inner sep=2pt] at (\x,-2.08) {$A_\j$};
 }
 \draw[gray] (18.5,-1.84) -- (18.5,-2.04);
 \node[anchor=north,inner sep=2pt] at (18.5,-2.08) {$A_{18}=B$};
\end{tikzpicture}
\caption{The complete 18-step \texttt{grover\_n2} source. Steps 1--2 prepare the
search state. Steps 3--5 implement the oracle that marks $|11\rangle$ in the
$|q_0q_1\rangle$ basis. Steps 6--16 perform diffusion. Steps 17--18 measure $Z$
and record the public bits $r_0,r_1$. The adjacent $H$ gates on $q_1$ are retained
in source order. The bottom strip places full-register assertions before preparation
and after each algorithm stage. The final assertion is $A_{18}=B$.}
\Description{Two quantum wires carry all eighteen source operations in order.
The preparation has two Hadamards. The oracle has a Hadamard on qubit one, a CNOT
from qubit zero to one, and another Hadamard on qubit one. Diffusion has Hadamards
on both qubits, X on both qubits, a Hadamard on qubit one, a CNOT, another Hadamard
on qubit one, X on both qubits, and Hadamards on both qubits. Separate blue Z
meters then write r zero and r one. The assertion strip marks boundaries zero,
two, five, sixteen, and eighteen.}
\label{fig:grover-case-study}
\end{figure}

Put $|s\rangle=|++\rangle=(|00\rangle+|01\rangle+|10\rangle+|11\rangle)/2$.
The source's oracle matrix is
\[
 O=H_1CH_1=\operatorname{diag}(1,1,1,-1)
   =I-2|11\rangle\langle11|.
\]
It marks item $11$. The diffusion block has matrix
\[
 D=(H\otimes H)(X\otimes X)O(X\otimes X)(H\otimes H)
   =I-2|s\rangle\langle s|.
\]
This is the usual diffusion matrix up to a global sign.
In particular, $O|s\rangle=|s\rangle-|11\rangle$ and
$\langle s|O|s\rangle=1/2$, so $DO|s\rangle=-|11\rangle$.
The Hoare derivation below establishes this search guarantee for the compiled
Pauli-measurement program.

\paragraph{Step 2: pull back an arbitrary postcondition.}
Let $B$ be any Hermitian assertion on the two data qubits and public store.
Put $A_{18}=B$ and $Q_x^{(j)}=\Pi_x^{Z_j}$.
The two readout instances of \eqref{wp:measure} give
\[
 \begin{aligned}
 A_{17}(s)&=\sum_y Q_y^{(1)}B(s[r_1\mapsto y])Q_y^{(1)},\\
 A_{16}(s)&=\sum_{x,y}Q_x^{(0)}Q_y^{(1)}
        B(s[r_0\mapsto x,r_1\mapsto y])Q_y^{(1)}Q_x^{(0)}.
 \end{aligned}
\]
Here $x,y\in\{0,1\}$. Each measurement substitutes the bit it records before
evaluating the postcondition. For all sixteen preceding gates, calculate
\begin{equation}
 A_{i-1}=g_i^\dagger A_i g_i,\qquad i=16,15,\ldots,1.
 \label{eq:grover-source-wp}
\end{equation}
Table~\ref{tab:grover-lowering} lists every $g_i$ in this recursion.
Thus $A_0=W_c(B)$. We now derive a target triple at each of these boundaries.

\paragraph{Step 3: expand the target into Pauli measurements.}
The target uses distinct private bits $a,b$, disjoint from $\Delta$.
Let $\pi$ forget them. Its 36 ancilla scopes are the following finite commands,
with the products $P_j$ specified in Table~\ref{tab:grover-lowering}:
\begin{equation}
 \begin{aligned}
 d_j={}&\cmd{let}\ \nu=\cmd{ancilla}\ |0\rangle\ \cmd{in}\bigl(
       M_{X_\nu P_j}(a)\seq M_{Y_\nu}(b)\seq\\
     &\qquad\cmd{if}\ a\oplus b\ \cmd{then}\ \cmd{correct}\ P_j
                \ \cmd{else}\ \cmd{skip}\bigr).
 \end{aligned}
 \label{eq:grover-scope}
\end{equation}
The body context is $(\nu,q_0,q_1)$. Each scope prepares and discards a fresh
zero qubit. Products $P_j$ act as identity on $\nu$.
For $P_j=-Z_0X_1$, the first measurement is of $-X_\nu Z_0X_1$.
The sign of its conditional Pauli correction cancels in conjugation.

\begin{table}[htbp]
 \centering\small
 \begin{tabular}{@{}clll@{}}
 \toprule
 Source step $i$ & $g_i$ & Target command $c_i$ & Ordered products $P_j$\\
 \midrule
 1 & $H_0$ & $d_1\seq d_2\seq d_3$ & $(Z_0,X_0,Z_0)$\\
 2 & $H_1$ & $d_4\seq d_5\seq d_6$ & $(Z_1,X_1,Z_1)$\\
 3 & $H_1$ & $d_7\seq d_8\seq d_9$ & $(Z_1,X_1,Z_1)$\\
 4 & $C$ & $d_{10}\seq d_{11}\seq d_{12}$ & $(Z_0,X_1,-Z_0X_1)$\\
 5 & $H_1$ & $d_{13}\seq d_{14}\seq d_{15}$ & $(Z_1,X_1,Z_1)$\\
 6 & $H_0$ & $d_{16}\seq d_{17}\seq d_{18}$ & $(Z_0,X_0,Z_0)$\\
 7 & $H_1$ & $d_{19}\seq d_{20}\seq d_{21}$ & $(Z_1,X_1,Z_1)$\\
 8 & $X_0$ & $\cmd{correct}\ X_0$ & ---\\
 9 & $X_1$ & $\cmd{correct}\ X_1$ & ---\\
 10 & $H_1$ & $d_{22}\seq d_{23}\seq d_{24}$ & $(Z_1,X_1,Z_1)$\\
 11 & $C$ & $d_{25}\seq d_{26}\seq d_{27}$ & $(Z_0,X_1,-Z_0X_1)$\\
 12 & $H_1$ & $d_{28}\seq d_{29}\seq d_{30}$ & $(Z_1,X_1,Z_1)$\\
 13 & $X_0$ & $\cmd{correct}\ X_0$ & ---\\
 14 & $X_1$ & $\cmd{correct}\ X_1$ & ---\\
 15 & $H_0$ & $d_{31}\seq d_{32}\seq d_{33}$ & $(Z_0,X_0,Z_0)$\\
 16 & $H_1$ & $d_{34}\seq d_{35}\seq d_{36}$ & $(Z_1,X_1,Z_1)$\\
 17 & $M_{Z_0}(r_0)$ & $M_{Z_0}(r_0)$ & ---\\
 18 & $M_{Z_1}(r_1)$ & $M_{Z_1}(r_1)$ & ---\\
 \bottomrule
 \end{tabular}
 \caption{Complete lowering of the original Grover benchmark.
 Each listed $d_j$ is the explicit Pauli-measurement scope in
 Equation~\eqref{eq:grover-scope}. The four direct corrections retain their
 positions among the scopes.}
 \label{tab:grover-lowering}
\end{table}

The target is $d=c_1\seq c_2\seq\cdots\seq c_{18}$.
There are ten Hadamards, two CNOTs, four direct corrections, and two readouts.
The twelve non-Pauli gates expand into 36 scopes, giving
$36\cdot2+2=74$ measurements. This agrees with the original-source certificate.
The target needs one temporary ancilla at a time and no magic states.
The depth experiment in Table~\ref{tab:qasmbench} starts from the shorter
Qiskit output. Here we follow the original typed source, whose parsing remains
outside the Hoare logic.

Figures~\ref{fig:grover-ppm-a}--\ref{fig:grover-ppm-c} draw the full compiled
circuit in source order. Each source step $i$ connects $A_{i-1}\circ\pi$ to
$A_i\circ\pi$. The two internal boundaries of each three-scope row carry
$F_1\circ\pi$ and $F_2\circ\pi$ from Step~4.
The four direct corrections and the two public readouts retain their original
positions. The diagrams therefore display the same eighteen target triples
that Step~4 composes.

\clearpage
\begin{figure}[p]
\centering
\textbf{Source step 1: scopes 1--3}\par\smallskip
\begin{ppmcompiledrow}{2}{A_0}{A_1}
 \ppmscope{0}{1}{}{Z,I}
 \ppmscope{4}{2}{}{X,I}
 \ppmscope{8}{3}{}{Z,I}
 \node at (4,\ppmcompiledassert) {$F_1$};
 \node at (8,\ppmcompiledassert) {$F_2$};
\end{ppmcompiledrow}\par\medskip
\textbf{Source step 2: scopes 4--6}\par\smallskip
\begin{ppmcompiledrow}{2}{A_1}{A_2}
 \ppmscope{0}{4}{}{I,Z}
 \ppmscope{4}{5}{}{I,X}
 \ppmscope{8}{6}{}{I,Z}
 \node at (4,\ppmcompiledassert) {$F_1$};
 \node at (8,\ppmcompiledassert) {$F_2$};
\end{ppmcompiledrow}\par\medskip
\textbf{Source step 3: scopes 7--9}\par\smallskip
\begin{ppmcompiledrow}{2}{A_2}{A_3}
 \ppmscope{0}{7}{}{I,Z}
 \ppmscope{4}{8}{}{I,X}
 \ppmscope{8}{9}{}{I,Z}
 \node at (4,\ppmcompiledassert) {$F_1$};
 \node at (8,\ppmcompiledassert) {$F_2$};
\end{ppmcompiledrow}\par\medskip
\textbf{Source step 4: scopes 10--12}\par\smallskip
\begin{ppmcompiledrow}{2}{A_3}{A_4}
 \ppmscope{0}{10}{}{Z,I}
 \ppmscope{4}{11}{}{I,X}
 \ppmscope{8}{12}{-}{Z,X}
 \node at (4,\ppmcompiledassert) {$F_1$};
 \node at (8,\ppmcompiledassert) {$F_2$};
\end{ppmcompiledrow}\par\medskip
\textbf{Source step 5: scopes 13--15}\par\smallskip
\begin{ppmcompiledrow}{2}{A_4}{A_5}
 \ppmscope{0}{13}{}{I,Z}
 \ppmscope{4}{14}{}{I,X}
 \ppmscope{8}{15}{}{I,Z}
 \node at (4,\ppmcompiledassert) {$F_1$};
 \node at (8,\ppmcompiledassert) {$F_2$};
\end{ppmcompiledrow}
\caption{The compiled Grover circuit, part 1 of 3: preparation and oracle.
Read rows from left to right and top to bottom. The two data qubits persist
between rows. Every numbered scope prepares a fresh $|0\rangle_\nu$,
measures the displayed joint Pauli product into $a$, measures $Y_\nu$ into $b$,
and applies the unsigned correction product exactly when $a\oplus b=1$.
Double end bars discard the ancilla. Each following scope reuses the private
locations $a,b$. The minus sign in scope 12 is part of the measured observable.
The row assertions are lifted by $\pi$. They are the boundaries proved in
Step~4. The labels $F_1,F_2$ are local to each row.
Continue at $A_5$ in Figure~\ref{fig:grover-ppm-b}.}
\Description{Five consecutive rows expand the first five Grover source
instructions into fifteen explicit Pauli-measurement scopes. Each scope shows
the joint meter, ancilla Y meter, parity feedback, and ancilla disposal.
The source preparation and oracle have no remaining Hadamard or CNOT boxes.
Assertions run from A zero to A five.}
\label{fig:grover-ppm-a}
\end{figure}

\begin{figure}[p]
\centering
\textbf{Source step 6: scopes 16--18}\par\smallskip
\begin{ppmcompiledrow}{2}{A_5}{A_6}
 \ppmscope{0}{16}{}{Z,I}
 \ppmscope{4}{17}{}{X,I}
 \ppmscope{8}{18}{}{Z,I}
 \node at (4,\ppmcompiledassert) {$F_1$};
 \node at (8,\ppmcompiledassert) {$F_2$};
\end{ppmcompiledrow}\par\medskip
\textbf{Source step 7: scopes 19--21}\par\smallskip
\begin{ppmcompiledrow}{2}{A_6}{A_7}
 \ppmscope{0}{19}{}{I,Z}
 \ppmscope{4}{20}{}{I,X}
 \ppmscope{8}{21}{}{I,Z}
 \node at (4,\ppmcompiledassert) {$F_1$};
 \node at (8,\ppmcompiledassert) {$F_2$};
\end{ppmcompiledrow}\par\medskip
\textbf{Source steps 8--9: direct corrections}\par\smallskip
\begin{ppmcompiledrow}[.38]{2}{A_7}{A_9}
 \ppmcorrection{4}{X,I}
 \ppmcorrection{8}{I,X}
 \node at (6,\ppmcompiledassert) {$A_8$};
\end{ppmcompiledrow}\par\medskip
\textbf{Source step 10: scopes 22--24}\par\smallskip
\begin{ppmcompiledrow}{2}{A_9}{A_{10}}
 \ppmscope{0}{22}{}{I,Z}
 \ppmscope{4}{23}{}{I,X}
 \ppmscope{8}{24}{}{I,Z}
 \node at (4,\ppmcompiledassert) {$F_1$};
 \node at (8,\ppmcompiledassert) {$F_2$};
\end{ppmcompiledrow}\par\medskip
\textbf{Source step 11: scopes 25--27}\par\smallskip
\begin{ppmcompiledrow}{2}{A_{10}}{A_{11}}
 \ppmscope{0}{25}{}{Z,I}
 \ppmscope{4}{26}{}{I,X}
 \ppmscope{8}{27}{-}{Z,X}
 \node at (4,\ppmcompiledassert) {$F_1$};
 \node at (8,\ppmcompiledassert) {$F_2$};
\end{ppmcompiledrow}\par\medskip
\textbf{Source step 12: scopes 28--30}\par\smallskip
\begin{ppmcompiledrow}{2}{A_{11}}{A_{12}}
 \ppmscope{0}{28}{}{I,Z}
 \ppmscope{4}{29}{}{I,X}
 \ppmscope{8}{30}{}{I,Z}
 \node at (4,\ppmcompiledassert) {$F_1$};
 \node at (8,\ppmcompiledassert) {$F_2$};
\end{ppmcompiledrow}
\caption{The compiled Grover circuit, part 2 of 3: the first diffusion steps.
Continue from Figure~\ref{fig:grover-ppm-a} at $A_5$.
The first direct $X_0$ and $X_1$ corrections occur after scope 21 and before
scope 22. These two boxes have no classical control.
Their intermediate assertion is $A_8\circ\pi$.
All other correction columns retain their local parity controls.
The signed joint meter in scope 27 measures $-X_\nu Z_0X_1$.
Continue at $A_{12}$ in Figure~\ref{fig:grover-ppm-c}.}
\Description{Six consecutive rows show scopes sixteen through thirty and the
two direct X corrections between scopes twenty-one and twenty-two.
Every measurement scope includes its fresh ancilla and feedback control.
The final signed joint measurement in this part is scope twenty-seven.
The proof boundaries run from A five to A twelve.}
\label{fig:grover-ppm-b}
\end{figure}

\begin{figure}[p]
\centering
\textbf{Source steps 13--14: direct corrections}\par\smallskip
\begin{ppmcompiledrow}[.38]{2}{A_{12}}{A_{14}}
 \ppmcorrection{4}{X,I}
 \ppmcorrection{8}{I,X}
 \node at (6,\ppmcompiledassert) {$A_{13}$};
\end{ppmcompiledrow}\par\medskip
\textbf{Source step 15: scopes 31--33}\par\smallskip
\begin{ppmcompiledrow}{2}{A_{14}}{A_{15}}
 \ppmscope{0}{31}{}{Z,I}
 \ppmscope{4}{32}{}{X,I}
 \ppmscope{8}{33}{}{Z,I}
 \node at (4,\ppmcompiledassert) {$F_1$};
 \node at (8,\ppmcompiledassert) {$F_2$};
\end{ppmcompiledrow}\par\medskip
\textbf{Source step 16: scopes 34--36}\par\smallskip
\begin{ppmcompiledrow}{2}{A_{15}}{A_{16}}
 \ppmscope{0}{34}{}{I,Z}
 \ppmscope{4}{35}{}{I,X}
 \ppmscope{8}{36}{}{I,Z}
 \node at (4,\ppmcompiledassert) {$F_1$};
 \node at (8,\ppmcompiledassert) {$F_2$};
\end{ppmcompiledrow}\par\medskip
\textbf{Source steps 17--18: public readouts}\par\smallskip
\begin{ppmcompiledrow}[.38]{2}{A_{16}}{A_{18}=B}
 \ppmreadout{4}{0}{Z}{r_0}
 \ppmreadout{8}{1}{Z}{r_1}
 \node at (6,\ppmcompiledassert) {$A_{17}$};
\end{ppmcompiledrow}
\caption{The compiled Grover circuit, part 3 of 3: final diffusion steps and
readout. Continue from Figure~\ref{fig:grover-ppm-b} at $A_{12}$.
The remaining two direct corrections occur before scope 31.
Their intermediate assertion is $A_{13}\circ\pi$.
The final meters record $Z_0$ in $r_0$, then $Z_1$ in $r_1$.
Their intermediate assertion is $A_{17}\circ\pi$.
The three figures contain all 36 scopes, four direct corrections, and two
public readouts. Step~4 joins their triples to prove compilation correctness.
Step~5 instantiates $B=[r_0=r_1=1]$ and derives the input assertion
$|00\rangle\langle00|$.}
\Description{Four consecutive rows show two direct X corrections, six remaining
measurement scopes, and two final public Z readouts. The direct corrections
precede scope thirty-one. Scope thirty-six is followed by measurements into
bits r zero and r one. Assertions connect A twelve to A eighteen equal to B.}
\label{fig:grover-ppm-c}
\end{figure}

\clearpage

\paragraph{Step 4: derive each target boundary by the Hoare rules.}
The scope calculation in Appendix~\ref{app:cat-case-study}, Step~4, uses
$P^2=I$ and the Pauli projector laws. It applies to this two-qubit data register.
For a public suffix assertion $F$, the conditional is proved by
\ref{rule:primitive}, \ref{rule:skip}, and \ref{rule:if}.
Two measurement instances of \ref{rule:primitive} and two uses of
\ref{rule:seq} prove the scope body. Rule~\ref{rule:ancilla} compresses its
precondition by the zero-ancilla inclusion $E_0$.
The resulting weakest-precondition calculation is
\[
 \begin{aligned}
 \WP(d_j,F\circ\pi)(\sigma)
 &=\sum_{x,y,z}K_{xyz}^\dagger F(\pi(\sigma))K_{xyz}\\
 &=U(P_j)^\dagger F(\pi(\sigma))U(P_j),\\
 K_{xyz}&=E_z^\dagger(I_2\otimes P_j^{x\oplus y})
                 \Pi_y^{Y_\nu}\Pi_x^{X_\nu P_j}E_0.
 \end{aligned}
\]
As calculated there, $K_{xyz}=\alpha_{xyz}U(P_j)$ and all eight
$|\alpha_{xyz}|^2$ equal $1/8$. The two measurements overwrite $a,b$ before
feedback reads them. Therefore this derived scope triple holds for every
initial value of the private bits.

For a row containing three scopes, write their products as $P,Q,R$ in execution
order and calculate the three intermediate assertions
\[
 F_3=A_i,\qquad F_2=U(R)^\dagger F_3U(R),\qquad
 F_1=U(Q)^\dagger F_2U(Q),\qquad F_0=U(P)^\dagger F_1U(P).
\]
The scope rule derivation gives the three triples with matching boundaries
$F_0,F_1,F_2,F_3$, all lifted by $\pi$. Two uses of \ref{rule:seq} join them.
Their precondition is
$F_0=(U(R)U(Q)U(P))^\dagger A_i(U(R)U(Q)U(P))$.
For every Hadamard row and both CNOT rows in the table, respectively,
\[
 \begin{aligned}
 U(Z_j)U(X_j)U(Z_j)&=e^{i\pi/4}H_j,\qquad j=0,1,\\
 U(-Z_0X_1)U(X_1)U(Z_0)&=iC.
 \end{aligned}
\]
The phases cancel, so $F_0=A_{i-1}$ from
Equation~\eqref{eq:grover-source-wp}.
For the four direct corrections, \eqref{wp:correct} and \ref{rule:primitive}
give the same equality and triple without an ancilla.
For the last two rows, Step~2 and \ref{rule:primitive} give
\[
 \vdash\dtriple{A_{16}\circ\pi}{\pmeter[Z]{Z_0}{r_0}}{A_{17}\circ\pi},
 \qquad
 \vdash\dtriple{A_{17}\circ\pi}{\pmeter[Z]{Z_1}{r_1}}{B\circ\pi}.
\]
We have now derived $\vdash\triple{A_{i-1}\circ\pi}{c_i}{A_i\circ\pi}$
for every row $i=1,\ldots,18$.
Seventeen uses of \ref{rule:seq} join these row triples. The corresponding
uses of \eqref{wp:seq} compose their exact equalities. Thus
\begin{equation}
 \WP(d,B\circ\pi)=W_c(B)\circ\pi,
 \qquad\vdash\triple{W_c(B)\circ\pi}{d}{B\circ\pi}.
 \label{eq:grover-compile}
\end{equation}
This proves preservation of every public postcondition, including assertions
that relate the measured bits to the quantum output.

\paragraph{Step 5: evaluate every precondition for successful search.}
Let $E(s)$ mean $s(r_0)=s(r_1)=1$, and take $B=[E]$.
The last two steps become
\[
 A_{18}=[E],\qquad
 A_{17}(s)=[s(r_0)=1]Q_1^{(1)},\qquad
 A_{16}=Q_1^{(0)}Q_1^{(1)}=|11\rangle\langle11|.
\]
The first equality for a measurement follows by selecting $y=1$ in the
$A_{17}$ sum. The next selects $x=1$. These are the two
\eqref{wp:measure} steps, including their classical substitutions.

Every earlier assertion is a rank-one projector.
Define $|\pm\rangle=(|0\rangle\pm|1\rangle)/\sqrt2$ and
\[
 \begin{aligned}
 |\Phi^+\rangle&=(|00\rangle+|11\rangle)/\sqrt2,&
 |\Psi^-\rangle&=(|01\rangle-|10\rangle)/\sqrt2,\\
 |\phi\rangle&=(|00\rangle+|01\rangle+|10\rangle-|11\rangle)/2,\\
 |\psi\rangle&=(|00\rangle-|01\rangle+|10\rangle+|11\rangle)/2,\\
 |\chi\rangle&=(|00\rangle-|01\rangle-|10\rangle-|11\rangle)/2.
 \end{aligned}
\]
Table~\ref{tab:grover-wp} evaluates all sixteen conjugations explicitly.
An entry $|v_i\rangle$ denotes $A_i=|v_i\rangle\langle v_i|$.
Each row is checked by $g_i^\dagger|v_i\rangle\langle v_i|g_i
=|v_{i-1}\rangle\langle v_{i-1}|$.
Vector phases may differ between rows because they cancel in these projectors.

\begin{table}[htbp]
 \centering\small
 \begin{tabular}{@{}clll@{}}
 \toprule
 Step $i$ & Gate $g_i$ & Postcondition vector $|v_i\rangle$
                                  & Precondition vector $|v_{i-1}\rangle$\\
 \midrule
 16 & $H_1$ & $|11\rangle$ & $|1-\rangle$\\
 15 & $H_0$ & $|1-\rangle$ & $|--\rangle$\\
 14 & $X_1$ & $|--\rangle$ & $|--\rangle$\\
 13 & $X_0$ & $|--\rangle$ & $|--\rangle$\\
 12 & $H_1$ & $|--\rangle$ & $|-1\rangle$\\
 11 & $C$ & $|-1\rangle$ & $|\Psi^-\rangle$\\
 10 & $H_1$ & $|\Psi^-\rangle$ & $|\chi\rangle$\\
 9 & $X_1$ & $|\chi\rangle$ & $|\psi\rangle$\\
 8 & $X_0$ & $|\psi\rangle$ & $|\phi\rangle$\\
 7 & $H_1$ & $|\phi\rangle$ & $|\Phi^+\rangle$\\
 6 & $H_0$ & $|\Phi^+\rangle$ & $|\phi\rangle$\\
 5 & $H_1$ & $|\phi\rangle$ & $|\Phi^+\rangle$\\
 4 & $C$ & $|\Phi^+\rangle$ & $|+0\rangle$\\
 3 & $H_1$ & $|+0\rangle$ & $|++\rangle$\\
 2 & $H_1$ & $|++\rangle$ & $|+0\rangle$\\
 1 & $H_0$ & $|+0\rangle$ & $|00\rangle$\\
 \bottomrule
 \end{tabular}
 \caption{The complete backward proof of Grover's success event.
 Read from top to bottom. Every entry is a normalized vector specifying the
 displayed boundary's projector. Step~4 supplies the corresponding target
 Hoare triple for each source gate.}
 \label{tab:grover-wp}
\end{table}

For example, pulling $|\phi\rangle$ back through gate 5 uses
$H_1|\phi\rangle=|\Phi^+\rangle$. Pulling that projector through gate 4
uses $C|\Phi^+\rangle=|+0\rangle$.
For gate 9, $X_1|\chi\rangle=-|\psi\rangle$, so the negative sign cancels.
These calculations illustrate the Hadamard, CNOT, and direct-correction rows.

The final precondition is exactly $A_0=|00\rangle\langle00|$.
Instantiating Equation~\eqref{eq:grover-compile} gives the derived judgment
\[
 \vdash\htriple{|00\rangle\langle00|}{d}{[r_0=r_1=1]}.
\]
The constant precondition and the event are lifted to the full store implicitly.
For the caller-supplied input $\rho=|00\rangle\langle00|$, forward/backward
duality gives
\[
 \Pr_d[r_0=r_1=1]
 =\tr\!\left(\rho\,\WP(d,[E]\circ\pi)\right)
 =\tr(\rho^2)=1.
\]
Thus the complete measurement implementation finds the marked item with
certainty. Equation~\eqref{eq:grover-compile} additionally preserves every
public observation for arbitrary input states and stores.

\section{A ten-qubit Ising circuit: a complete derivation}
\label{app:ising-case-study}

The QASMBench input \texttt{ising\_n10} has 490 instructions: 110 Hadamards,
90 CNOTs, 280 rotations, and ten final measurements. We use it to derive a guarantee
for a concrete observation: whether the first two output bits agree. This example
uses the same measurement, sequencing, approximation, and resource rules as the
small examples. Its size changes the number of compositions.

\paragraph{Step 1: identify the program and its interface.}
The data inputs are $\Gamma=(q_0,\ldots,q_9)$ and the public locations are
$\Delta=(r_0,\ldots,r_9)$. Figure~\ref{fig:ising-case-study} groups the finite source
into five consecutive blocks $L_1,\ldots,L_5$. Each contains nine two-qubit
interactions and ten one-qubit updates. The notation $R_\theta=\operatorname{diag}(1,e^{i\theta})$
represents the source's $R_z(\theta)$ under conjugation. Their global phase has no
observable effect. All angles are exact rational radians.

\begin{figure}[htbp]
\centering
\begin{tikzpicture}[x=.96cm,y=.78cm,font=\small,>=Latex,
 gate/.style={draw,fill=white,minimum height=.46cm,minimum width=.65cm,inner sep=3pt},
 phase/.style={gate,fill=ppmz,minimum width=1.02cm}]
 \node[anchor=east] at (0,0) {$q_0,\ldots,q_9$};
 \draw (0,0) -- (10.4,0);
 \draw (.14,-.12) -- (.32,.12);
 \node[font=\scriptsize,anchor=south] at (.22,.15) {$10$};
 \node[gate] at (1,0) {$H^{\otimes10}$};
 \foreach \x/\k in {2.5/1,3.8/2,5.1/3,6.4/4,7.7/5}{
   \node[gate] at (\x,0) {$L_\k$};
   \node[font=\scriptsize] at (\x,-.57) {$94$};
 }
 \node[gate,fill=ppmz] at (9.15,0) {$M_Z^{\otimes10}$};
 \node[anchor=west,fill=ppmclassical,inner sep=3pt] at (10.45,0) {$r_0,\ldots,r_9$};
 \node[anchor=west,font=\scriptsize] at (.3,.75) {(a) All 490 source instructions};
 \node[anchor=west,font=\scriptsize] at (.3,-1.3)
   {(b) The first six-instruction interaction, $J_{01}(-3/10)$};
 \foreach \y/\j in {-2.15/0,-3.25/1}{
   \node[anchor=east] at (0,\y) {$q_\j$};
   \draw (0,\y) -- (10.2,\y);
 }
 \node[phase] at (1,-2.15) {$R_{-3/10}$};
 \node[phase] at (2.7,-3.25) {$R_{3/10}$};
 \node[phase] at (4.2,-3.25) {$R_{3/10}$};
 \node[phase] at (7.4,-3.25) {$R_{-3/10}$};
 \foreach \x in {5.8,9}{
   \draw (\x,-2.15) -- (\x,-3.25);
   \fill (\x,-2.15) circle (.055);
   \draw[fill=ppmx] (\x,-3.25) circle (.18);
   \draw (\x-.18,-3.25) -- (\x+.18,-3.25) (\x,-3.43) -- (\x,-3.07);
 }
 \foreach \x/\e in {1/\eta,2.7/\eta,4.2/\eta,5.8/0,7.4/\eta,9/0}{
   \node[font=\scriptsize,fill=ppmclassical,inner sep=2pt] at (\x,-4.05) {$\e$};
 }
 \node[anchor=west,font=\scriptsize] at (10.1,-4.05) {allowance};
\end{tikzpicture}
\caption{The ten-qubit Ising case study. Each $L_\ell$ contains nine interactions
and ten local updates. The lower circuit expands the first interaction. All omitted
qubits are idle during it. Blue gates rotate about $Z$, and orange targets mark $X$.
Green annotations record the four synthesis allowances and two exact CNOT steps.}
\Description{A ten-qubit bus passes through initial Hadamards, five blocks of 94
instructions, and ten final Z measurements. The expanded first interaction has one
negative three-tenths phase on qubit zero, two positive three-tenths phases on qubit
one, a CNOT from zero to one, another negative phase on qubit one, and a second CNOT.
The four phases each contribute eta. The CNOTs contribute zero.}
\label{fig:ising-case-study}
\end{figure}

In each $L_\ell$, the interactions visit pairs $(0,1),(2,3),(4,5),(6,7),(8,9)$,
then $(1,2),(3,4),(5,6),(7,8)$. An interaction on $(i,j)$ with angle $\theta$ is
\[
 J_{ij}(\theta)=R_\theta(q_i)\seq R_{-\theta}(q_j)\seq R_{-\theta}(q_j)\seq
 \mathrm{CNOT}_{ij}\seq R_\theta(q_j)\seq\mathrm{CNOT}_{ij}.
\]
The subsequent update of $q_j$ is $H_j\seq R_{\beta_\ell}(q_j)\seq H_j\seq R_{\gamma_{\ell j}}(q_j)$,
with parameters from the source. Thus a block has $9\cdot6+10\cdot4=94$ instructions,
and the whole circuit has $10+5\cdot94+10=490$. These are abbreviations for finite
instruction lists, not loops added to QPPM.

\paragraph{Step 2: pull back a concrete output event.}
Let $E(s)\equiv s(r_0)=s(r_1)$, and put $B=[E]$. This is an effect: it is either
$I$ or zero. Write $Q_b^{(j)}=\Pi^{Z_{q_j}}_b$. Backward calculation first crosses
measurements into $r_9,\ldots,r_2$. None changes $E$, so each leaves $B$ unchanged:
\[
 \WP(M_{Z_{q_j}}(r_j),B)(s)
 =B(s)\sum_{b=0}^1 Q_b^{(j)}=B(s),\qquad 2\leq j\leq9.
\]
The remaining two measurements expose the quantum condition for agreement:
\[
 \begin{array}{c@{\qquad}l}
 \pmeter[Z]{Z_{q_1}}{r_1}
 &\displaystyle B_1(s)=\sum_b[s(r_0)=b]\,Q_b^{(1)},\\[10pt]
 \pmeter[Z]{Z_{q_0}}{r_0}
 &\displaystyle B_0=\sum_b Q_b^{(0)}Q_b^{(1)}Q_b^{(0)}.
 \end{array}
\]
Here each bracket is a Boolean indicator multiplying the matrix to its right.
For example, the $b=0$ term in the second line selects the branch in which both
measurements record zero. The two terms account for all successful branches.
Pulling $B_0$ backwards through $L_5,\ldots,L_1$ and the initial Hadamards defines
$A=W_c(B)$. This is an exact precondition for the chosen event in the source $c$.

\paragraph{Step 3: replace one interaction, with its allowance.}
The first interaction is $J_{01}(-3/10)$, shown in the figure. Its four phase
occurrences use only two checked words, $W_{-3/10}$ and $W_{3/10}$. The two CNOTs
are exact. For either sign and either wire, the local proof establishes, for every
effect $F$ on all ten qubits,
\[
 -\eta I\preceq R_{\pm3/10}^{\dagger}FR_{\pm3/10}
                  -W_{\pm3/10}^{\dagger}FW_{\pm3/10}\preceq\eta I,
 \qquad \eta=\frac{210000000001}{6250000000000000000}.
\]
Other tensor factors are implicit identities. In particular, $F$ need not separate
$q_0,q_1$ from the other eight qubits. Write $g_1,\ldots,g_6$ for this interaction's
six instructions, $\widehat g_i$ for their replacements, and set
$F_6=F$, $F_{i-1}=W_{g_i}(F_i)$. The backward assertions and allowance increments are
\[
 \begin{array}{c|cccccc}
 i&1&2&3&4&5&6\\\hline
 g_i&R_{-3/10}(q_0)&R_{3/10}(q_1)&R_{3/10}(q_1)&\mathrm{CNOT}_{01}
       &R_{-3/10}(q_1)&\mathrm{CNOT}_{01}\\
 \varepsilon_i&\eta&\eta&\eta&0&\eta&0
 \end{array}
\]
Define $\widehat F_6=F$ and $\widehat F_{i-1}=W_{\widehat g_i}(\widehat F_i)$.
The sequencing argument is
\[
 \begin{aligned}
 W_{g_i}(F_i)-W_{\widehat g_i}(\widehat F_i)
 &=\bigl(W_{g_i}(F_i)-W_{\widehat g_i}(F_i)\bigr)
   +W_{\widehat g_i}(F_i-\widehat F_i).
 \end{aligned}
\]
The first term is bounded by $\varepsilon_i I$. Positivity and preservation of $I$
carry the suffix allowance through the second. Reading the row backwards gives
$0,\eta,\eta,2\eta,3\eta,4\eta$. Therefore
\[
 \pwire{J_{01}(-3/10)}
 \xrightarrow{\quad4\eta\quad}\pwire{\widehat J_{01}},
 \qquad 4\eta=\frac{210000000001}{1562500000000000000}<1.345\times10^{-7}.
\]
Reusing a local proof does not erase an occurrence from this sum.

\paragraph{Step 4: compose all source instructions and lower to measurements.}
The 280 rotations use 101 distinct local angle certificates. Each occurrence
contributes its allowance. The Hadamards, CNOTs, and readouts contribute zero.
The complete sum is
\[
 \varepsilon_{\mathrm{Ising}}
 =\sum_{j=1}^{280}\varepsilon_j
 =\frac{78280510032873}{3125000000000000000}<2.505\times10^{-5}.
\]
The resulting Clifford+$T$ circuit has 55,027 instructions. Its QPPM target $d$
uses $k=21{,}548$ supplied ideal magic states and two private classical locations
$u,v$, disjoint from $\Delta$. Gadget ancillas are scoped. Let $V_k$ insert those
resources and let $D_k$ lift public assertions, as in Section~\ref{sec:compilation}.
Exact lowering preserves the composed allowance. Instantiating the existing Hoare
transfer theorem with the effect $B=[E]$ and $A=W_c(B)$ gives
\[
 \vdash\dtriple{Q}{\pwire{d}}{D_k([E])},\qquad
 V_k^\dagger Q(\sigma)V_k=A(\pi(\sigma))-\varepsilon_{\mathrm{Ising}}I.
\]
This is the ordinary Hoare judgment, with the magic-state premise stated separately.

\paragraph{Step 5: read the guarantee as a probability bound.}
For any positive trace-one input $\rho$ and initial public store $s$, let $p_c$ be
the source probability of $E$, and $p_d$ the target probability, with input
$V_k\rho V_k^\dagger$ and two zero private bits. Forward/backward duality and the
two matrix inequalities of the certificate give
\[
 |p_d-p_c|\leq\varepsilon_{\mathrm{Ising}}<2.505\times10^{-5}.
\]
For the measured nearest-neighbor spin correlation, equal bits contribute $+1$
and unequal bits contribute $-1$, so its expectation is $2p-1$. Consequently
\[
 |(2p_d-1)-(2p_c-1)|\leq2\varepsilon_{\mathrm{Ising}}<5.010\times10^{-5}.
\]
This bounds the compilation error in a concrete observable of the complete simulation.
It makes no assumption about the observable's ideal value and introduces no new
command or Hoare rule.

\section{Kraus paths and the reset example}
\label{app:paths}

The instrument is constructed from a finite list
of pairs $(t,K)$, where $t$ is a final
store and $K$ is the matrix obtained by composing the operations along one execution path.
Such a matrix is called a Kraus operator. Call this list $\mathcal{B}(c,s)$. For skip
and assignment it contains one pair, with matrix $I$ and the appropriate unchanged or
updated store. For a measurement it contains
\[
 \mathcal{B}(M_P(r),s)=
 [(s[r\mapsto0],\Pi^P_0),(s[r\mapsto1],\Pi^P_1)].
\]
Correction gives the single pair $(s,P)$. A conditional chooses its branch using the
initial store for that command. Sequencing composes paths in execution order:
\[
 \mathcal{B}(c\seq d,s)=
 [\,(t,LK)\mid(u,K)\in\mathcal{B}(c,s),\ (t,L)\in\mathcal{B}(d,u)\,].
\]
The intermediate store $u$ is passed to the continuation, which is where a recorded
measurement result influences a later guard. When a program overwrites a bit, distinct paths can finish with the same store.

For a scope initialized to $b$, each body path $(t,K)$ contributes two paths
$(t,E_0^\dagger K E_b)$ and $(t,E_1^\dagger K E_b)$. These account for discarding
the first qubit. The final instrument groups paths by their retained classical store:
\begin{equation}
 \Inst_c(s,\rho,t)=
 \sum_{(u,K)\in\mathcal{B}(c,s)}[u=t]\,K\rho K^\dagger.
 \label{eq:path-instrument}
\end{equation}
The bracket here is a scalar indicator. Paths that finish with the same store contribute to the same sum. An impossible path
contributes the zero matrix.

To connect this construction to the logic, structural induction first establishes
\[
 \WP(c,A)(s)=\sum_{(t,K)\in\mathcal{B}(c,s)}K^\dagger A(t)K.
\]
The sequence case distributes matrix multiplication over the finite path lists.
The scope case sums the two ancilla blocks. Cyclicity of trace then gives
Equation~\eqref{eq:duality}. Positivity follows because every term in
Equation~\eqref{eq:path-instrument} has the form $K\rho K^\dagger$ with $\rho\succeq0$. Taking the constant
identity assertion in the duality equation and using $\WP(c,I)=I$ proves preservation
of total trace. The same path construction therefore supplies both the output matrices and the weakest
preconditions used by the logic.

\paragraph{The two reset paths.}
Use the reset inputs $\Gamma=(q)$, $\Delta=(r,u)$ and let $s_b=s[r\mapsto b][u\mapsto b]$.
Following the two rows of the forward reset
diagram gives
\[
 \begin{array}{c@{\quad}c@{\quad}c}
 \pmeter[Z]{Z}{r=0}\!\pwire{\cmd{skip}}\!\pwire{u:=r}
   &(s_0,Q_0),\\[8pt]
 \pmeter[Z]{Z}{r=1}\!\pwire[X]{X}\!\pwire{u:=r}
   &(s_1,XQ_1).
 \end{array}
\]
Assignment contributes an identity matrix, so
$\mathcal B(c_{\mathrm{reset}},s)=[(s_0,Q_0),(s_1,XQ_1)]$.
For an arbitrary assertion $A$, the path formula reads
\[
 \WP(c_{\mathrm{reset}},A)(s)
 =Q_0 A(s_0)Q_0+Q_1 X A(s_1)XQ_1.
\]
For $A=B_{\mathrm{reset}}$, both $A(s_b)=Q_0$.
The second term is $Q_1(XQ_0X)Q_1=Q_1$, and the first is $Q_0$.
This recovers the precondition $I$ calculated from the syntax.

\paragraph{Four paths, two records.}
With inputs $\Gamma=()$, $\Delta=(r)$, take
$f=\cmd{let}\ \nu=\cmd{ancilla}\ |0\rangle\ \cmd{in}\ M_{X_\nu}(r)$. The scope construction gives
$K_{bj}=E_j^\dagger\Pi^X_bE_0$.
Multiplying the matrices in Appendix~\ref{app:scope-example} yields
\[
 \begin{array}{c|cc}
   &j=0&j=1\\ \hline
   b=0&1/2&1/2\\
   b=1&1/2&-1/2
 \end{array}
 \qquad
 \underbrace{\pwire{E_0}\!\pwire{\Pi^X_b}\!\pwire{E_j^\dagger}}
             _{K_{bj}}.
\]
Each path contributes $K_{bj}\,1\,K_{bj}^\dagger=1/4$.
Grouping the two paths with record $b$ gives $1/2$. The index $j$ belongs to the
discard operation and is not a classical result available to the program.

\section{Repeated measurement and a classical copy}
\label{app:repetition}
\label{sec:derive-measure}

We first specify the state produced by a measurement and then use that specification
to reason about a second measurement of the same product. Fix input contexts $\Gamma,\Delta$,
a signed product $P$ on $\Gamma$, and a destination $r\in\Delta$.
Write $A_P(s)=\Pi^P_{s(r)}$.

\paragraph{Steps 1--2: substitute the record and simplify.}
On outcome $b$, the updated store reads $r=b$, so $A_P(s[r\mapsto b])=\Pi^P_b$.
Equation~\eqref{wp:measure}, idempotence, and \ref{rule:primitive} give the derivation over the
measurement diagram:
\begin{equation}
 \frac{\WP(M_P(r),A_P)(s)
       =\sum_b\Pi^P_b\underbrace{A_P(s[r\mapsto b])}_{\Pi^P_b}\Pi^P_b
       =\sum_b\Pi^P_b=I}
      {\dtriple{I}{\pmeter{P}{r}}{A_P}}.
 \label{eq:measure-records}
\end{equation}
For every initial state, the output lies in the eigenspace indexed by the recorded bit.
A continuation can therefore refer to that eigenspace by reading $r$.

\paragraph{Step 3: read the established eigenspace again.}
Let $u\in\Delta$, $u\ne r$, and define
\[
 B_P(s)=\begin{cases}\Pi^P_{s(r)}&s(u)=s(r),\\0&\text{otherwise}.\end{cases}
\]
Writing $a=s(r)$, Equation~\eqref{wp:measure} for the second measurement gives
\begin{align*}
 \WP(u:=\cmd{measure}\ P,B_P)(s)
  &=\sum_b\Pi^P_b
       \bigl(\text{if }b=a\text{ then }\Pi^P_a\text{ else }0\bigr)\Pi^P_b\\
  &=\Pi^P_a\Pi^P_a\Pi^P_a=\Pi^P_a=A_P(s).
\end{align*}
The condition $u\ne r$ ensures that updating $u$ leaves the read of $r$ unchanged.
Rule \ref{rule:seq} joins the matching assertion boundaries:
\[
 \{I\}\,\pmeter{P}{r}\,\{A_P\}\,\pmeter{P}{u}\,\{B_P\}.
\]

\paragraph{Step 4: replace the second measurement by an assignment.}
By \eqref{wp:assign}, substituting $r$ for $u$ makes the agreement test true. The same assertion
therefore supports a classical copy:
\[
 \{I\}\,\pmeter{P}{r}\,\{A_P\}\,\pwire{u:=r}\,\{B_P\}.
\]
The optimizer proves the stronger assertion-independent equivalence. For any postcondition
$Q$, expanding both measurements produces terms
\[
 \Pi^P_a\Pi^P_b\,
 Q(s[r\mapsto a][u\mapsto b])\,
 \Pi^P_b\Pi^P_a.
\]
Every term with $a\ne b$ is zero. The remaining terms are precisely those obtained by
measuring once and copying the result. When $u=r$, two successive updates
to the same destination collapse to the last update, and the surviving outcomes agree.
Hence the rewrite also applies when $u=r$, although the earlier two-record specification
used distinct destinations.

For a concrete instance, take $\Gamma=(q)$, $\Delta=(r,u)$, put $P=Z$, and start in $\rho_+$.
The first measurement gives $Q_b/2$ with record $r=b$. Orthogonality makes the second
measurement deterministic within each branch:
\[
 \underbrace{Q_0/2}_{r=0}\ \pmeter[Z]{Z}{u}\ %
 \underbrace{Q_0/2}_{r=u=0}
 \qquad
 \underbrace{Q_1/2}_{r=1}\ \pmeter[Z]{Z}{u}\ %
 \underbrace{Q_1/2}_{r=u=1}.
\]
Indeed, $Q_b(Q_b/2)Q_b=Q_b/2$, while $Q_{1-b}(Q_b/2)Q_{1-b}=0$.
Replacing either displayed meter by the assignment $u:=r$ therefore preserves its
store and output matrix. The assertion-independent proof above extends this calculation
from $\rho_+$ to every input.

\section{Measurement elimination: the scanner}
\label{app:scanner}

Fix input contexts $\Gamma,\Delta$, products $P,Q$ on $\Gamma$, locations $r,u\in\Delta$,
and a command $d$ on these registers. The scanner takes $P$, its saved destination $r$,
and the remaining command $d$. Its invariant is
\begin{equation}
 \pwire{\mathsf{scan}(P,r,d)}\ \equiv\ \pmeter{P}{r}\!\pwire{d}.
 \label{eq:scanner-invariant}
\end{equation}
For a measurement prefix, define
\[
\begin{aligned}
 &\mathsf{scan}(P,r,M_Q(u)\seq d)={}\\
 &\quad\begin{cases}
 M_P(r)\seq u:=r\seq d & P=Q,\\
 M_Q(u)\seq \mathsf{scan}(P,r,d)
    & P\ne Q,\ PQ=QP,\ r\ne u,\\
 M_P(r)\seq M_Q(u)\seq d & \text{otherwise}.
 \end{cases}
\end{aligned}
\]
For every other command $d$, put $\mathsf{scan}(P,r,d)=M_P(r)\seq d$.
A terminal measurement is treated as a prefix with continuation $\cmd{skip}$.
Product equality includes the sign. Immediate repetition may copy into its own destination.
An intervening write to that destination blocks motion. Induction follows the scanner's
cases, using projector orthogonality, the swap theorem, and the unchanged barrier case.
Sequence joining reassociates the command tree and invokes the scanner at leading measurements.
Its correctness and count inequalities compose with the scanner's local proof.

\paragraph{Follow the recursive calls.}
In Figure~\ref{fig:measurement-rewrite}, take $P=Z\otimes I$, $Q=I\otimes X$,
$R=I\otimes Z$, with distinct destinations $r_0,r_1,r_2,r_3$.
Writing the continuation after the last measurement as $\cmd{skip}$, the calls are
\[
 \begin{aligned}
 &\mathsf{scan}(P,r_0,M_Q(r_1)\seq M_R(r_2)\seq M_P(r_3))\\
 &\quad=M_Q(r_1)\seq \mathsf{scan}(P,r_0,M_R(r_2)\seq M_P(r_3))\\
 &\quad=M_Q(r_1)\seq M_R(r_2)\seq \mathsf{scan}(P,r_0,M_P(r_3))\\
 &\quad=M_Q(r_1)\seq M_R(r_2)\seq M_P(r_0)\seq r_3:=r_0.
 \end{aligned}
\]
The first two lines use the commuting case. The last uses product equality.
They are the two motions and final copy drawn in the figure.
For a noncommuting prefix on $\Gamma=(q)$, $\Delta=(r_0,r_1)$, take $P=Z$, $Q=X$,
and any continuation $d$ on these registers. The scanner returns
$\pmeter[Z]{Z}{r_0}\!\pmeter[X]{X}{r_1}\!\pwire{d}$ unchanged.

\subsection{Elimination across commuting measurements}
\label{sec:derive-run}

Write $P=Z\otimes I$, $Q=I\otimes X$, and $R=I\otimes Z$ for the four-measurement
example. Moving a measurement requires both commuting quantum operations and independent
classical updates. Tensor factors show that $P$ commutes with both $Q$ and $R$. In contrast,
$QR=-RQ$. The compiler needs only the first two commutation facts.

\paragraph{Step 1: move the pending measurement past $Q$.}
The destinations $r_0$ and $r_1$ differ, so their store updates commute. Quantum projectors
also commute because their measured products do. For outcomes $a,b$ and an arbitrary
postcondition matrix $A$, the relevant equality is
\[
 \Pi^P_a(\Pi^Q_b A\Pi^Q_b)\Pi^P_a
 =\Pi^Q_b(\Pi^P_a A\Pi^P_a)\Pi^Q_b.
\]
Combining these equalities with \eqref{wp:measure} and \eqref{wp:seq} proves
$M_P(r_0)\seq M_Q(r_1)\equiv M_Q(r_1)\seq M_P(r_0)$, including every final-store block.

\paragraph{Steps 2--3: move past $R$ and eliminate repetition.}
Applying the swap theorem to $R,r_2$ gives the middle circuit of
Figure~\ref{fig:measurement-rewrite}. The order of the noncommuting pair $Q,R$ is unchanged.
Projector idempotence then replaces the final measurement by a copy, giving the rightmost
circuit. It has three measurements and preserves all four records on every input matrix.
Both associations of the input sequence reduce to this result.

\paragraph{Step 4: lift the rewrite through the program.}
The equation holds for all postconditions. Substituting a continuation's weakest
precondition lifts it through sequencing by \eqref{wp:seq}. Equations~\eqref{wp:if} and \eqref{wp:ancilla}
lift it through conditionals and ancilla scopes. Thus the rewrite preserves an enclosing program's weakest precondition.

The recursive calls above perform these two swaps and the final copy.

\section{A single magic-state \texorpdfstring{$T$}{T} gate}
\label{app:resource-example}

Take data input $\Gamma=(d)$, private locations $\Delta=(a,b)$, $P=Z$, and $c=[T]$
in the magic-state contract of Section~\ref{sec:magic-resource}.
The compiled program has quantum inputs $(\mu,d)$, with $|M\rangle$ supplied on $\mu$.
There is one supplied magic state: $k=1$, $T=\operatorname{diag}(1,\omega)$, and
\[
 V_1\psi=|M\rangle\otimes\psi,\qquad
 D_1(B)(s)=I_2\otimes B(\pi(s)).
\]
The map $\pi$ forgets the two private outcome bits and retains every public bit.

\paragraph{Step 1: choose an assertion that observes the phase.}
Use constant assertions
\[
 A=|+\rangle\langle+|=\tfrac12\begin{pmatrix}1&1\\1&1\end{pmatrix},\qquad
 B=|M\rangle\langle M|=\tfrac12\begin{pmatrix}1&\overline\omega\\\omega&1\end{pmatrix}.
\]
Both are effects. Pulling $B$ backwards through the source gate gives
\[
 T^\dagger B T
 =\tfrac12\begin{pmatrix}1&0\\0&\overline\omega\end{pmatrix}
   \begin{pmatrix}1&\overline\omega\\\omega&1\end{pmatrix}
   \begin{pmatrix}1&0\\0&\omega\end{pmatrix}=A,
 \qquad |\omega|^2=1.
\]

\paragraph{Step 2: derive the compiled triple.}
Put $Q=\WP(\mathcal L([T]),D_1(B))$. The weakest-precondition derivation and the
injection contract give
\[
 \begin{gathered}
 \htriple{Q}{\mathcal L([T])}{D_1(B)},\qquad
 V_1^\dagger Q(s)V_1=T^\dagger BT=A.
 \end{gathered}
\]
Thus $Q$ witnesses the circuit judgment for $A,B$. The postcondition ignores the
consumed magic state and private outcomes. Compression supplies the input assumption.

\paragraph{Step 3: check the magic-state premise.}
For data input $A$, the expected value of $B$ is one. Replacing the supplied
magic state by $|0\rangle$ instead gives
$E_0^\dagger Q(s)E_0=Q_0BQ_0+Q_1BQ_1=I/2$.
Duality calculates the expected value of $B$ from these preconditions:
\[
 \begin{array}{c@{\;\longrightarrow\;}c@{\;\longrightarrow\;}l}
 |M\rangle\otimes|+\rangle&\pwire{\mathcal L([T])}&\tr(A^2)=1\\[5pt]
 |0\rangle\otimes|+\rangle&\pwire{\mathcal L([T])}&\tr(A\,I/2)=1/2.
 \end{array}
\]
The same compiled program therefore fails the certainty guarantee with the wrong magic state.

\section{Ancilla discard and an idle entangled qubit}
\label{sec:derive-entanglement}

Preparing a fresh qubit and immediately discarding it is the identity on the original register.
Discarding an existing qubit and preparing a replacement is a different operation. For the
input context $(d,e)$, Bell state $|\Phi\rangle=(|00\rangle+|11\rangle)/\sqrt2$, and
$\rho_\Phi=|\Phi\rangle\langle\Phi|$, we prove positivity and unit trace. The two
orders of preparation and discard have different results:
\[
\begin{aligned}
 \rho\;&\xrightarrow{\ \pwire{\mathsf{prep}_0}\!\pwire{\mathsf{drop}}\ }\;\rho,\\
 \rho_\Phi\;&\xrightarrow{\ \pwire{\mathsf{drop}}\ }\;I/2
 \;\xrightarrow{\ \pwire{\mathsf{prep}_0}\ }\;|0\rangle\langle0|\otimes I/2\ne\rho_\Phi.
\end{aligned}
\]
Partial trace removes the correlations with the discarded qubit. Preparing a new qubit
does not restore them.

For a command $c$ on $\Gamma,\Delta$ and a fresh name $e\notin\Gamma$, we derive the idle-qubit rule
\begin{equation}
 \frac{\htriple{A}{c}{B}}
 {\{A\otimes I\}\quad
 \begin{tikzpicture}[baseline=(current bounding box.center),font=\small]
  \draw (0,0) -- (2,0) (0,-.55) -- (2,-.55);
  \node[draw,fill=white,inner sep=4pt] at (1,0) {$c$};
  \node[fill=white,inner sep=1pt,font=\scriptsize] at (1,-.55) {idle qubit};
 \end{tikzpicture}
 \quad\{B\otimes I\}}.
 \label{eq:idle-proof}
\end{equation}

Here $c^\uparrow$ has inputs $(\Gamma,e),\Delta$ and leaves $e$ unchanged. The proof retains the
off-diagonal blocks of joint assertions, so it applies to entangled inputs. Exact
equivalence and unitary component specifications extend as well. Applying the compiled phase
component to the first half of this Bell state, with the last qubit idle, gives the lifted
unitary action $(S\otimes I)\rho_\Phi(S^\dagger\otimes I)$. Its $(00,11)$ entry is
$-i/2$. This entry records the phase coherence between $|00\rangle$ and $|11\rangle$.
The extension theorem transports the application Hoare proof as well, so its outcome
agreement remains certain for the joint Bell input. Instantiating the rule with the
application triple from Equation~\eqref{eq:running}, with inputs $(d,e)$ and $(a,b,r,t)$, gives
\[
 \frac{\htriple{I}{c_{\mathrm{app}}}{B}}
      {\htriple{I\otimes I}{c_{\mathrm{app}}^\uparrow}{B\otimes I}}.
\]
Here $(B\otimes I)(s)=[s(t)=s(r)](\Pi^Y_{s(r)}\otimes I)$ ignores the last qubit
while still checking the public records and the first qubit's eigenspace.
This exact rule does not by itself extend approximation allowances to a larger register.

\section{Numerical instances of the approximation rules}
\label{app:error-examples}

With inputs $\Gamma=()$, $\Delta=(r)$, the commands $f$ and $z$
in Figure~\ref{fig:fair-bit} satisfy $f\approx_{1/2}z$.
For $G(s)\equiv s(r)=0$, \ref{rule:primitive} and \eqref{wp:assign} prove $\triple{1}{z}{[G]}$.
Symmetry and Theorem~\ref{thm:approximation} give
\[
 \frac{\htriple{1}{r:=0}{[G]}\qquad z\approx_{1/2}f}
      {\dtriple{\tfrac12}{
        \pwire{|0\rangle}\!\pmeter[X]{X}{r}\!\pwire{\mathsf{drop}}}{[G]}}.
\]
The exact forward calculation gives $\Pr_f[G]=1/2$, so the transferred guarantee is
attained. For $F(s)\equiv s(r)=1$, the exact precondition of $z$ is zero.
The same theorem yields $\triple{-1/2}{f}{[F]}$.
This valid but weak bound illustrates why the assertion language allows preconditions
outside the interval $[0,I]$. The exact precondition of $f$ is $1/2$.

\paragraph{A scope instance.}
Keep external inputs $\Gamma=()$, $\Delta=(r)$. Inside each scope, the body context is
$(\nu)$. Take $d_0=(r:=0)$ and $d_1=(r:=1)$ on that qubit and location $r$.
For any effect $A$, their preconditions are $A(s[r\mapsto0])$ and
$A(s[r\mapsto1])$. Both lie between $0$ and $I$, so $d_0\approx_1 d_1$.
The scope case of Theorem~\ref{thm:approximation}, using \eqref{wp:ancilla}, keeps allowance one:
\[
 \underbrace{\pwire{|0\rangle}\!\pwire{r:=0}\!\pwire{\mathsf{drop}}}
             _{\cmd{let}\ \nu=\cmd{ancilla}\ |0\rangle\ \cmd{in}\ d_0}
 \quad\approx_1\quad
 \underbrace{\pwire{|0\rangle}\!\pwire{r:=1}\!\pwire{\mathsf{drop}}}
             _{\cmd{let}\ \nu=\cmd{ancilla}\ |0\rangle\ \cmd{in}\ d_1}.
\]
The event $[r=1]$ has probability zero on the left and one on the right, so a smaller
allowance is impossible. This uses the same preparation on both sides, as required.

\paragraph{Positive and negative output blocks.}
For $f$ minus $z$ on their normalized scalar input, the two blocks are
$\Delta^{(0)}=-1/2$ and $\Delta^{(1)}=1/2$.
Their spectral decompositions and positive-eigenspace projectors are
\[
 \begin{array}{c|cc|c}
  t&\Delta^{(t)+}&\Delta^{(t)-}&\Pi_t^+\\ \hline
  r=0&0&1/2&0\\
  r=1&1/2&0&1
 \end{array}
 \qquad
 \sum_t\tr(\Pi_t^+\Delta^{(t)})
 =0(-1/2)+1(1/2)
 =\tfrac12.
\]
The maximizing assertion is therefore the event $[r=1]$ already used above.
In this example the trace norm is absolute value, giving
$\frac12\sum_t\|\Delta^{(t)}\|_1=\frac12(|-1/2|+|1/2|)=1/2$.
The total trace difference is zero, although neither individual block has trace zero.

\section{The phase measurement sequence: normalization and scope}
\label{app:phase-branches}

Section~\ref{sec:derive-phase} computes the eight Kraus operators
$K_{xyj}=\alpha_{xyj}U(P)$ of the phase measurement sequence from
Equation~\eqref{eq:phase-kraus}. There $E_j\psi=|j\rangle\otimes\psi$ includes the data
register under ancilla basis value $j$, and $E_j^\dagger$ takes that ancilla block.
The feedback acts after both outcomes are available. For example, the path with $x=y=j=0$
reduces to $K_{000}=(I-iP)/4$. The calculation includes negative products and identity
factors. For $P=Z$, we have $U(P)=S$ and $K_{000}=(1-i)S/4$.

\paragraph{Step 2: prove normalization.}
The displayed coefficients satisfy
$\alpha_{xyj}\overline{\alpha_{xyj}}=1/8$ for every $x,y,j$. It follows that
\[
 K_{xyj}\rho K_{xyj}^\dagger=\tfrac18 U(P)\rho U(P)^\dagger.
\]
For a fixed recorded pair $(x,y)$, discarding the ancilla sums the two $j$ terms and yields
$U(P)\rho U(P)^\dagger/4$. Summing the four outcome pairs yields $U(P)\rho U(P)^\dagger$.
Every measurement outcome contributes to this sum. The calculation holds for every input matrix.

\paragraph{Step 3: derive the component specification.}
For a public assertion $A$, the corresponding calculation pulls back each outcome pair to
$U(P)^\dagger A U(P)/4$. Since only the private bits change, the public store is
preserved. With $\pi$ forgetting the two private bits, the contract is
\[
\begin{aligned}
 &\WP\bigl(\cmd{let}\ \nu=\cmd{ancilla}\ |0\rangle\ \cmd{in}\bigl(
 M_{X\otimes P}(a)\seq M_{Y\otimes I}(b)\seq \\
 &\qquad\cmd{if}\ a\oplus b\ \cmd{then}\ \cmd{correct}\ (I\otimes P)
 \ \cmd{else}\ \cmd{skip}\bigr),A\circ\pi\bigr)(s)\\
 &\qquad=U(P)^\dagger A(\pi(s))U(P).
\end{aligned}
\]
Here $X_\nu P=X\otimes P$ and $Y_\nu=Y\otimes I$ in the body context.
The private bits are overwritten by the measurements before feedback reads them.
Consequently, the specification holds for every initial value of those bits. Their
locations must be distinct: if they alias, the second measurement overwrites the first
outcome and the exclusive-or guard is always false. Taking $P=Z$ gives the application
transformer $S^\dagger A S$, which maps the $Y$ projector back to the corresponding
$X$ projector.

\section{Certificate for the two-qubit circuit}
\label{app:circuit-certificate}

This appendix supplies the numerical and rewriting steps for
Section~\ref{sec:two-qubit-compilation}. The example uses the existing syntax,
gadget contracts, approximation rule, and scheduling checker.

\paragraph{The phase allowance.}
For $\theta=\pi/2+\delta$ and $\delta=\pi/2^{26}$, the phase function is
Lipschitz with constant one:
\[
 |e^{i\theta}-i|
 =|e^{i\theta}-e^{i\pi/2}|
 \leq|\theta-\pi/2|
 =\frac{\pi}{2^{26}}<\frac4{2^{26}}=2^{-24}.
\]
Thus the diagram's replacement $\pwire{R_\theta}\rightsquigarrow\pwire[Z]{S}$
has a checked scalar certificate. To convert it into the effect allowance used by
the logic, set $V=R_\theta$, $\widetilde V=V\otimes I$, and
$\widetilde S=S\otimes I$. Let
$E_{ij}=|i\rangle\langle j|\otimes I$, for $i,j\in\{0,1\}$.
Expanding conjugation in these matrix units gives
\[
 \widetilde S^\dagger B\widetilde S-\widetilde V^\dagger B\widetilde V
 =\sum_{i,j,k,l=0}^1
   (\overline{S_{ki}}S_{lj}-\overline{V_{ki}}V_{lj})E_{ik}B E_{lj}.
\]
The only possibly nonzero coefficient differences are $i-e^{i\theta}$ and
$-i-e^{-i\theta}$. Their absolute values agree and are at most $2^{-24}$.
For $0\preceq B\preceq I$, the operator norms of $B$ and all $E_{ij}$ are at most one.
Applying the uniform bound to the sixteen terms, as in the synthesis checker, gives
\[
 \|\widetilde S^\dagger B\widetilde S-\widetilde V^\dagger B\widetilde V\|_{\mathrm{op}}
 \leq16\cdot2^{-24}=2^{-20}.
\]
The matrix difference is Hermitian, so it lies between $-2^{-20}I$ and $2^{-20}I$.
This calculation acts on the full two-qubit assertion $B$ and makes no separability
assumption. Exact gates before and after this replacement preserve the allowance:
the suffix pulls back an effect, and the prefix preserves the two matrix inequalities.
The five source blocks therefore contribute allowances $0,0,0,2^{-20},0$.

\paragraph{Each exact rewrite.}
Use $P=Z\otimes I$, $Q=I\otimes Z$, $R=I\otimes X$ as in the main text.
For the purpose of fusion, write the final correction as $U(R)^2=R$.
The checked sequence of edits is
\[
\begin{array}{c@{\quad}l}
 \text{input}&U(P)\seq U(Q)\seq U(P)\seq U(R)\seq U(PR)^3\seq U(P)\seq U(R)^2\\
 \text{swap}&U(Q)\seq U(P)\seq U(P)\seq U(R)\seq U(PR)^3\seq U(P)\seq U(R)^2\\
 \text{fuse}&U(Q)\seq U(P)^2\seq U(R)\seq U(PR)^3\seq U(P)\seq U(R)^2\\
 \text{swap}&U(Q)\seq U(P)^2\seq U(R)\seq U(P)\seq U(PR)^3\seq U(R)^2\\
 \text{swap}&U(Q)\seq U(P)^2\seq U(P)\seq U(R)\seq U(PR)^3\seq U(R)^2\\
 \text{fuse}&U(Q)\seq U(P)^3\seq U(R)\seq U(PR)^3\seq U(R)^2\\
 \text{swap}&U(Q)\seq U(P)^3\seq U(R)\seq U(R)^2\seq U(PR)^3\\
 \text{fuse}&U(Q)\seq U(P)^3\seq U(R)^3\seq U(PR)^3 .
\end{array}
\]
The swaps use $PQ=QP$, $P(PR)=(PR)P$, $PR=RP$, and $R(PR)=(PR)R$,
respectively. Each follows by multiplying the displayed Pauli factors.
The fusions add the two exponents modulo four. No swap crosses the noncommuting
pair $Q,R$, and no approximation is used in these edits.

\paragraph{From macro layers to logical rounds.}
The first optimized layer uses distinct private bits $(r_Q,s_Q)$ and $(r_P,s_P)$.
Its serial primitive sequence is $J_Q\seq Y_Q\seq F_Q\seq J_P\seq Y_P\seq F_P$, where $J$ denotes
the joint measurement, $Y$ the ancilla readout, and $F$ the parity-controlled correction.
These are the following two paths through the first part of
Figure~\ref{fig:two-qubit-ppm}:
\[
 \begin{array}{c}
  \pwire{J_Q}\!\pwire{Y_Q}\!\pwire{F_Q}\!
  \pwire{J_P}\!\pwire{Y_P}\!\pwire{F_P}\\[6pt]
  \Downarrow\quad\text{three independent exchanges}\\[6pt]
  \underbrace{\pwire{J_Q}\!\pwire{J_P}}_{\text{round 1}}\;
  \underbrace{\pwire{Y_Q}\!\pwire{Y_P}}_{\text{round 2}}\;
  \pwire{F_Q}\!\pwire{F_P}.
 \end{array}
\]
Move $J_P$ past $F_Q$ and $Y_Q$, then move $Y_P$ past $F_Q$.
The allocated quantum supports are disjoint and the private bits differ,
so all three primitive exchanges are checked. The remaining two layers use one
ancilla each, with separate preparation and discard, and reuse $(r_Q,s_Q)$.
The parallel baseline uses the same checks on its first two layers.
Its four pairs of logical rounds become three, while the peak ancilla count
remains two. Both concrete QPPM programs have the rounded circuit's public
weakest precondition, so the synthesized allowance survives allocation and scheduling.

\section{The error-limit experiment: cost and selection}
\label{app:budget-cost}

\paragraph{The proposal algorithm for fusion and parallel scheduling.}
The proposal algorithm for Section~\ref{sec:rotation-schedule} expands gates into signed
Pauli turns, scans backwards from each turn until a readout or a noncommuting turn, and
moves an equal axis across the intervening commuting turns, adding exponents modulo four
and erasing zero turns. It then builds precedence edges for pairs that cannot be
exchanged (the baseline also orders overlapping supports) and repeatedly takes ready
operations in source order, draining zero-cost corrections first, into layers of
disjoint supports until the ancilla budget is full, recording the adjacent swaps that
realize the chosen order. The checker replays those swaps and fusions, verifies the
layer supports and private resources, and checks the primitive schedule and its public
weakest precondition.

\paragraph{Serial depth.}
The targets in Table~\ref{tab:budget-sweep} contain conditional $T$ injections.
For their serial schedule, let $\mathsf{sd}(c)$ count logical rounds along the longest
syntactic branch. Measurement costs one. Skip, assignment, and correction cost zero.
The remaining cases are
\begin{equation}
 \begin{aligned}
 \mathsf{sd}(c\seq d)&=\mathsf{sd}(c)+\mathsf{sd}(d),\\
 \mathsf{sd}(\cmd{if}\ e\ \cmd{then}\ c\ \cmd{else}\ d)&=\max(\mathsf{sd}(c),\mathsf{sd}(d)),\\
 \mathsf{sd}(\cmd{let}\ \nu=\cmd{ancilla}\ |b\rangle\ \cmd{in}\ c)&=\mathsf{sd}(c).
 \end{aligned}
 \tag{D-Serial}\label{eq:serial-depth}
\end{equation}
This cost ignores preparation latency and Pauli feedback. Induction shows that
$\mathsf{sd}(c)$ is at most the syntactic measurement count. The sequence case adds the induction
hypotheses, and the conditional uses $\max(a,b)\leq a+b$. The existing lowering
bound therefore also bounds $D$ after exact measurement cleanup.

For example, the injection diagram in Section~\ref{sec:compilation} has one joint
measurement, a conditional two-measurement phase gadget, and one magic-state readout:
\[
 \underbrace{\pmeter{Z\otimes P}{a}}_{1}\quad
 \underbrace{\pchoice{a}{U(P)}{\cmd{skip}}}_{\max(2,0)}\quad
 \underbrace{\pmeter[X]{X}{b}}_{1}\quad
 \underbrace{\pwire{\text{Pauli feedback}}}_{0}.
\]
Equation~\eqref{eq:serial-depth} gives a bound of four. Each H or CNOT expands to
three two-measurement gadgets, S or its inverse to one, and a public readout to
one measurement. Thus, with gate counts $n_H,n_{\mathrm{CX}},n_S,n_T,n_M$,
\[
 \mathsf{sd}(d)\leq\overline{\mathsf{sd}}
   =6n_H+6n_{\mathrm{CX}}+2n_S+4n_T+n_M.
\]
For the $10^{-2}$ row, the emitted circuit has 472 H gates, 12 CNOTs, 222 S-class
gates, 471 T-class gates, and four readouts. Substitution gives
\[
 \overline{\mathsf{sd}}=6(472)+6(12)+2(222)+4(471)+4=5{,}236.
\]
The counts include inverse S and T gates. Pauli corrections add zero. Each
whole-circuit certificate checks this sum for the emitted instruction list and
instantiates the depth bound for its concrete compiled program. It also proves
the source-effect allowance for the row's requested error limit. The quantum semantics
still includes all ideal supplied magic inputs, even though their preparation is
outside this cost model.

\paragraph{Checking a selection.}
For a requested limit of $10^{-3}$, the coarsest candidate's certified allowance is
too large. The other two allowances meet the request, and $7{,}540<9{,}184$ selects
the middle candidate (Figure~\ref{fig:budget-selection}). Thus selection minimizes
the certified depth bound among feasible proposals. It need not minimize their error allowances.
A complete circuit certificate checks the selected target against this requested limit.

\begin{figure}[htbp]
\centering
\begin{tikzpicture}[font=\small,>=Latex,
 candidate/.style={draw,rounded corners=2pt,minimum height=1.25cm,
   text width=3.2cm,align=center,fill=black!3}]
 \node (request) at (4.0,1.55) {Requested limit $\varepsilon_{\mathrm{user}}=10^{-3}$};
 \node[candidate] (coarse) at (0,0)
   {Local tolerance $10^{-5}$\\
    $\widehat\varepsilon>1.524\times10^{-3}$\\$\overline{\mathsf{sd}}=5{,}236$};
 \node[candidate,fill=ppmclassical] (middle) at (4.0,0)
   {Local tolerance $10^{-7}$\\
    $\widehat\varepsilon<5.859\times10^{-6}$\\$\overline{\mathsf{sd}}=7{,}540$};
 \node[candidate] (fine) at (8.0,0)
   {Local tolerance $10^{-9}$\\
    $\widehat\varepsilon<5.434\times10^{-8}$\\$\overline{\mathsf{sd}}=9{,}184$};
 \draw[->] (request.south) -- (coarse.north);
 \draw[->] (request.south) -- (middle.north);
 \draw[->] (request.south) -- (fine.north);
 \node[anchor=north,align=center,font=\scriptsize] at ([yshift=-3mm]coarse.south)
   {Discard: certified allowance\\exceeds the request};
 \node[anchor=north,align=center,font=\scriptsize] at ([yshift=-3mm]middle.south)
   {Select: meets the request\\and $7{,}540<9{,}184$};
 \node[anchor=north,align=center,font=\scriptsize] at ([yshift=-3mm]fine.south)
   {Meets the request\\with a larger depth bound};
\end{tikzpicture}
\caption{Selection for a requested limit between the limits of
Table~\ref{tab:budget-sweep}. Comparisons use the exact rational certified allowances.
The smallest allowance belongs to the finest candidate. The smallest feasible
serial-depth bound belongs to the middle candidate.}
\Description{A requested error limit of one thousandth is tested against three certified
QFT candidates. The coarse candidate's allowance exceeds the request. The middle and
fine candidates meet it, with serial depth bounds 7540 and 9184. The middle candidate
is selected.}
\label{fig:budget-selection}
\end{figure}

The boundary tests accept a requested limit equal to the certificate's exact
rational allowance and reject a limit just below it. They also reject unchecked
candidates and return no result when the feasible set is empty. Three deliberately
understated allowance claims are rejected by the kernel. Source and target instruction
lists are compared with the exported certificate records and the emitted QASM.
The reported limits, the $10^{-3}$ example, and the limits at and just below each
candidate's allowance have checked finite selection certificates. Each checks
feasibility and the least depth bound, or verifies that every candidate exceeds the request.
An infeasible choice, a feasible choice with a larger bound, and an incorrect empty
result are rejected. Nine further mutations alter the stored evidence, including
error and depth claims, output gates, and the candidate collection. All nine mutations are rejected.

\noindent\begin{minipage}{\linewidth}
\paragraph{Reading the selected allowance.}
Let $c$ be the typed QFT source, with inputs $(q_0,q_1,q_2,q_3)$ and $(r_0,r_1,r_2,r_3)$,
and $d$ the selected middle target. Its classical inputs prepend private locations $a,b$.
Its certificate
uses $k=672$ ideal magic inputs and the allowance
$\widehat\varepsilon=\frac{18308100000009}{3125000000000000000}$.
We derive a bound for the event $E(t)\equiv t(r_0)=\cdots=t(r_3)=0$.

First choose the effect $[E]$, which is identity on that public record and zero
on the others. For a positive trace-one data state $\rho$ and initial store $s$,
extend $s$ by two zero private bits to obtain $\widetilde s$. The two executions are
\[
 \begin{array}{rcl}
 (s,\rho)&\xrightarrow{\ \pwire{c}\ }&p_c=\Pr_c[E],\\[3pt]
 (\widetilde s,V_k\rho V_k^\dagger)&\xrightarrow{\ \pwire{d}\ }&
 p_d=\Pr_d[E\circ\pi].
 \end{array}
\]
Second, pull the event back through each program. Duality gives
$p_c=\operatorname{Re}\tr(\rho W_c([E])(s))$ and
$p_d=\operatorname{Re}\tr(\rho C_E(s))$, where
$C_E(s)=V_k^\dagger\WP(d,D_k([E]))(\widetilde s)V_k$.
The latter probability sums over all private outcomes.

Finally, pair the certificate's two matrix inequalities with $\rho$:
\[
 \begin{aligned}
 |p_c-p_d|
 &=\left|\operatorname{Re}\tr\bigl(\rho(W_c([E])(s)-C_E(s))\bigr)\right|\\
 &\leq\widehat\varepsilon\tr(\rho)
   =\widehat\varepsilon<5.859\times10^{-6}.
 \end{aligned}
\]
Changing $E$ gives the same bound for any event on the four public output bits.
\end{minipage}

\section{A tight effect allowance}
\label{app:effect-bounds}
\subsection{A tight approximation allowance}
\label{sec:derive-fair}

With inputs $\Gamma=()$, $\Delta=(r)$, let $f$ prepare $|0\rangle$, measure $X$ into $r$,
and discard the qubit. Let $z$ assign
zero to $r$. Both have zero surviving quantum wires and one classical bit. A zero-qubit
operator is a $1\times1$ matrix, so an effect can be identified with a scalar between zero
and one for each final store. Figure~\ref{fig:fair-bit} places the circuit beside its syntax and output probabilities.

\begin{figure}[htbp]
\centering
\begin{tikzpicture}[font=\small,>=Latex]
 \node[anchor=east] at (0,0) {$f:$};
 \node[anchor=east] at (.65,0) {$|0\rangle_\nu$};
 \draw (.7,0) -- (2.7,0);
 \pic at (1.65,0) {ppmone={X/r}};
 \draw (2.7,-.14) -- (2.7,.14) (2.77,-.14) -- (2.77,.14);
 \node[font=\scriptsize] at (2.8,.38) {discard};
 \node[anchor=west,align=left] at (4.3,0)
   {$\cmd{let}\ \nu=\cmd{ancilla}\ |0\rangle\ \cmd{in}\ (r:=\cmd{measure}\ X_\nu)$\\[5pt]
    $\Pr[r=0]=\tfrac12,\qquad\Pr[r=1]=\tfrac12$};
 \node[anchor=east] at (0,-1.35) {$z:$};
 \node[draw,fill=ppmclassical,minimum width=2cm,minimum height=.45cm] at (1.65,-1.35) {$r:=0$};
 \node[anchor=west,align=left] at (4.3,-1.35)
   {$\Pr[r=0]=1,\qquad\Pr[r=1]=0$\\[5pt]
    $\tfrac12\sum_t\|\Delta_t\|_1=\tfrac12(|-\tfrac12|+|\tfrac12|)=\tfrac12$};
\end{tikzpicture}
\caption{Inputs: $\Gamma=()$, $\Delta=(r)$. The measurement circuit generates
a fair bit. The assignment produces zero. Their quantum outputs are both empty, but the
result distributions differ by $1/2$ in trace distance.}
\Description{The first program prepares zero, measures X into r, and discards the qubit,
producing a fair bit. The second assigns zero to r. Their output probabilities and their
distance of one half are written beside the programs.}
\label{fig:fair-bit}
\end{figure}

\paragraph{Step 1: compute the branch weights.}
The projectors applied to the prepared state are
\[
 \Pi^X_0|0\rangle=\tfrac12(|0\rangle+|1\rangle),\qquad
 \Pi^X_1|0\rangle=\tfrac12(|0\rangle-|1\rangle).
\]
Each branch has squared norm $1/2$. After discarding the qubit, the output consists of
two scalar blocks of weight $1/2$, one for each value of $r$. For any assertion $A$, write $A_b(s)=A(s[r\mapsto b])$. The backward calculation is
\begin{equation}
 \htriple{\tfrac12(A_0+A_1)}{f}{A},\qquad \htriple{A_0}{z}{A}.
 \label{eq:fair-wp}
\end{equation}
The preconditions in these judgments are exactly $\WP(f,A)$ and $\WP(z,A)$,
respectively. Equations~\eqref{wp:ancilla}, \eqref{wp:measure}, and \eqref{wp:assign} calculate them for
arbitrary assertions before imposing the inequalities required of effects.

\paragraph{Step 2: prove a bound for every effect.}
Subtracting the two expressions in Equation~\eqref{eq:fair-wp} gives
$(A_1-A_0)/2$. For an effect, $0\preceq A_b\preceq I$, so
\[
 \tfrac12 I-\tfrac12(A_1-A_0)
   =\tfrac12\bigl((I-A_1)+A_0\bigr)\succeq0,
\]
and the same calculation with $0$ and $1$ exchanged bounds the opposite difference.
Both approximation inequalities hold with $\varepsilon=1/2$. This proof uses only
positivity, closure under addition, and multiplication by a nonnegative scalar. It applies to every effect.

\paragraph{Step 3: exhibit a matching lower bound.}
Choose the event $E(s)=s(r)$, whose effect has $A_0=0$ and $A_1=I$. Its probability is
$1/2$ under $f$ and zero under $z$. Any uniform allowance must bound this event difference,
so it must be at least $1/2$. The upper and lower theorems together establish sharpness.

\paragraph{Step 4: retain the classical blocks in the distance.}
The false and true blocks differ by $-1/2$ and $1/2$, giving the distance shown in
Figure~\ref{fig:fair-bit}. Summing over the records first would cancel these differences,
although a continuation can distinguish the programs by reading $r$.

\subsection{Composition of approximation allowances}

The same local certificate can be used in two different program contexts. Sequentially
performing two fair-bit commands and replacing both by assignments gives
\[
 \pwire{f}\!\pwire{f}\quad\approx_{1/2+1/2}\quad\pwire{z}\!\pwire{z}.
\]
Theorem~\ref{thm:approximation} gives allowance $1$ without requiring distinct destinations.
Here the exact allowance is smaller because the second command overwrites the first result.
For any assertion $A$, let $F=\WP(f,A)=\tfrac12(A_0+A_1)$.
Substituting either value for the initial $r$ leaves $F$ unchanged, so \eqref{wp:seq} gives
\[
 \WP(f\seq f,A)(s)
 =\tfrac12\bigl(F(s[r\mapsto0])+F(s[r\mapsto1])\bigr)=F(s).
\]
Thus $f\seq f\equiv f$. Likewise, $\WP(z\seq z,A)=\WP(z,A)$, so $z\seq z\equiv z$.
The two sequences therefore satisfy $f\seq f\approx_{1/2}z\seq z$. The event $[r=1]$ again
witnesses a difference of $1/2$. The additive rule supplies a valid bound, while these
equalities establish the tight one.

\paragraph{Step 1: select the branch before overwriting its guard.}
Consider $c$, which runs $f$ if $r=1$ and $z$ otherwise. Replace only its true arm to
obtain $d$. Theorem~\ref{thm:approximation} gives the maximum allowance, $\max(1/2,0)=1/2$:
\[
 \frac{f\approx_{1/2}z\qquad z\approx_0 z}
 {\underbrace{\pchoice{r}{f}{z}}_{c}
   \quad\approx_{1/2}\quad
  \underbrace{\pchoice{r}{z}{z}}_{d}}.
\]
Both arms overwrite $r$, but the fork reads its initial value only once.
The two runs therefore select corresponding arms.

\paragraph{Step 2: calculate the weakest preconditions.}
Using \eqref{wp:if} and Equation~\eqref{eq:fair-wp}, for any assertion $A$,
\[
 \WP(c,A)(s)=
 \begin{cases}
  \tfrac12(A_0(s)+A_1(s)) & s(r)=1,\\
  A_0(s) & s(r)=0,
 \end{cases}
 \qquad \WP(d,A)(s)=A_0(s).
\]
For an effect, the difference lies between $-\tfrac12 I$ and $\tfrac12 I$ in the first
case and is zero in the second. The maximum covers both initial stores. Taking $A=[r=1]$ and $s(r)=1$
gives probabilities $1/2$ and zero, so no smaller uniform allowance suffices.

\paragraph{Step 3: check why the guard must agree.}
Negating only the guard gives a different program:
\[
 c_{\neg}:=\pchoice{\neg r}{f}{z}
 \qquad
 \begin{gathered}
  s(r)=0:\\
  \Pr_c[r=1]=0,\qquad \Pr_{c_{\neg}}[r=1]=\tfrac12.
 \end{gathered}
\]
The premises $f\approx_0 f$ and $z\approx_0 z$ still hold, but
$c\not\approx_0 c_{\neg}$: the displayed event forces an allowance of at least $1/2$.
Thus branchwise equivalence alone cannot justify changing a guard.

\paragraph{The general composition argument.}
Let $c,d,c',d'$ share input contexts, with $c\approx_\varepsilon c'$ and
$d\approx_\delta d'$, and let $B$ be an effect. The sequential transformer uses
\eqref{wp:seq}. Each vertical link below is a local approximation premise.
Effect preservation makes $\WP(d,B)$ an admissible observation for the first link, which
contributes $\varepsilon I$. Monotonicity of $\WP(c',\cdot)$ applies to the second link.
Linearity and identity preservation pull its $\delta I$ slack through unchanged.
\begin{equation}
\begin{tikzpicture}[baseline=(current bounding box.center),font=\small,>=Latex]
 \foreach \y in {0,-1.3} {\draw (0,\y) -- (5,\y);}
 \foreach \x/\y/\t/\id in {1.2/0/c/c,3.8/0/d/d,1.2/-1.3/c'/cp,3.8/-1.3/d'/dp} {
  \node[draw,fill=white,minimum width=.65cm,minimum height=.45cm] (\id) at (\x,\y) {$\t$};
 }
 \draw[<->] (c.south) -- node[right] {$\varepsilon$} (cp.north);
 \draw[<->] (d.south) -- node[right] {$\delta$} (dp.north);
 \node[anchor=west,align=left] at (5.55,-.65)
  {$c\seq d\approx_{\varepsilon+\delta}c'\seq d'$\\[5pt]
   $\WP(c,\WP(d,B))$\\
   ${}\sqsubseteq\WP(c',\WP(d',B))+(\varepsilon+\delta)I$};
\end{tikzpicture}
\label{eq:approximation-proof}
\end{equation}
Transitivity adds the slacks. Reversing the inequalities gives the second half of
the approximation relation. By \eqref{wp:ancilla}, we lift the effect by $I_2\otimes B$, apply the body certificate, and compress
by the fixed preparation $E_b$. Because $E_b^\dagger I E_b=I$, the slack remains
$\varepsilon I$.

\subsection{Trace-distance characterization}

The distance theorem connects these local reasoning steps to an operational observation
on all final records. For a fixed positive input, form the Hermitian differences
$\Delta^{(t)}=\Inst_c(s,\rho,t)-\Inst_d(s,\rho,t)$. Total trace preservation gives
$\sum_t\tr(\Delta^{(t)})=0$. It does not give $\tr(\Delta^{(t)})=0$ for each $t$.
The fair-bit example is an explicit counterexample to that stronger premise.
Appendix~\ref{app:error-examples} gives its two numerical blocks, their spectral
projectors, and concrete instances of specification transfer and scope congruence.

The spectral proof chooses the assertion that projects each block onto its positive
eigenspaces. With $\Delta^{(t)}=\Delta^{(t)+}-\Delta^{(t)-}$ its positive/negative decomposition,
the observation diagram is
\[
 \Delta^{(t)}\;\xrightarrow{\ \pwire{\text{observe }\Pi_t^+}\ }\;
 \tr(\Pi_t^+\Delta^{(t)})=\tr(\Delta^{(t)+}),\qquad
 \sum_t\tr(\Delta^{(t)+})=\tfrac12\sum_t\|\Delta^{(t)}\|_1.
\]
The last equality uses equality of the total positive and negative masses.
Every other effect lies between zero and identity and cannot yield a larger pairing.
Applying the same argument to $-\Delta^{(t)}$ controls the absolute difference. Thus the
half-sum is characterized by the largest difference over store-indexed effects.
The artifact proves this characterization, then applies the two approximation inequalities
and the trace-pairing order lemma to obtain the normalized output-distance bound.

The local effect allowances therefore determine both the output trace-distance bound and
the weakening of a Hoare precondition.

\section{A local logical-fault contract}
\label{app:noise-example}

Take input registers $\Gamma=(q)$, $\Delta=(r)$, with $r$ reserved for the final observation.
The qubit suffers an $X$ fault with probability $p$, where
$0\leq p\leq1$. Its ideal transformer is the identity. The noisy transformer is
$N_p(B)=(1-p)B+pXBX$. For an arbitrary effect, the two required positive differences are
\[
 \begin{aligned}
 N_p(B)+pI-B&=p\bigl((I-B)+XBX\bigr)\succeq0,\\
 B+pI-N_p(B)&=p\bigl(B+(I-XBX)\bigr)\succeq0.
 \end{aligned}
\]
Here conjugation preserves effects. The same calculation works for the
weakest-precondition transformer of any QPPM command in place of $XBX$.
For $B=|0\rangle\langle0|$, this contract has a directly observable instance:
\[
 \begin{array}{c@{\quad}c@{\quad}l}
 |0\rangle\langle0| & \pwire{N_p} &
 (1-p)|0\rangle\langle0|+p|1\rangle\langle1|\\[5pt]
 &&\Pr[\,\pmeter[Z]{Z}{r=0}\,]=1-p.
 \end{array}
\]
This is an example of a local contract, not an assumption that all logical noise
is a stochastic Pauli fault.

The theorem sums local \emph{effect} bounds. When the same bound is expressed as
trace norm on trace-one states, its allowance is twice as large:
\[
 2\varepsilon+2(p_LN_L+p_TN_T+p_IN_I).
\]
A trace-norm synthesis allowance
$E_{\mathrm{synth}}$ therefore appears as
$E_{\mathrm{synth}}+2(p_LN_L+p_TN_T+p_IN_I)$, while a probability bound uses half
that quantity. Converting units does not supply a missing noise contract.

For surface-code or lifted-product surgery, the implementation must additionally
specify the encoding, extraction circuit, decoder, record interpretation, and schedule.
After encoding inputs and interpreting outputs, each implemented logical operation
must satisfy the stated local contract. Logical equivalence of two QPPM programs
then preserves their synthesis allowance. Their noise allowances must be recalculated
from their respective implementations. In particular, disjoint logical supports
need not yield disjoint routing regions or independent decoder resources.

\section{Additional verified benchmark results}
\label{app:large-coverage}

The large-suite corpus comprises all 56 canonical QASMBench inputs at the pinned
revision, of which 55 parse successfully. The valid inputs span 28--433 data qubits.
This appendix reports completed source-error certificates for 23 of those 55 inputs
and depth certificates for all eleven selected static Clifford inputs. These are
the certified subsets of the fixed corpus. The tables below list only completed results.

% Successful certificates from the large-suite exact-macro report.
Table~\ref{tab:large-coverage} gives fifteen complete exact-source certificates.
\begin{table}[htbp]
\centering\small
\begin{tabular}{lrrrr}
\toprule
Input & Qubits & Instructions & PPM bound & $\varepsilon$\\
\midrule
\texttt{adder\_n28} & 28 & 452 & 2,158 & 0\\
\texttt{adder\_n64} & 64 & 1,052 & 5,034 & 0\\
\texttt{adder\_n118} & 118 & 1,952 & 9,348 & 0\\
\texttt{bv\_n30} & 30 & 107 & 491 & 0\\
\texttt{bv\_n70} & 70 & 245 & 1,119 & 0\\
\texttt{bv\_n140} & 140 & 491 & 2,245 & 0\\
\texttt{cat\_n35} & 35 & 70 & 245 & 0\\
\texttt{cat\_n65} & 65 & 130 & 455 & 0\\
\texttt{cat\_n130} & 130 & 260 & 910 & 0\\
\texttt{cat\_n260} & 260 & 520 & 1,820 & 0\\
\texttt{ghz\_n40} & 40 & 80 & 280 & 0\\
\texttt{ghz\_n78} & 78 & 156 & 546 & 0\\
\texttt{ghz\_n127} & 127 & 254 & 889 & 0\\
\texttt{multiplier\_n45} & 45 & 5,990 & 30,573 & 0\\
\texttt{multiplier\_n75} & 75 & 17,092 & 87,315 & 0\\
\bottomrule
\end{tabular}
\caption{Verified exact-source compilations. Instructions count
the Clifford+$T$ source after macro expansion and barrier erasure. The PPM bound counts
logical measurements, not parallel depth. The complete 433-qubit adder and 280-qubit
Bernstein--Vazirani certificates are described below.}
\label{tab:large-coverage}
\end{table}

The checked adders and multipliers contain $T$ gates. Their expansion equalities are
checked in blocks of 128 source steps and composed over the complete list.
Each certificate proves the typed expansion, PPM lowering, public Hoare
specification, and zero approximation error, assuming ideal magic states for
$T$ instructions. Exported input records match the independently parsed source
expansion. Parsing and macro inlining remain outside the proof. A direct cost calculation
is proved equal to the measurement bound, avoiding construction of every gadget.

\paragraph{Register-controlled sources.}
The four \texttt{cc} inputs compare a public $n$-bit register with zero or
$2^{n-1}$. Both tests constrain every register bit. They expand into nested core
conditionals as in Appendix~\ref{app:guarded-source}, for every initial store.
Shared predicates avoid repeating the tests. Validity and measurement bounds are
checked in blocks of 32 steps.

% Generated from the complete guarded-source supplement.
\begin{table}[htbp]
\centering\small
\begin{tabular}{lrrrrl}
\toprule
Input & Source steps & Guarded gates & PPM bound & $\varepsilon$ & Result\\
\midrule
\texttt{cc\_n32} & 161 & 65 & 788 & 0 & checked\\
\texttt{cc\_n64} & 321 & 129 & 1,588 & 0 & checked\\
\texttt{cc\_n151} & 756 & 303 & 3,763 & 0 & checked\\
\texttt{cc\_n301} & 1,506 & 603 & 7,513 & 0 & checked\\
\bottomrule
\end{tabular}
\caption{Every large-suite input whose only missing frontend feature was classical control.
Source steps include barriers. Each successful check preserves every register condition
and proves a bound on primitive measurements. These are not parallel-depth results.}
\label{tab:large-guarded}
\end{table}

% Generated from validated complete Ising source evidence.
\paragraph{Decimal phases.}
The inputs \texttt{ising\_n34} and \texttt{ising\_n42} contain 67 and 83 distinct angles, including
zero. Each nonzero angle occurs twice. Decimal literals denote exact rational radians.
For example,
$1.6751132=4187783/2500000$, without a factor of $\pi$.
Ross--Selinger synthesis proposes each gate sequence at tolerance $10^{-8}$.
Exact coefficient calculations and Taylor remainders establish its allowance.
For $n\in\{34,42\}$, the complete compilation satisfies
\[
 \pwire{\mathrm{Ising}_{n}}
 \xrightarrow{\ \varepsilon_n=\sum_j\varepsilon_j\ }
 \pwire{\text{Clifford}+T}
 \xrightarrow{\ 0\ }\pwire{\text{QPPM}}.
\]
The sum includes every source rotation occurrence. The table reports source
steps, rotations, emitted Clifford+$T$ instructions, ideal magic inputs, and the
proved effect-error allowance.
\begin{center}
\small
\begin{tabular}{rrrrrl}
\toprule
$n$ & Steps & Rotations & Instructions & Magic inputs & $\varepsilon_n$ \\
\midrule
34 & 403 & 200 & 27,940 & 10,920 & $<1.285\times10^{-5}$ \\
42 & 499 & 248 & 34,880 & 13,584 & $<1.692\times10^{-5}$ \\
\bottomrule
\end{tabular}
\end{center}
Each proof retains the $2n$-bit public interface, including the unused register,
and exports every source step and output instruction for comparison with the parsed circuits.
Each angle certificate has a 120-second checking limit.
Complete composition and export have a separate 600-second limit. Checks use
two workers and a 6,000\,MB allocation limit, with at most two running concurrently.
These are source-error allowances. The parallel-depth experiment remains separate.

The fifteen exact-source rows, four register-controlled sources, two Ising sources,
433-qubit adder, and 280-qubit Bernstein--Vazirani circuit give the 23 distinct
source-error certificates reported above.

\paragraph{Parallel measurement schedules.}
The depth experiment includes all eleven static Clifford inputs selected before
scheduling. Each typed input retains every gate and readout. Parsing and removal
of identity barriers are external to the proof. The support-disjoint baseline and
commutation/fusion optimizer each use four scoped ancillas and eight private bits.
Replay checks every commutation and classical-store guard. Composition proves
resource validity, primitive counts, logical depth, and public weakest-precondition
equality for the complete program.

% Final accepted certificates for every preselected large Clifford depth input.
\begin{table}[htbp]
\centering\small
\begin{tabular}{@{}lrrrl@{}}
\toprule
Input & $D_0$ & $D_1$ & Reduction & Proof composition\\
\midrule
\texttt{bv\_n30} & 127 & 81 & 36.2\% & direct\\
\texttt{bv\_n70} & 267 & 179 & 33.0\% & direct\\
\texttt{bv\_n140} & 529 & 355 & 32.9\% & modular\\
\texttt{bv\_n280} & 1,069 & 715 & 33.1\% & modular\\
\texttt{cat\_n35} & 143 & 139 & 2.8\% & direct\\
\texttt{cat\_n65} & 263 & 259 & 1.5\% & direct\\
\texttt{cat\_n130} & 523 & 519 & 0.8\% & modular\\
\texttt{cat\_n260} & 1,043 & 1,039 & 0.4\% & modular\\
\texttt{ghz\_n40} & 163 & 159 & 2.5\% & direct\\
\texttt{ghz\_n78} & 315 & 311 & 1.3\% & direct\\
\texttt{ghz\_n127} & 511 & 507 & 0.8\% & modular\\
\bottomrule
\end{tabular}
\caption{Complete depth certificates for all eleven selected large Clifford inputs.
$D_0$ schedules disjoint supports. $D_1$ adds commutation and fusion. Both use four
ancillas and eight private bits. Every row certifies the allocated programs, the
displayed depths, preservation of public specifications, and zero source error under
ideal logical operations. Direct and modular checking establish the same contracts.}
\label{tab:large-depth}
\end{table}

% Complete larger source and scheduling certificates.
\paragraph{Composing larger certificates.}
Modular checking splits rewrite replay and readout proofs into separately checked
parts. Their composition establishes the complete source correspondence and public
weakest-precondition equality. The five modular depth results in
Table~\ref{tab:large-depth} use this construction without changing the circuits or
resource budgets. For the 280-qubit Bernstein--Vazirani circuit, fusion and scheduling
reduce logical rounds from 1,069 to 715 and measurements from 4,545 to 3,125.

The same method checks the complete 433-qubit adder source: 7,203 typed steps,
including one identity barrier, emit 7,202 Clifford+$T$ instructions and require
2,688 ideal magic inputs. Its PPM target has a proved bound of 34,513 measurements
and zero approximation error. This is a source-compilation result. It does not
assert a parallel schedule for the adder.

These counts describe ideal logical operations. Physical placement, routing, and
decoding require a separate implementation satisfying Theorem~\ref{thm:logical-noise}.
The artifact records the checking procedures and resource limits.

\end{document}